\documentclass[aps,pra,longbibliography,onecolumn,superscriptaddress,nofootinbib]{revtex4-2}

\makeatletter
\def\l@subsubsection#1#2{}
\makeatother

\usepackage{comment}
\usepackage{amsmath}
\usepackage{amssymb}
\usepackage{amsthm}
\usepackage{cases} 
\usepackage{mathtools} 
\usepackage{physics}
\usepackage{bbold} 
\usepackage{bm} 
\usepackage{enumitem}
\usepackage[hidelinks,breaklinks]{hyperref}
\usepackage{bm}
\usepackage{comment}
\usepackage[normalem]{ulem}
\usepackage{thm-restate}
\usepackage{thmtools,nameref,cleveref}
\usepackage{chngcntr}
\usepackage{apptools}
\usepackage{graphicx}
\usepackage{bbm}
\usepackage{appendix}

\usepackage{tikz}  

\usepackage{footmisc} 

\usepackage{thm-restate} 

\usepackage{mathrsfs} 
\usepackage{euscript} 
\usepackage{calligra}
\usepackage{mathalpha}

\allowdisplaybreaks


\declaretheorem[name=Theorem,
refname={theorem,theorems},
Refname={Theorem,Theorems}]{theorem}
\declaretheorem[name=Proposition,
refname={proposition,propositions},
Refname={Proposition,Propositions}]{prop}
\declaretheorem[name=Lemma,
refname={lemma,lemmas},
Refname={Lemma,Lemmas},
numberlike=prop]{lemma}
\declaretheorem[name=Corollary,
refname={corollary,corollaries},
Refname={Corollary,Corollaries},
numberlike=prop]{corollary}
\declaretheorem[name=Definition,
refname={definition,definitions},
Refname={Definition,Definitions},
numberlike=prop]{definition}
\theoremstyle{remark}
\newtheorem{remark}[prop]{Remark}
\newtheorem*{remark*}{Remark}

\AtAppendix{\numberwithin{prop}{section}}
\AtAppendix{\numberwithin{lemma}{section}}
\AtAppendix{\numberwithin{theorem}{section}}
\AtAppendix{\numberwithin{definition}{section}}
\AtAppendix{\numberwithin{corollary}{section}}
\AtAppendix{\numberwithin{remark}{section}}


\usepackage{color}
\newcommand{\mpw}[1]{\textcolor{blue}{#1}}


\newcommand\mpwS[1]{{\let\helpcmd\sout\parhelp#1\par\relax\relax} }
\long\def\parhelp#1\par#2\relax{%
	\helpcmd{#1}\ifx\relax#2\else\par\parhelp#2\relax\fi%
}







\newcommand{\TrX}[2]{\text{tr}_{#1}\left[ #2 \right]}  

\newcommand{\rr}{{\mathbbm{R}}}

\newcommand{\hh}{{\mathbbm{H}}}
\newcommand{\cc}{{\mathbbm{C}}}
\newcommand{\zz}{{\mathbbm{Z}}}

\newcommand{\kk}{{\mathbbm{K}}}

\newcommand{\mi}{i} 
\newcommand{\idd}[1]{1}%
\newcommand{\id}{{\mathbbm{1}}}
\newcommand{\spec}{\textup{Spec}} 


\newcommand{\Gs}{\Gamma}
\newcommand{\Ms}{M}

\newcommand{\G}[1]{\Gamma\left\{#1\right\}}
\newcommand{\Gi}[2]{\Gamma_{#1}\left\{#2\right\}}

\newcommand{\ReP}[1]{\mathrm{Re}\left( #1 \right)}
\newcommand{\ImP}[1]{\mathrm{Im}\left( #1 \right)}
\newcommand{\I}[1]{\overline{I}^{{\scriptscriptstyle( #1 )}}}
\newcommand{\J}[1]{\overline{J}^{{\scriptscriptstyle( #1 )}}}
\newcommand{\X}[1]{\overline{X}^{{\scriptscriptstyle( #1 )}}}
\newcommand{\Z}[1]{\overline{Z}^{{\scriptscriptstyle( #1 )}}}
\newcommand{\SymPart}[1]{\mathrm{Sym}\left( #1 \right)}
\newcommand{\ASymPart}[1]{\mathrm{ASym}\left( #1 \right)}
\newcommand{\TetrPart}[1]{\mathrm{Tetr}\left( #1 \right)}
\newcommand{\ATetrPart}[1]{\mathrm{ATetr}\left( #1 \right)}

\newcommand{\effects}{\mathtt E}
\newcommand{\effect}{\mathtt e}
\newcommand{\CNQT}{QT}

\newcommand{\zero}{{\bf 0}}

\newcommand{\doc}{\text{manuscript}}  

\newcommand{\Hilb}[1]{\mathcal{H}^{#1}}

\newcommand{\LinOp}[1]{\mathcal{L}\left( #1 \right)} 
 %
 %











\newcommand{\rtensor}{\otimes}
\newcommand{\rktensor}{\otimes_{\textup{K}}}
\newcommand{\rdottensor}{\otimes_{\textup{r}}}	

\newcommand{\rwedgetensor}{\otimes_{\textup{R}}}
\newcommand{\rdottensorp}{{\otimes_{\textup{R}}^\prime}}	

\newcommand{\ReS}{R}
\newcommand{\f}{\tilde\Gamma}

\newcommand{\V}{\mathsf{V}}

\newcommand{\Herm}{\mathsf{Herm}}
\newcommand{\HermC}[1]{\mathsf{Herm}_{#1}\left(\cc\right)}
\newcommand{\HermP}[1]{\mathsf{Herm}^+_{#1}\left(\cc\right)}

\newcommand{\AHermC}[1]{\mathsf{AHerm}_{#1}\left(\cc\right)}

\newcommand{\outputspace}{\mathsf{Output}}
\newcommand{\SymX}[2]{\mathsf{Sym}_{#1}\left(#2\right)}
\newcommand{\SymP}[1]{\mathsf{Sym}^+_{#1}\left(\rr\right)}

\newcommand{\ASymX}[2]{\mathsf{ASym}_{#1}\left(#2\right)}
\newcommand{\TetrX}[2]{\mathsf{Tetr}_{#1}\left(#2\right)}
\newcommand{\ATetrX}[2]{\mathsf{ATetr}_{#1}\left(#2\right)}
\newcommand{\SY}{\mathsf{SY}}
\newcommand{\SYR}[1]{\mathsf{SY}_{#1}(\rr)}
\newcommand{\SYRP}[1]{\mathsf{SY}^+_{#1}(\rr)}

\newcommand{\nSYR}[1]{\overline{\mathsf{SY}}_{#1}(\rr)}

\newcommand{\ASYR}[1]{\mathsf{ASY}_{#1}(\rr)}

\newcommand{\nASYR}[1]{\overline{\mathsf{ASY}}_{#1}(\rr)}
\newcommand{\EffR}{\mathsf{E}_\rr}
\newcommand{\StatR}{\mathsf{S}_\rr}
\newcommand{\Ssep}{\mathsf{S}^\textup{SEP}_\textup{Re}}
\newcommand{\Esep}{\mathsf{E}^\textup{SEP}_\textup{Re}}
\newcommand{\EffCom}{\mathcal{E}(\Herm_{n,d}(\cc))}    
\newcommand{\StatCom}{\mathcal{S}(\SY_{n,d}(\rr))}   

\crefname{appendix}{Supplementary}{Supplementaries}
\Crefname{appendix}{Supplementary}{Supplementaries}
\newcommand{\nocontentsline}[3]{}
\newcommand{\tocless}[2]{\bgroup\let\addcontentsline=\nocontentsline#1{#2}\egroup}

\interdisplaylinepenalty=10000

\begin{document}

\title{Quantum theory does not need complex numbers}

\begin{abstract}
Quantum theory was radically different from the theories of nature which came before it. One key difference was its use of complex numbers. This opened a longstanding debate over whether quantum theory fundamentally requires complex numbers---or if their use is merely a convenient choice. Until recently, this question was considered open. However, in a 2021 Nature article, a decisive argument was presented asserting that quantum theory needs complex numbers since real-number quantum theory is inconsistent with the postulates of quantum theory. In this work, we show that this conclusion was premature, and in actual fact, a real-number quantum theory \emph{is} consistent with the postulates of quantum theory. Our theory retains key features such as representation locality (i.e. local physical operations are represented by local changes to the states). A direct consequence of our results is that quantum theory based on real or complex numbers are experimentally indistinguishable.
\end{abstract}

\author{Timoth{\'e}e Hoffreumon}
 \email{t.hoffreumon@gmail.com}
 \affiliation{%
 Quantum Computation Structures (QuaCS) Inria Team, LMF, ENS Paris-Saclay, Université Paris-Saclay, Gif-sur-Yvette, France.%
}%
 \affiliation{Mathematical Institute, Slovak Academy of Sciences,  Bratislava, Slovakia.}
\author{Mischa P. Woods}
 \email{mischa.woods@gmail.com}
\affiliation{ENS Lyon, Inria, France}
\affiliation{University Grenoble Alpes, Inria, Grenoble, France}
\affiliation{ETH Zurich, Zurich, Switzerland}
\maketitle

\tableofcontents

\section{Introduction}\label{sec:introduction}
\subsection{Early days}\label{sec:early days}
\noindent Quantum theory (QT) required a huge conceptual departure from the classical theories which came before it. Its Hilbert space representation also manifestly uses complex numbers. Many have questioned whether these complex numbers are necessary and intrinsically linked to the weird new concepts---such as superposition or entanglement---which the new theory presented. Indeed, the history of whether QT requires complex numbers, or if these are a mere mathematical convenience, has a long and rich history.

Schr\"odinger's conceptual struggle with the appearance of complex numbers in his fundamental equation is well documented~\cite{Karam2020,przibram1967letters,Schrdinger1926,schrodinger1928collected}. In 1926 he sought to find a physical interpretation of the \emph{real part} of his wavefunction---in analogy to how one can conveniently encode classical wave theory into the real part of complex numbers. While he ultimately succeeded in generating a real representation, he reluctantly ended up abandoning it in favor of the complex formulation we know today~\cite{Karam2020,Schrdinger1927}.

In 1934, the possibility of representing QT using only real numbers was theorized within the algebraic approach of Jordan, Wigner, and von Neumann~\cite{JvNW1933}. This was subsequently studied in the 60s mainly by St\"{u}ckleberg \cite{Stueckelberg1960,Stueckelberg1961II,Stueckelberg1961III} who swapped the usual complex representation in $d$ dimensions for a real representation in $2d$ dimensions. His theory could reproduce the same measurement statistics as QT, but he noticed discrepancies when trying to phrase the real version of the uncertainty principle.

It is worth remarking that this issue---and thus the present article---exclusively concerns the \textit{mathematical representation} of QT. The necessity of an intrinsic `complex structure' in its abstract formulation is a different issue altogether, and has been treated elsewhere in the 60s (see e.g., Ref.~\cite{Threefold} and follow-up works).

\subsection{Insufficiency of the early-day approaches and recent innovations}

The real representation of QT studied in the literature is not `representation-local', meaning that a local change in one of the components of a system at the physical level is not reflected by a local change at the level of its mathematical representation. 
The conventional complex-number representation of QT, in contrast, \emph{is} representation-local; see~\Cref{fig:repLocality}. The absence of this feature in the above-mentioned real representations of QT is the usual objection against them. This argument was first put forth in Refs.~\cite{Caves2001,McKague}, on top of the failure of local tomography argument popularized by Wooters \cite{Wooters1990}.
A different attempt at finding a formulation of quantum theory which only uses real numbers is to use the same construction considered in these works, but now restrict to a subset of states which are representation local and try to embed these in a larger, real space. This is the approach considered in~\cite{Renou2021}. They proved that this approach fails. 

Nevertheless, the above arguments cannot rule out the existence of a representation-local real formulation of QT---they merely highlight its absence from the real formulation put forth thus far.

Here we prove the existence of a real formulation of quantum theory which is consistent with the postulates. The core conceptual step is to recognize that the Kronecker product is only \emph{one} possible matrix representation of the tensor product, but not the only one. The attempts at a real formulation in~\cite{Caves2001,McKague,Wooters1990,Renou2021} use Kronecker-product by definition, but this is not mandated by the postulates. 

\begin{figure}[tbp]
  \centering
  \includegraphics[width=0.3\linewidth]{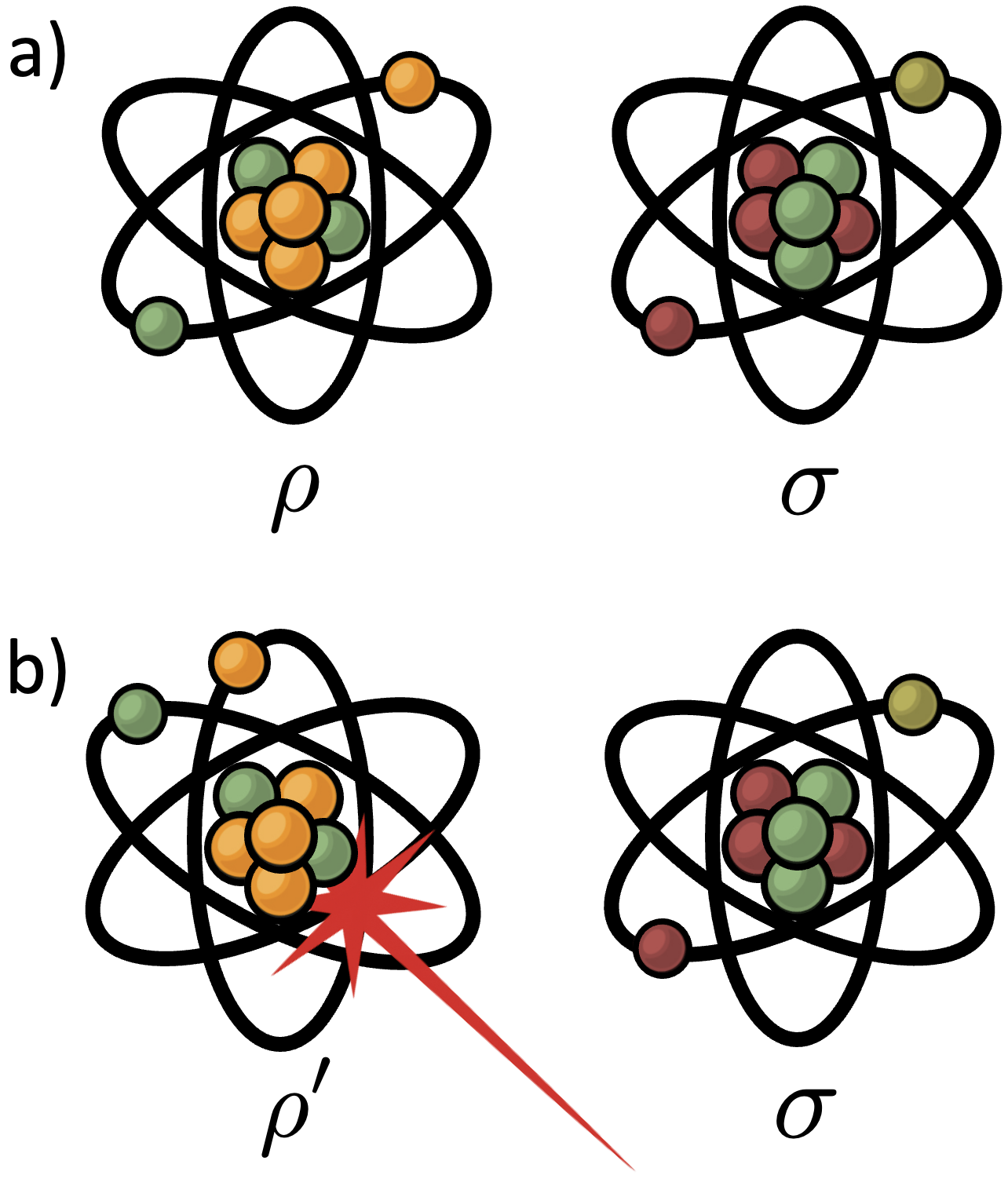}
  \caption{\textbf{Representation locality.} In a characterization of QT which is representation local, the following holds: a) Two non-interacting atoms which are independently prepared, have a mathematical description of the form $\rho \otimes \sigma$. b) Suppose a laser pulse is subsequently applied exclusively to the first atom (prompting it to change state), while the second atom is not affected by the laser pulse (such that it remains in the identical state). The mathematical description is updated to $\rho' \otimes \sigma$, where $\rho'=h(\rho)$ and $h$ is a quantum channel. A representation of QT which is not representation local would require a state update of the form $\rho\otimes\sigma \mapsto h'(\rho \otimes \sigma)$, where the channel $h'$ acts non-trivially on $\sigma$. Representation locality is an important property because it implies that physically local degrees of freedom (i.e. those of the first atom), are locally represented in the mathematical description.}
  \label{fig:repLocality}
\end{figure}

\section{Postulates and structure}\label{sec:Axioms and structure}
\subsection{Postulates}\label{sec:Axioms}
From here on in, we use the terminology {\bf\emph{quantum theory}} (\CNQT) to exclusively refer to the standard theory of complex Hilbert spaces readily found in most introductory textbooks. 
It can be formulated axiomatically. In Ref.~\cite{Renou2021}, they cite 4 such postulates from Ref.~\cite{axioms1,axioms2}. They are formulated at the level of pure quantum states living in a Hilbert space. 
Here we reproduce these 4 postulates but formulated at the level of density matrices. We choose the latter formulation because it is a more general formulation (i.e., able to deal with probabilistic mixtures) and it is the one being used \textit{de facto} in the methods of Ref.~\cite{Renou2021}.

(i) For every physical system $\mathsf{S}$, there corresponds a Hilbert space $\mathcal{H}_\mathsf{S}$ and a set of states $\mathcal{S}(\mathcal{H}_\mathsf{S})$. Each state is represented by an operator $\rho$, acting on this space. The state has positive spectrum ($\rho\geq 0$) and trace one ($\Tr[\rho]=1$).
(ii) A measurement $\effects$ in $\mathsf{S}$ corresponds to a Positive Operator Measure (POVM). Namely, a set $\{\effect_r \geq 0\}_r$ acting on $\mathcal{H}_\mathsf{S}$, and indexed by the measurement result $r$, with $\sum_r \effect_r = \id_\mathsf{S}$. The set of all measurements is denoted $\mathcal{E}(\mathcal{H}_\mathsf{S})$.
\newline (iii) Born rule: if we measure $\effects$ when system $\mathsf{S}$ is in state $\rho$, the probability of obtaining result $r$ is given by $\textup{Pr}(r)= \Tr[\effect_r \rho]$.
(iv) Composition: The Hilbert space $\mathcal{H}_\mathsf{ST}$, corresponding to the composition of two systems $\mathsf{S}$ and $\mathsf{T}$ is $\mathcal{H}_\mathsf{S}\otimes \mathcal{H}_\mathsf{T}$, where $\otimes$ is the tensor product\footnote{We restate the definition of the tensor product in~\Cref{sec:tensor product} for completeness.}. The operators used to describe measurements or transformations in system $\mathsf{S}$ act trivially on $\mathsf{T}$ and vice versa. Similarly, the state representing two
independent preparations of the two systems is the tensor product of the two preparations. (This is the property we termed representation locality in the introduction.)

In these postulates, the Hilbert spaces are by definition taken over the complex numbers. We denote these postulates (i)-(iv)${}_\cc$ to remind us of this fact, and the resulting theory is what we mean by \CNQT. A real representation is one in which we replace the complex Hilbert space with a real Hilbert space. We denote these postulates by (i)-(iv)${}_\rr$.

Assuming systems with a finite number of degrees of freedom $d$, the states $\rho$ and POVM elements $\effect_r$ are then represented via positive semi-definite $d \times d$ matrices with entries in either $\cc$ or $\rr$. In the former case, they span the vector space of Hermitian matrices $(M^\dag = M)$, denoted $\HermC{d}$, whereas for the latter,  it is an appropriate subspace of the vector space of symmetric matrices $(M^T = M)$ instead, denoted $\SymX{d}{\rr}$.

In the context of postulate (iv), the Born rule [postulate (iii)] on local states and measurements takes on the form 
\begin{align}
	\textup{Pr}(r_\mathsf{S}, r_\mathsf{T})= \Tr[(\effect_{\mathsf{S},r}\otimes \effect_{\mathsf{T},r} ) ( \rho_\mathsf{S} \otimes \rho_\mathsf{T})]= \Tr[\effect_{\mathsf{S},r} \rho_\mathsf{S}] \Tr[ \effect_{\mathsf{T},r} \rho_\mathsf{T}]. 
\end{align}
In \CNQT{} it follows that one can use the Kronecker-product matrix representation of the tensor-product without loss of generality. 
 However, crucially, the Kronecker-product representation cannot be assumed without loss of generality for postulates (i)-(iv)${}_\rr$. We will show this by explicit construction in this \doc.

\subsection{Towards a postulate-consistent embedding of \CNQT{} into a real-number theory}\label{sec:Structure}
To set the stage, we consider the scenario in which the total system is composite with $n$ subsystems on a Hilbert space. Since the aim is to form a real representation of \CNQT{} which is consistent with the postulates and reproduces the same statistics on a subset of states and measurements, we consider the case in which there is a one-to-one correspondence between every state in \CNQT{} and every state in our real representation, and likewise for the measurements (POVMs). Moreover, we consider the case in which there is a linear map between these objects. Since in \CNQT{} states and POVM elements live in a vector space, namely the tensor-product space of complex Hermitian matrices, $\Herm_{n,d}(\cc):=\Herm_{d_1}(\cc)\otimes \Herm_{d_2}(\cc)\otimes \ldots \otimes \Herm_{d_n}(\cc)$, we define the map by how it acts on tensor products in $\Herm_{n,d}(\cc)$; its action on any object in $\Herm_{n,d}(\cc)$ then follows by linearity. 
Explicitly, the maps we use for states and POVM elements, denoted $M(\cdot)$ and $E(\cdot)$ respectively, are of the form
\begin{align}
	M(\cdot)&:= \frac12 \Gamma_{d_1}(\cdot) \rtensor  \Gamma_{d_2}(\cdot) \rtensor \ldots \rtensor \Gamma_{d_n}(\cdot),\label{eq:map M}\\
	E(\cdot)&:=2 M(\cdot). \label{eq:map E}
\end{align}
The local structure of the maps guarantees that postulate (iv)${}_\rr$ is satisfied on the image of $M(\cdot)$.

The factor of 2 is for normalization purposes, as will become clear later. Here $\otimes$ denotes the tensor product appearing in postulate (iv), whose representation is yet to be specified. In the complex case, one commonly uses the Kronecker product, denoted $\rktensor$ (see~\Cref{Matrix representations of the tensor-product} for a reminder of its definition).
However, its use is not implied by the postulates: contrary to Ref.~\cite{Renou2021}, we do not assume the real representation of the tensor product to be the Kronecker product; rather, we will derive a suitable representation in~\Cref{A new combination rule}. The map $\Gamma_{d_l}(\cdot)$ acts on the local Hermitian matrices in $\Herm_{d_l}(\cc)$ with output in a to-be-specified real Hilbert space $\SY_{d_l}(\rr) \subset \SymX{d_l}{\rr}$.  We call it {\bf\emph{special symmetric}} since it is a subspace of the space of symmetric matrices.

A Hermitian matrix $H\in \Herm_{d_l}(\cc)$ can be written uniquely in terms of a real symmetric matrix $H^\textup{Sy}$ and a real antisymmetric matrix $H^\textup{An}$ as follows: $H=H^\textup{Sy}+\mi \, H^\textup{An}$.  With this notation, we can specify how our choice of the $l$th map, $\Gs_{d_l}(\cdot)$, acts:
\begin{equation}\label{eq:local mapping}
	\begin{aligned}
		\Gs_{d_l} : \Herm_{d_l}(\cc) &\rightarrow \SY_{d_l}(\rr) ,\\
		H^\textup{Sy}+\mi \, H^\textup{An} = H &\mapsto \Gamma_{d_l}(H) = 
		\left({\begin{array}{cc} H^\textup{Sy} & -H^\textup{An} \\ H^\textup{An} & H^\textup{Sy} \end{array}}\right)= I \rktensor H^\textup{Sy} + J\rktensor H^\textup{An},
	\end{aligned}
\end{equation}
where $\rktensor$ is the Kronecker product, while $I$ and $J$ are real matrix representations of $1$ and $i$ respectively; namely
\begin{align}
	I:=\begin{pmatrix} 1 & 0 \\ 0 & 1 \end{pmatrix},\quad J:= \begin{pmatrix} 0 & -1 \\ 1 & 0 \end{pmatrix}.
\end{align}
The special symmetric vector space $\SY_{d_l}(\rr)$ is the space of all such real matrices.

The map $\Gs_{d_l}$ is a real-linear bijection which preserves the inner product and maps positive semi-definite matrices in $\HermC{d_l}$ to positive semi-definite matrices in $\SYR{d_l}$ (for the proof, see~\Cref{sec:4bis_single_system}). As such, it injectively maps every state and POVM element of a \CNQT{} in $d_l$ dimensions into a real matrix counterpart in $2d_l$ dimensions.

With the local space defined, we now shift to discussing the global properties of the mapping. The codomain of $M(\cdot)$ is denoted $\SY_{n,d}(\rr):=\SY_{d_1}(\rr)\rtensor\SY_{d_2}(\rr)\rtensor\ldots\rtensor\SY_{d_n}(\rr)$. It follows that the maps~\cref{eq:map E} are linear and invertible. We denote their inverses by $M^{-1}(\cdot)$ and $E^{-1}(\cdot)$ respectively.

This representation leads to a natural definition of a set of {\bf\emph{special states}} (denoted $\StatR$) and {\bf\emph{special measurements}} (denoted $\EffR$) respectively. These are merely the sets of special symmetric matrices which are defined by mapping over the sets of states and measurements from \CNQT. This is to say:

For special states, we have: $\rho\in\SY_{n,d}(\rr)$ is in $\StatR$ iff $M^{-1}(\rho)\in\mathcal{S}({\Herm}_{n,d}(\cc))$, where $\mathcal{S}({\Herm}_{n,d}(\cc))$ is the set of positive semi-definite trace-one matrices acting on the complex Hilbert space. Likewise, for special measurements: the set $\EffR$ is the set of ensembles $\{\effect_r\}_r $ such that $\effect_r\in\SY_{n,d}(\rr)$, $\{ M^{-1}(\effect_r)\}_r \in\mathcal{E}({\Herm}_{n,d}(\cc))$, where $\mathcal{E}({\Herm}_{n,d}(\cc))$ is the set of measurements acting on the complex Hilbert space.



These sets of special states and special measurements should be contrasted with the sets of actual states and measurements in accordance with postulates (i)${}_\rr$ and (ii)${}_\rr$, and denoted $\mathcal{S}(\SY_{n,d}(\rr))$ and $\mathcal{E}(\SY_{n,d}(\rr))$ respectively. I.e., $\mathcal{S}(\SY_{n,d}(\rr))$ is the set of positive semi-definite matrices acting on the real Hilbert space, while $\mathcal{E}(\SY_{n,d}(\rr))$ is the set of measurements acting on the real Hilbert space. 

If postulates (i)${}_\rr$ and (ii)${}_\rr$ are to be satisfied for this family of real representations, a necessary condition is that the set of special states and special measurements coincide with the sets of states and measurements, namely $\StatR= \mathcal{S}(\SY_{n,d}(\rr))$ and $\EffR= \mathcal{E}(\SY_{n,d}(\rr))$. We denote this condition C1${}_\rr$. Whether or not C1${}_\rr$ is satisfied depends on the choice of representation of the tensor product---as will become clear in~\Cref{The importance of tensor-product representation}. 

In order to be consistent with postulate (iii), we define the probability associated with measurement result $r$, to be given by the Born rule analogously to \CNQT: $\textup{Pr}(r)= \Tr[\effect_r \rho]$ for $\{\effect_r\}_r\in\mathcal{E}(\SY_{n,d}(\rr))$ and $\rho\in\mathcal{S}(\SY_{n,d}(\rr))$.

We refer to any representation constructed in the above manner as a {\bf\emph{real theory}}. 


\subsection{Consistency with the statistics of 
quantum theory (\CNQT)}\label{Real quantum theory}
It is not sufficient to have a real theory which is merely consistent with the postulates (i)-(iv)${}_\rr$. We also want the theory to predict the same measurement statistics as \CNQT: 
\begin{definition}\label{def:RQT}
	We say a particular real theory gives rise to a {\bf{real-number quantum theory}} (RNQT) if 1) it satisfies the real Hilbert space postulates (i)-(iv)${}_\rr$, and
    2) it reproduces the same measurement statistics as \CNQT, namely if for all POVM $\{\effect_r\}_r\in\EffCom$ and all states $\rho \in\StatCom$ of a \CNQT, there exists a corresponding POVM $\{\widetilde{\effect}_r\}_r\in\mathcal{E}(\SY_{n,d}(\rr))$ and density matrix $\widetilde{\rho}\in\mathcal{S}(\SY_{n,d}(\rr))$ in the RNQT verifying
	\begin{align}
		\Tr[\effect_r  \: \rho]=\Tr[\widetilde{\effect}_r \:\widetilde{\rho}]\label{eq:inner product}
	\end{align}
    such that $\widetilde{\rho} = M(\rho) \in \StatR$ and $\{\widetilde{\effect}_r\}_r =\{ \,E(\effect_r)\, \}_r \in \EffR$.
\end{definition}

\subsection{The importance of tensor-product representation}\label{The importance of tensor-product representation}

In the special case of only one subsystem, namely $n=1$ and $M(\cdot)= \Gamma_{d_1}(\cdot)/2$, the postulate (iv) is redundant. We show in~\Cref{sec:4bis_single_system} that the real theory is a RNQT in that case. This highlights how the difficulty with devising a multipartite RNQT within our framework is about defining an appropriate representation of the tensor product.

A common matrix representation of the tensor product is the Kronecker product. However---unlike the tensor-product itself---it is not fundamental nor appears in the postulates (i)-(iv). Indeed, while the inner-product~\cref{eq:inner product} is invariant under a change of tensor-product representation for the special states and special measurements, other qualities---such as the non-negativity of the states in postulate (i)---are not and condition C1${}_\rr$ is violated. We illustrate this explicitly in~\cref{app:Kronecker-product case} for the Kronecker product.

\section{Real-number quantum theory (RNQT) exists}\label{A new combination rule}

Focusing only on the subset of states and effects of the RNQT containing the real representation of \CNQT, we now introduce the following alternative to the Kronecker product as the representation of the tensor product. 
\begin{definition}\label{def:rdot_SYR}
	Let $S$ and $T$ be special symmetric matrices in $\SYR{d_1}$ and $\SYR{d_2}$ such that $S=\begin{pmatrix} S^\textup{Sy} & -S^\textup{An} \\ S^\textup{An} & S^\textup{Sy} \end{pmatrix}$ and $T=\begin{pmatrix} T^\textup{Sy} & -T^\textup{An} \\ T^\textup{An} & T^\textup{Sy} \end{pmatrix}$ like above. Then, we define their combination rule $\rdottensor : \SYR{d_1} \times \SYR{d_2} \rightarrow \SYR{d_1d_2}$ as
	\begin{equation}
		S \rdottensor T := I \rktensor (S^\textup{Sy} \rktensor T^\textup{Sy} - S^\textup{An} \rktensor T^\textup{An}) + J \rktensor (S^\textup{Sy} \rktensor T^\textup{An} + S^\textup{An} \rktensor T^\textup{Sy}) \:.
	\end{equation}
\end{definition}

\noindent From the form this composition has, it is clear that it mimics the behavior of a complex combination of Hermitian matrices under the substitution $\{1,i\}$ $\rightarrow$ $\{I, J\}$. This specific definition for the combination rule has been chosen because it commutes with applying the $\Gs$ map,
\begin{equation}\label{eq:factors}
	\Gamma_{d_1 d_2}(H \rktensor L) = \Gamma_{d_1}(H) \rdottensor \Gamma_{d_2}(L) \:,
\end{equation}
for $H$ and $L$ in $\HermC{d_1}$ and $\HermC{d_2}$, respectively. 
(See~\Cref{sec:4bis_multipartite} for more representation-theoretic explanations of this choice.) For this reason, it is the right notion of combination rule for special symmetric matrices in the following (categorical) sense:
\begin{restatable}{lemma}{lemmatensor}\label{lem:properties of rdottensor}
    The map $\rdottensor$ is a matrix representation of the tensor product between the inner product spaces of special symmetric matrices with respect to the inner product $\langle \cdot, \cdot \rangle = \frac{1}{2}\TrX{}{\cdot^T \: \cdot}$. It furthermore satisfies $\SYR{d_1} \rdottensor \SYR{d_2} \cong \SYR{d_1d_2}$ so that $\dim{\SYR{d_1} \rdottensor \SYR{d_2}} = \dim{\SYR{d_1}} \dim{\SYR{d_2}}$.
\end{restatable}
The proof is provided in \Cref{sec:Proof_rdot=tensor_SYR} and the definition of inner product in~\cref{eq:def:inner product tensor space}. \CNQT{} satisfies the identical properties when replacing $\SY_{d_j}(\rr)$ with $\Herm_{d_j}(\cc)$ and $\rdottensor$ with $\rktensor$. In contrast, the usual Kronecker product $\rktensor$ cannot satisfy this notion of composition, as it does not even map to the proper size of matrices: $\SYR{d_1d_2} \subset \rr^{2d_1 d_2\times 2 d_1 d_2}$ whereas the tensor product maps to $\rr^{4 d_1 d_2 \times 4 d_1 d_2}$. It also follows that the space of special symmetric matrices is a real Hilbert space under the inner product of~\Cref{lem:properties of rdottensor}, in accordance with the postulates (i)-(iv)${}_\rr$.

An extension of this representation of the tensor product from the space of special symmetric matrices to all real matrices can be found in~\Cref{def:rdottensor}, \Cref{sec:extension}. In particular, this extension provides a tensor-product representation between positive semi-definite real matrices.

With this last piece of the puzzle, we can show that $\rdottensor$ is the right choice of composition to have a correspondence between \CNQT{} and RNQT through the map $\Gs_{d_l}(\cdot)$ since it follows that $M(\cdot)$ in~\cref{eq:map M} takes on the form $M(\cdot)= \Gamma_{d_1 d_2 \ldots d_n}(\cdot)/2$. With this, the previously-discussed single-system properties of the map $\Gamma_{d_l}(\cdot)$ can be exported to the entire multi-partite system: since the local mappings $\Gs_{d_i}(\cdot)$ are positivity-preserving injections of \CNQT{} in RNQT, the following theorem follows.
\begin{restatable}{theorem}{theoremRQT}\label{thm:new combination rule}
	The choice of matrix representation of the tensor product $\rdottensor$ gives rise to a RNQT.
\end{restatable}

The proof has been deferred to~\Cref{sec:proof_theoremRQT}. This means that condition C1${}_\rr$ is satisfied. It also follows that one can readily define a real, representation-local pure-state theory where pure states live in a real Hilbert spaces. To achieve this, one simply uses the projectors of the theory, analogously to what is done in \CNQT.

\subsection{Additional properties and the real-number quantum theory (RNQT)}\label{sec:Additional properties}
The state space in \CNQT{} can be partitioned into two subsets---separable states (statistical mixtures over product states) and entangled states. Due to the linearity of the local map $M$, our results preserve this bi-partition. Pure states---those which satisfy $\Tr[\rho^2]=1$---are also preserved. Finally, an intriguing and useful property of \CNQT{} is tomographic locality \cite{Wooters1990}. This is the property that, by performing local measurements on many copies of an unknown quantum state,  the state can be fully determined. The surprising part of this result is that one does not need non-local measurements to determine the state. A direct consequence of our results is that our RNQT is also tomographically local for the states reproducing \CNQT: the set $\StatR$. This follows from the fact that our state-space and local measurements are in one-to-one correspondence with those of \CNQT.

We have focused the discussion on the subspace of special symmetric matrices since it is the one whose states---the set of special states $\StatR$---give the same statistics as \CNQT. However, states which live outside of this subspace, namely live in the larger symmetric space $\mathcal{S}(\mathcal{H}_{n,d})_\rr\backslash \StatR$, are consistent with quantum physics: On the one hand, states with support on this extended space do not `spill' over into the special symmetric subspace under the combination of distinct systems via $\rdottensor$. This is important, since had this not been the case, states from the extended space could have changed the statistics on the special symmetric subspace, which in turn would have led to measurement statistics for RNQT that are incompatible with \CNQT. On the other hand, another important property is the conservation of non-negativity under system combination via $\rdottensor$. This implies that if two local-system states are combined to form a state on the combined system, said combined state is still a valid state. These properties are proven in~\Cref{sec:extension}. Of course, these `additional states' in the representation can simply not be used. 
In which case the set of states and measurements reduces to that of the special states and special measurements. 

\subsection{Dynamics: an additional postulate}\label{sec:additional ostulate}
The postulates presented (i)-(iv) do not concern dynamics. Dynamics requires a fifth postulate, which for \CNQT{} is often formulated as follows.
(v)${}_\cc$ The evolution of an initial state $\rho$ in a closed quantum system $\mathsf{S}$ is described by the von Neumann equation, (also known as the density-matrix form of the Schr\"odinger equation or Liouville–von Neumann equation), namely
\begin{align}
	\hbar \frac{d \rho}{dt}= -i [H,\rho].
\end{align}
Here $H$ is a Hermitian operator acting on the Hilbert space, $\mathcal{H}_\mathsf{S}$, called the Hamiltonian, and $\hbar \in\rr$ is Planck’s constant.
As is well known, this gives rise to unitary dynamics, i.e., $\rho(t)=U(t) \rho U^\dag(t)$.  We discuss in \Cref{prop:image_unitary} how unitary transformations behave in RNQT. In short, one has $\Gamma_d(\rho(t))= \Gamma_d(U(t)) \Gamma_d(\rho) \Gamma_d(U(t))^T$, where $\Gamma_d(U(t))$ is orthogonal, i.e., $\Gamma_d(U(t))\Gamma_d(U(t))^T= \Gamma_d(U(t))^T \Gamma_d(U(t))= I$ but also symplectic. 
Observe that in finite $d$-dimensions $i[H, \rho]=[i H,\rho]$, where $i H$ is an anti-Hermitian matrix. With this observation, the RNQT counterpart of (v)${}_\cc$ can be formulated analogously to how the real versions of postulates (i)-(iv) were formulated. That is, by replacing the complex operators $\rho$ and $i H$ with their real counterparts. In this case, Hermitian matrices are replaced with symmetric matrices while anti-Hermitian matrices are replaced with anti-symmetric matrices ($M^T=-M$). We denote this postulate (v)${}_\rr$.  By directly applying our transformation $\Gamma_d$ to both sides of the von Neumann equation, we find a real matrix version which reproduces the dynamics and is consistent with (v)${}_\rr$. Namely
\begin{align}
	\hbar \frac{d \Gamma_d(\rho)}{dt} =- (J  \rktensor \id_\mathsf{S})\big[  \Gamma_d (H), \Gamma_d(\rho)\big].
\end{align}
One can readily verify that $(J  \rktensor \id_\mathsf{S})\big[  \Gamma_d (H), \Gamma_d(\rho)\big]= \big[ (J  \rktensor \id_\mathsf{S}) \Gamma_d (H), \Gamma_d(\rho)\big]$ where  $(J  \rktensor \id_\mathsf{S}) \Gamma_d (H)$ is anti-symmetric as required.

\section{Discussion and outlook}\label{sec:Conclusions and outlook}

Our RNQT addresses an elementary question which puzzled Schr{\"o}dinger \cite{Karam2020,przibram1967letters,Schrdinger1926,schrodinger1928collected,Schrdinger1927} and our contemporaries alike~\cite{Caves2001,McKague,Renou2021}: \textit{can we formulate QT without complex numbers in a satisfactory manner}? Our real-number formulation is consistent with the postulates of QT and has the same statistics. This guarantees representation locality is maintained, among other things. Core to this discovery is that the Kronecker-product representation of the tensor product is only implied by the postulates when they are stated over a complex Hilbert space, but this is not so over the the reals.

On a more philosophical note, some may argue that our real representation of QT merely simulates standard QT, since one can identify exactly where the complex numbers are represented within our real-number quantum theory (RNQT). There are two important counterpoints to this view:
\begin{description}
\item  [1. Analogy with classical wave theory]
While this observation is true, an analogous situation arises in classical wave theory. The original formulation employed cosine and sine functions, using real trigonometric identities to describe wave propagation. It was later discovered that complex numbers could also be used to represent classical waves, simplifying the mathematics by replacing trigonometric identities with Euler’s relations. One could equally well claim that this “complex wave theory” simulates “real wave theory”, yet such a viewpoint is rarely, if ever, adopted.
	 \item  [2.	The notion of simulation and reality]
The idea that one system simulates another implicitly suggests that the latter is somehow more real. For example, we might say that a computer algorithm simulates the turbulence experienced by an airplane wing: the turbulence is a physical phenomenon, whereas the algorithm provides only an abstract representation of it. In contrast, both the conventional complex formulation of quantum theory and the real formulation introduced here are abstract mathematical depictions of physical reality. In this sense, neither can be said to simulate the other, nor is one more real than the other---they are isomorphic to one another. 
\end{description}

We provide a way to do so in which the dimension of the real matrix representations of states and POVM elements in the subspace that embeds \CNQT{} is double that of their \CNQT{} counterparts. This is what one would naively expect, since a complex number requires 2 real degrees of freedom. However, beyond this, using our RNQT presents only minor inconveniences compared to the standard complex theory. Our results demonstrate that the utility of complex numbers in the representation of \CNQT{} is on par with their role in many classical theories such as wave mechanics: a mere mathematical convenience.



Our results allow for an important conceptual unification: classical theories, such as classical and statistical mechanics, and General Relativity, only require real numbers and are representation-local. It is an ongoing challenge to fully appreciate the non-classical aspects of quantum mechanics (e.g.~\cite{Bell1964,Kochen1967,Pusey2012}), and to unify it with gravity (e.g.~\cite{PhysRevX.13.041040,PRXQuantum.5.020331,PhysRevLett.59.521,PhysRevLett.75.1260}). The up-until-now apparent need for complex numbers for the representation of QT might have been an obscurifying factor in this endeavor.

\bibliography{myrefs}

\bibliographystyle{naturemag}


\section*{Acknowledgments}
We thank useful conversations with Ognyan Oreshkov, Atul Arora, Lionel Dmello, Marco Erba, Thomas Galley, Rob Spekkens and Christopher Chubb. We would also like to thank Ravi Kunjwal for putting the authors in contact with one another while at a conference at Perimeter Institute, and the organisers of the 2021 Kefalonia-foundations conference where useful debates were conducted.   While this work was in preparation, the authors became aware of related results~\cite{2503.17307}.

T.H. received support from the ID \#62312 grant from the John Templeton Foundation, as part of ‘The Quantum Information Structure of Spacetime’ Project (QISS). The opinions expressed in this publication are those of the author(s) and do not necessarily reflect the views of the John Templeton Foundation. M.W. acknowledges support from the Swiss National Science Foundation (SNSF) via an AMBIZIONE Fellowship (PZ00P2\_179914).

\appendix
\renewcommand\appendixname{Supplementary}
\crefname{section}{Supplementary}{Supplementaries}
\Crefname{section}{Supplementary}{Supplementaries}
\crefname{appendix}{Supplementary}{Supplementaries}
\Crefname{appendix}{Supplementary}{Supplementaries}
\crefname{equation}{Eq.}{Eqs.}
\Crefname{equation}{Eq.}{Eqs.}
\renewcommand{\theequation}{S.\Alph{section}.\arabic{equation}}

\section{The tensor and Kronecker products}

In this section, we state the usual definitions of both the tensor product and the Kronecker product. We do so for completeness and clarity since these are central to this work.

\subsection{Definition of the tensor product}\label{sec:tensor product} 

A generic $d$-dimensional vector space over the field $\kk$ is denoted $\V_d(\kk)$, and the fields used in this \doc{} are either $\kk=\cc$ or $\kk=\rr$. We will specify the vector space later.  Given two vector spaces, $\V_m(\kk)$ and $\V_n(\kk)$, the tensor product is a method for constructing a new vector space denoted $\V_m(\kk)\otimes \V_n(\kk)$. The elements of $\V_m(\kk)\otimes \V_n(\kk)$ are linear combinations of `tensor products' $v\otimes w$ of elements $v\in \V_m(\kk)$, and $w\in \V_n(\kk)$.  In particular, if $\{v_i\}_i$, and $\{w_j\}_j$ are orthonormal bases for spaces $\V_m(\kk)$ and $\V_m(\kk)$, then $\{ v_i \otimes w_j\}_{i,j}$ is a basis for $\V_m(\kk)\otimes \V_n(\kk)$.

With this in mind, the tensor product of vector spaces is defined through the following properties:
\begin{definition}[Tensor product of vector spaces]\label{def:tensor product}
We say that a map $\otimes: \V_m(\kk) \times \V_n(\kk) \rightarrow \V_{mn}(\kk)$ is a balanced product if it satisfies the following three properties:
\begin{itemize}
    \begin{subequations}
	\item [(1)] For an arbitrary scalar $z\in\kk$ and elements $v$ of $\V_m(\kk)$ and $w$ of $\V_n(\kk)$,
	\begin{equation}
	    z( v \otimes w)=(z v) \otimes w= v \otimes(zw) \:.
	\end{equation}
\item[(2)] For arbitrary $ v_1$ and $ v_2$ in $\V_m(\kk)$ and $ w $ in $\V_n(\kk)$,
	\begin{equation}
	    \left( v_1+ v_2\right) \otimes w = v_1 \otimes w + v_2 \otimes w \:.
	\end{equation}
\item[(3)] For arbitrary $ v$ in $\V_m(\kk)$ and $ w_1 $ and $ w_2 $ in $\V_n(\kk)$,
	\begin{equation}
	    v \otimes\left( w_1 + w_2 \right)= v \otimes w_1 + v \otimes w_1 \:.
	\end{equation}
    \end{subequations}
\end{itemize}
The \textbf{\emph{tensor product}} of $\V_m(\kk)$ with $ \V_n(\kk)$, denoted $\V_m(\kk) \otimes \V_n(\kk)$, is the set spanned by the balanced products between every elements of $\V_m(\kk)$ and $ \V_n(\kk)$ and such that the balanced product can be uniquely identified with a map $\V_m(\kk) \otimes \V_n(\kk) \rightarrow \V_{mn}(\kk)$.
\end{definition}
This is the standard text-book definition, here adapted from~\cite{Jacobson} p.126 for example. (Note that some authors, like e.g., \cite{Nielsen2012}, demand an additional property, namely that $\V_m(\kk)\otimes \V_n(\kk)$ is an $mn$ dimensional vector space, making it isomorphic to $V_{mn}(\kk)$. Our matrix representation of the tensor product, $\rdottensor$, also satisfies this property when the vector spaces are properly identified---see~\Cref{lem:properties of rdottensor}.)

The inner products on the spaces $\V_m(\kk)$ and $\V_n(\kk)$ can be used to define a natural inner product on $\V_m(\kk) \otimes \V_n(\kk)$, turning it into a tensor product of inner product spaces. This is the notion of `tensor product' which is usually implied when dealing with Hilbert space (brushing aside all notions of completeness and topology due to working in finite dimensions). For any collection of elements $\{v_i\}, \{v_j'\} \subset \V_m(\kk)$, $\{w_i\}, \{w_j'\} \subset \V_n(\kk)$ and $\{a_i\},\{b_j\}\subset \kk$, define
\begin{equation}\label{eq:def:inner product tensor space}
    \left\langle
\sum_i a_i v_i \otimes w_i , \,\sum_j b_j  v_j^{\prime} \otimes w_j^{\prime}\right\rangle 
 := \sum_{i j} \overline{a}_i b_j\left\langle v_i \,,  v_j^{\prime}\right\rangle\left\langle w_i \, , w_j^{\prime}\right\rangle\:,
\end{equation}
where $\langle \cdot\, , \cdot \rangle$ denotes the inner product. It can be shown that the function so defined is a well-defined inner product. From this inner product, the inner-product space $\V_m(\kk) \otimes \V_n(\kk)$ inherits the other structure we are familiar with, such as the notions of an adjoint, unitarity, normality, and Hermiticity.

What sorts of linear operators act on the space $\V_m(\kk) \otimes \V_n(\kk)$? Suppose $ v$ and $ w $ are vectors in $\V_m(\kk)$ and $\V_n(\kk)$, and $A$ and $B$ are linear operators on $\V_m(\kk)$ and $\V_n(\kk)$, respectively. Then we can define a linear operator $A \otimes B$ on $\V_m(\kk) \otimes \V_n(\kk)$ by the equation
\begin{equation}
    (A \otimes B)( v \otimes w ) := A v \otimes B w \:.
\end{equation}

The definition of $A \otimes B$ is then extended to all elements of $\V_m(\kk) \otimes \V_n(\kk)$ in the natural way to ensure linearity of $A \otimes B$, that is,
\begin{equation}
    (A \otimes B)\left(\sum_i a_i v_i \otimes w_i\right) := \sum_i a_i \: [(A v_i)\otimes (B w_i)] \:.
\end{equation}

It can be shown that $A \otimes B$ defined in this way is a well-defined linear operator on $\V_m(\kk) \otimes \V_n(\kk)$, and thus can be extended uniquely to a linear operator on $\V_{mn}(\kk)$. This notion of the tensor product of two operators extends in the obvious way to the case where $A$ and $B$ map between different vector spaces e.g., $A: \V_m(\kk) \rightarrow \V_k(\kk)$ and $B: \V_n(\kk) \rightarrow \V_l(\kk)$. Indeed, an arbitrary linear map $C$ mapping $\V_m(\kk) \otimes \V_n(\kk)$ to $\V_k(\kk) \otimes \V_l(\kk)$ can be decomposed as a linear combination of tensor products of mappings $\V_m(\kk)$ to $\V_k(\kk)$ and $\V_n(\kk)$ to $\V_l(\kk)$,
\begin{equation}
    C=\sum_i c_i A_i \otimes B_i 
\end{equation}
where by definition
\begin{equation}
    \left(\sum_i c_i A_i \otimes B_i\right) v \otimes w  := \sum_i c_i \:[(A_i v) \otimes (B_i w)] \:.
\end{equation}

\subsection{Matrix representations of the tensor product}\label{Matrix representations of the tensor-product}

A matrix representation of the tensor product is an explicit matrix combination rule that satisfies the above-stated properties. Crucially, it is not unique. A common choice in \CNQT, is the so-called Kronecker product which we denote $\rktensor$. 
\begin{definition}[Kronecker product]
    Let $A$ be an $m$-by-$n$ matrix and $B$ be a $p$-by-$q$ matrix, both with entries in the field $\kk$. Their Kronecker product is the $mp$-by-$nq$ matrix $A \rktensor B$ defined by
\begin{equation}
    A \rktensor B := \overbrace{\left.\left[\begin{array}{cccc}
		A_{11} B & A_{12} B & \ldots & A_{1 n} B \\
		A_{21} B & A_{22} B & \ldots & A_{2 n} B \\
		\vdots & \vdots & \vdots & \vdots \\
		A_{m 1} B & A_{m 2} B & \ldots & A_{m n} B
	\end{array}\right] \right\} \!\!\!\!\!\!}^{\displaystyle n q} \,\, \,\, \,m p \:.
\end{equation}
Where terms like $A_{11} B$ in the above denote $p$ by $q$ submatrices whose entries are proportional to $B$, with overall proportionality constant $A_{11}$.
\end{definition}

The Kronecker product $\rktensor$ can be proven to provide a matrix representation of the tensor product of vector spaces for either $\kk=\cc$ or $\kk=\rr$.  
There are, nonetheless, several ways to prove that this matrix representation of the tensor product is not unique, as one can directly notice from its basis dependence. But the most direct approach is to find another matrix representation of the tensor product that is distinct from the Kronecker product. As shown in the next appendix, for instance, our matrix representation $\rdottensor$ also satisfies the definition in~\Cref{sec:tensor product} once the vector spaces $\V_m(\kk),\V_n(\kk)$, and $\V_{mn}(\kk)$ have been properly identified.

Throughout this paper, we will refer to maps $\V_n(\kk) \times \V_m(\kk) \rightarrow \V_{nm}(\kk)$ such as, for example, $\rktensor,\rdottensor,\rwedgetensor$ as `combination rules' until proven to be actual representations of the tensor product between $\V_n(\kk)$ and $\V_m(\kk)$. After proving this property, we may use either terminology.

\subsection{The tensor product in quantum theory}\label{eq:fixing CQT}
Notice that the Born rule [postulate (iii)] is the Hilbert-Schmidt inner product applied to POVMs and states. 

In the context of postulate (iv), the Born rule on local states and measurements takes on the form
\begin{align}\label{eq:prob born mult system}
	\textup{Pr}(r_\mathsf{S}, r_\mathsf{T})= \Tr[(\effect_{\mathsf{S},r}\otimes \effect_{\mathsf{T},r} ) ( \rho_\mathsf{S} \otimes \rho_\mathsf{T})]= \Tr[\effect_{\mathsf{S},r} \rho_\mathsf{S}] \Tr[ \effect_{\mathsf{T},r} \rho_\mathsf{T}],
\end{align}
where $\Tr[\cdot]$ is the trace up to normalization of probabilities. I.e., it is of the form $\Tr[\cdot]= N_0 \tr[\cdot]$, where $N_0>0$ is a normalization constant. It will take on the value $1$ or $1/2$ in this \doc.

 In \CNQT, it follows that one can use the Kronecker-product matrix representation of the tensor-product without loss of generality. 
 However, crucially, the Kronecker-product representation cannot be assumed without loss of generality for postulates (i)-(iv)${}_\rr$.

In the context of tensor-product theory,~\cref{eq:prob born mult system} enters by 
 assuming that the standard inner product on the composite space is also the Hilbert-Schmidt inner product. This is to say,~\cref{eq:def:inner product tensor space} with Hilbert-Schmidt inner product:
\begin{equation}\label{eq:inner product tensor space Hilbert-Schmidt}
N_0 \tr\!\left[
	\sum_i {a}_i v_i \otimes w_i   \,\sum_j b_j  v_j^{\prime} \otimes w_j^{\prime}\right]
	= \sum_{i j} {a}_i b_j N_0 \tr\!\left[ v_i \,  v_j^{\prime}\right] N_0 \tr\!\left[ w_i \,  w_j^{\prime}\right]\:,
\end{equation}
for a POVM element $\effect_r=\sum_i a_i v_i \otimes w_i$, and state $\rho= \sum_j b_j  v_j^{\prime} \otimes w_j^{\prime}$ in Hilbert space $\mathcal{H}(\kk)\otimes\mathcal{H}(\kk)$. Here $N_0>0$ is a normalization constant. It will take on the value $1$ or $1/2$ in this \doc.

Importantly, observe that~\cref{eq:inner product tensor space Hilbert-Schmidt} puts a constraint on any matrix representation of the tensor product: the r.h.s. is independent of the tensor-product representation, while the l.h.s. is dependent.  While it is well-known that the Kronecker-product representation of the tensor-product  does satisfy~\cref{eq:inner product tensor space Hilbert-Schmidt} in the case of \CNQT, we show that in the case of RNQT, other matrix representations of the tensor product can also satisfy it.

\section{Construction of the mapping from \CNQT~to RNQT}\label{sec:Contruction_map} 

We argued in the main text that mapping positive states and effects of \CNQT~to positive states and effects of RNQT is in tension with mapping separable states and effects of \CNQT~to separable states and effects of RNQT; for the embedding of \CNQT~into RNQT considered in the main text, only one of these requirements can be satisfied at once. 

In this appendix, we study the embedding and construct a new combination rule such that it can map separable states and effects to separable states and effects while preserving traces and positivity.

\subsection{Review on matrix spaces decomposition}\label{sec:matrix_decompo}

In this section we state some known facts which will be required for our later proofs. It also serves to introduce the notation we use. We start by recalling some definitions as well as facts on how to decompose matrices over the real and the complex numbers. All of the material presented in this subsection is standard textbook material and this refresher is more used to introduce notation as well as for the sake of self-containment.

The $n$-dimensional Hilbert spaces over $\rr$ and over $\cc$, respectively denoted as $\Hilb{}_\rr$ and $\Hilb{}_\cc$, can be respectively represented by the vector spaces $\rr^n$ and $\cc^n$. In such case, both share the standard basis $\{\ket{i} = (\delta_{i,j})_j\}_{i=0}^{n-1}$ as an orthonormal basis. The complex conjugation of a scalar $c$ will be denoted with a bar like $\overline{c}$ and accordingly the complex conjugation of a vector $\ket{\psi} \in \cc^n$ with respect to the standard basis will be denoted $\overline{\ket{\psi}}$. In the following, we will use the letter $\kk$ to refer to either $\rr$ or $\cc$. 

The elements of a Hilbert space of operators $\LinOp{\Hilb{}_\kk}$ of dimension $n$ are represented by square matrices in $\kk^{n\times n}$. For any matrix $M \in \kk^{n \times n}$, the transpose with respect to the standard basis is denoted $M^T$, whereas its Hermitian conjugate is denoted $M^\dag := \overline{M}^T$. 
A matrix $M\in \kk^{n \times n}$ is called  \textit{Symmetric} iff $M^T = M$;  \textit{Antisymmetric} iff $M^T = -M$; \textit{Hermitian} iff $M^\dag = M$; and \textit{AntiHermitian} iff $M^\dag = -M$. 
Accordingly, we write $\SymX{n}{\kk}$, $\ASymX{n}{\kk}$, $\HermC{n}$, and $\AHermC{n}$ for the set of all $n \times n$ matrices that are, respectively, symmetric, antisymmetric, Hermitian, and antiHermitian. By the (real-)linearity of the transpose and the dagger, all of these sets are real subspaces of $\kk^{n \times n}$. Recall that the real vector space dimension of $\SymX{n}{\rr}$, $\ASymX{n}{\rr}, \SymX{n}{\cc}$, $\ASymX{n}{\cc}$, $\HermC{n}$, and $\AHermC{n}$ are, respectively, $\frac{n(n+1)}{2}, \frac{n(n-1)}{2}, n(n+1), n(n-1), n^2$ and $n^2$.

Notice that the spectrum of all matrices in $\SymX{n}{\rr}$ and in $\HermC{n}$ is purely real, whilst the spectrum of all matrices in $\ASymX{n}{\rr}$ and in $\AHermC{n}$ is purely imaginary. The former observation leads to the fact that the real algebra of self-adjoint operators on a Hilbert space defined over $\rr$ or $\cc$ are respectively represented by matrices in $\SymX{n}{\rr}$ or in $\HermC{n}$. Similarly, recall that the positive operators are represented by the corresponding sets of matrices with positive spectrum, that is, the sets of positive semi-definite matrices over $\rr$ and $\cc$ and that we respectively denote $\SymP{n}$ and $\HermP{n}$.

With a notion of transpose and Hermitian conjugate, matrices can be split into parts in a way that generalises the decomposition of a complex into its real and imaginary parts.
\begin{prop}\label{prop:CMatDecomposition}
    For any matrix $M \in \mathbb{C}^{n \times n}$ there exists four matrices $A,B,C,D \in \rr^{n\times n}$ such that 
    \begin{equation}\label{eq:CMatDecomposition_total}
        M = A + i B + i(C + iD) \:,
    \end{equation}
    with $A,C \in \SymX{n}{\rr}$ and $B, D \in \ASymX{n}{\rr}$.
\end{prop}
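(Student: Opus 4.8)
The plan is to prove \Cref{prop:CMatDecomposition} by explicitly constructing the four real matrices $A,B,C,D$ from $M$ using the (real-linear) operations of complex conjugation and transposition, and then verifying the claimed symmetry/antisymmetry properties. The starting observation is that any complex matrix $M$ decomposes uniquely into a real and imaginary part as $M = \operatorname{Re}(M) + i\operatorname{Im}(M)$, where $\operatorname{Re}(M) = (M + \overline{M})/2$ and $\operatorname{Im}(M) = (M - \overline{M})/(2i)$ are both real matrices. So it suffices to show that any \emph{real} matrix $R \in \rr^{n\times n}$ decomposes uniquely as $R = S + T$ with $S \in \SymX{n}{\rr}$ and $T \in \ASymX{n}{\rr}$; this is the standard fact $S = (R + R^T)/2$, $T = (R - R^T)/2$, and it holds because $\SymX{n}{\rr} \oplus \ASymX{n}{\rr} = \rr^{n\times n}$ (their dimensions $\tfrac{n(n+1)}{2}$ and $\tfrac{n(n-1)}{2}$ sum to $n^2$ and their intersection is $\{0\}$, as already recalled in the text).

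Concretely, I would set $A$ to be the symmetric part of $\operatorname{Re}(M)$, $B$ the antisymmetric part of $\operatorname{Re}(M)$, $C$ the symmetric part of $\operatorname{Im}(M)$, and $D$ the antisymmetric part of $\operatorname{Im}(M)$. Then $A + iB = \operatorname{Re}(M)$ and $C + iD = \operatorname{Im}(M)$, so that
\begin{equation}
    A + iB + i(C + iD) = \operatorname{Re}(M) + i\operatorname{Im}(M) = M,
\end{equation}
which is exactly \cref{eq:CMatDecomposition_total}, with $A,C$ symmetric and $B,D$ antisymmetric by construction. An equivalent, perhaps cleaner, route is to write the formulas in one shot: $A = \tfrac14(M + \overline{M} + M^T + \overline{M}^T)$, $B = \tfrac14(M + \overline{M} - M^T - \overline{M}^T)$, $C = \tfrac{1}{4i}(M - \overline{M} + M^T - \overline{M}^T)$, $D = \tfrac{1}{4i}(M - \overline{M} - M^T + \overline{M}^T)$, and then verify directly that each is real, that $A^T = A$, $B^T = -B$, $C^T = C$, $D^T = -D$, and that their combination telescopes to $M$.

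There is essentially no obstacle here — the statement is a routine linear-algebra decomposition, the matrix analogue of splitting a complex number into real and imaginary parts and further splitting each real matrix along the symmetric/antisymmetric direct sum decomposition. The only point requiring a line of care is checking that the ``$\tfrac{1}{4i}$'' prefactors really yield real matrices: this follows because $M - \overline{M}$ is purely imaginary (i.e. $i$ times a real matrix), and likewise for $M^T - \overline{M}^T$, so dividing by $i$ returns a real matrix. Uniqueness, if one wishes to state it, follows from the uniqueness of the real/imaginary split together with the uniqueness of the symmetric/antisymmetric split, i.e. from the directness of the sum $\SymX{n}{\rr} \oplus \ASymX{n}{\rr}$; though the proposition as stated only asserts existence.
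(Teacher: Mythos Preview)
Your overall strategy---split $M$ into real and imaginary parts, then split each of those into symmetric and antisymmetric parts---is exactly the paper's approach. However, your assignment of the four pieces to $A,B,C,D$ is off. Expanding $M = A + iB + i(C+iD) = (A-D) + i(B+C)$ shows that $\operatorname{Re}(M) = A - D$ and $\operatorname{Im}(M) = B + C$. Since $A$ is symmetric and $D$ antisymmetric, $A$ is the symmetric part of $\operatorname{Re}(M)$ and $D$ is \emph{minus} the antisymmetric part of $\operatorname{Re}(M)$; likewise $B$ is the antisymmetric part of $\operatorname{Im}(M)$ and $C$ its symmetric part. You instead set $B$ to be the antisymmetric part of $\operatorname{Re}(M)$ and $D$ the antisymmetric part of $\operatorname{Im}(M)$, and then wrote ``$A + iB = \operatorname{Re}(M)$'', which cannot hold: $A + iB$ is a (generally non-real) Hermitian matrix, not a real one. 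With your choices one computes $A + iB + i(C+iD) \neq M$ already for $M = E_{12} \in \rr^{2\times 2}$.

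The fix is just to swap the roles: take $B = \mathrm{ASym}(\operatorname{Im}(M))$ and $D = -\mathrm{ASym}(\operatorname{Re}(M))$, exactly as the paper does. The confusion seems to stem from reading the grouping $(A+iB) + i(C+iD)$ as the real/imaginary split, when in fact it is the Hermitian/anti-Hermitian split; the real/imaginary split is $(A-D) + i(B+C)$.
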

We delay the proof to introduce the following definitions:
\begin{definition}\label{def:CMatDecompositions}
    Using the elements of the decomposition \eqref{eq:CMatDecomposition_total} of a matrix $M$, define the...
    \begin{subequations}
    \begin{align}
        &\text{... real part of $M$ as} & \mathrm{Re}(M) := A - D \:; \\
        &\text{... imaginary part of $M$ as} & \mathrm{Im}(M) := B + C \:; \\
        &\text{... symmetric part of $M$ as} & \mathrm{Sym}(M) := A + iC \:; \\
        &\text{... antisymmetric part of $M$ as} & \mathrm{ASym}(M) := iB - D \:; \\
        &\text{... Hermitian part of $M$ as} & \mathrm{Herm}(M) := A + iB \:; \\
        &\text{... antiHermitian part of $M$ as} &  \mathrm{AHerm}(M) := i(C + iD) \:.
    \end{align}
    \end{subequations}
    This leads to the following decompositions of $M$:
    \begin{equation}
        M = \mathrm{Re}(M) + i \mathrm{Im}(M) = \mathrm{Sym}(M) + \mathrm{ASym}(M) = \mathrm{Herm}(M) + \mathrm{AHerm}(M) \:.
    \end{equation}
\end{definition}
\begin{corollary}
    From \cref{prop:CMatDecomposition} and \cref{def:CMatDecompositions},
    \begin{enumerate}
        \item The following relations hold:
        \begin{equation}
            \begin{array}{ccc}
                 \overline{\mathrm{Re}(M)} = \mathrm{Re}(M) \:; & \mathrm{Sym}(M)^T = \mathrm{Sym}(M) \:; & \mathrm{Herm}(M)^\dag = \mathrm{Herm}(M) \:;   \\
                 \overline{\mathrm{Im}(M)} = -\mathrm{Im}(M) \:; & \mathrm{ASym}(M)^T = - \mathrm{ASym}(M) \:; & \mathrm{AHerm}(M)^\dag = - \mathrm{Herm}(M) \:.
            \end{array}
        \end{equation}
        \item The following isomorphism of vector spaces hold:
        \begin{subequations}
            \begin{gather}
            \cc^{n\times n} \cong \rr^{n\times n} \oplus \rr^{n\times n} \cong \SymX{n}{\cc} \oplus \ASymX{n}{\cc} \cong \HermC{n} \oplus \AHermC{n}\:;\\
            \rr^{n\times n} \cong \HermC{n} \cong \SymX{n}{\rr} \oplus \ASymX{n}{\rr} \:.
            \end{gather}
        \end{subequations}
    \end{enumerate}
\end{corollary}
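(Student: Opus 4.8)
The plan is to dispatch the two items of the statement in turn, each being a short unpacking of \Cref{prop:CMatDecomposition}. I would fix, once and for all, the decomposition $M = A + iB + i(C+iD)$ with $A,C\in\SymX{n}{\rr}$ and $B,D\in\ASymX{n}{\rr}$ supplied by that proposition, and keep in mind three elementary facts: (a) complex conjugation fixes every real matrix and sends $i$ to $-i$; (b) the transpose fixes every symmetric matrix and negates every antisymmetric one; and (c) $(\cdot)^\dagger$ is conjugation followed by transposition, so on a real matrix it coincides with the transpose.

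For item 1, I would simply substitute the formulas of \Cref{def:CMatDecompositions} into each of the six claimed identities and apply (a)--(c) term by term. Thus $\mathrm{Sym}(M)^T = (A+iC)^T = A^T + iC^T = A + iC = \mathrm{Sym}(M)$ because $A,C$ are symmetric; $\mathrm{ASym}(M)^T = (iB-D)^T = iB^T - D^T = -iB + D = -\mathrm{ASym}(M)$ because $B,D$ are antisymmetric; $\mathrm{Herm}(M)^\dagger = (A+iB)^\dagger = A^T - iB^T = A + iB = \mathrm{Herm}(M)$, and the antiHermitian part $i(C+iD)$ similarly picks up an overall sign under $(\cdot)^\dagger$; while the two conjugation relations are immediate once one notes that $\mathrm{Re}(M) = A - D$ and $\mathrm{Im}(M) = B+C$ are real matrices and that $i\,\mathrm{Im}(M)$ is the purely-imaginary summand of $M$. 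Each identity is a one-liner of this type, so item 1 costs nothing beyond bookkeeping; note it also records precisely the memberships $\mathrm{Sym}(M)\in\SymX{n}{\cc}$, $\mathrm{ASym}(M)\in\ASymX{n}{\cc}$, $\mathrm{Herm}(M)\in\HermC{n}$, $\mathrm{AHerm}(M)\in\AHermC{n}$ that item 2 needs.

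For item 2, I would read off explicit real-linear bijections from the three additive decompositions displayed at the end of \Cref{def:CMatDecompositions}, namely the maps $M\mapsto(\mathrm{Re}(M),\mathrm{Im}(M))$, $M\mapsto(\mathrm{Sym}(M),\mathrm{ASym}(M))$ and $M\mapsto(\mathrm{Herm}(M),\mathrm{AHerm}(M))$ --- which, as one checks ($\mathrm{Herm}(M)=\tfrac12(M+M^\dagger)$, $\mathrm{Sym}(M)=\tfrac12(M+M^T)$, $\mathrm{Re}(M)=\tfrac12(M+\overline M)$), are just the three familiar ways of halving a complex matrix into complementary pieces. Each is real-linear, lands in the stated target by item 1, and is injective since its two components already sum to $M$; moreover each target is a genuine internal direct sum, since e.g. a matrix that is both Hermitian and antiHermitian satisfies $M=M^\dagger$ and $M^\dagger=-M$, hence $M=0$, and the symmetric/antisymmetric and real/imaginary intersections are trivial for the same reason. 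Surjectivity, hence bijectivity, then follows by matching \emph{real} dimensions using the list recalled just before \Cref{prop:CMatDecomposition}, since $2n^2 = n(n+1) + n(n-1) = n^2 + n^2$. The second displayed chain is obtained one level lower in exactly the same fashion: every real matrix splits uniquely into its symmetric and antisymmetric parts, giving $\rr^{n\times n}\cong\SymX{n}{\rr}\oplus\ASymX{n}{\rr}$ with $\tfrac{n(n+1)}{2}+\tfrac{n(n-1)}{2}=n^2$, while $(S,P)\mapsto S+iP$ is a real-linear bijection from $\SymX{n}{\rr}\oplus\ASymX{n}{\rr}$ onto $\HermC{n}$ (a Hermitian matrix being exactly a real-symmetric matrix plus $i$ times a real-antisymmetric one); composing yields $\rr^{n\times n}\cong\HermC{n}$.

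I do not anticipate any genuine obstacle here: the statement is a routine consequence of \Cref{prop:CMatDecomposition}. The only points that warrant care are that every ``$\cong$'' above is an isomorphism of \emph{real} vector spaces --- the sets $\SymX{n}{\cc}$, $\ASymX{n}{\cc}$, $\HermC{n}$, $\AHermC{n}$ are real, not complex, subspaces of $\cc^{n\times n}$, so all dimension counts must be taken over $\rr$ --- and that one checks the pairwise intersections of the summands are trivial, so that the listed decompositions are true direct sums rather than merely spanning sums.
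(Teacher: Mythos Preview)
Your proposal is correct and follows essentially the same approach as the paper's proof: both reduce the corollary to the explicit projection formulas $\mathrm{Re}(M)=\tfrac12(M+\overline M)$, $\mathrm{Sym}(M)=\tfrac12(M+M^T)$, $\mathrm{Herm}(M)=\tfrac12(M+M^\dagger)$ (and their antisymmetric counterparts), from which both items are immediate. The only cosmetic difference is that the paper invokes these averaging formulas directly for item~1, whereas you verify item~1 via the $A,B,C,D$ decomposition and reserve the averaging formulas for item~2; either route works and the content is the same.
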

\begin{proof}
    The decompositions are explicitly constructed as:
    \begin{equation}
    \begin{array}{ccc}
         \mathrm{Re}(M)= \frac{M + \overline{M}}{2}\:; & \mathrm{Sym}(M)= \frac{M + M^T}{2}\:; & \mathrm{Herm}(M)= \frac{M + M^\dag}{2}\:; \\
         \mathrm{Im}(M)= \frac{M - \overline{M}}{2}\:; & \mathrm{ASym}(M)= \frac{M - M^T}{2}\:; & \mathrm{AHerm}(M)= \frac{M - M^\dag}{2}\:;
    \end{array}
    \end{equation}
    and
    \begin{equation}
        \begin{array}{cc}
            A = \mathrm{Sym}(\mathrm{Re}(M)) = \frac{\mathrm{Re}(M) + \mathrm{Re}(M)^T}{2} \:; & B = \mathrm{ASym}(\mathrm{Im}(M)) = \frac{\mathrm{Im}(M) - \mathrm{Im}(M)^T}{2} \:; \\
            C = \mathrm{Sym}(\mathrm{Im}(M)) = \frac{\mathrm{Im}(M) + \mathrm{Im}(M)^T}{2} \:; & D = -\mathrm{ASym}(\mathrm{Re}(M)) = -\frac{\mathrm{Re}(M) - \mathrm{Re}(M)^T}{2} \:. \\
        \end{array}
    \end{equation}
    The rest follows. Note that these constructions imply that only $M = \mathrm{Herm}{M} + \mathrm{AHerm}{M}$ is a basis-independent construction since it is based on the dagger, the other two rely on the transpose or the conjugate of a matrix, therefore they rely on the choice of basis in which these operations are taken.
\end{proof}

Finally, recall that the tensor products between pairs of real and complex matrices in, respectively, $\rr^{n\times n} \times \rr^{m\times m}$ and $\cc^{n\times n} \times \cc^{m\times m}$ are spanning the larger vector spaces, 
\begin{equation}
    \rr^{n\times n} \otimes \rr^{m\times m}\cong \rr^{nm \times nm}\:;\quad \cc^{n\times n} \otimes \cc^{m\times m} \cong \cc^{nm \times nm} \:.
\end{equation}
Here a notation like $\rr^{n\times n} \otimes \rr^{m\times m}$ means the tensor product of vectors spaces, meaning the vector space spanned by linear (with respect to the base field) combinations of tensor products of all pairs $(M_A,M_B) \in \rr^{n\times n} \times \rr^{m\times m} $; i.e.,
\begin{equation}
    \rr^{n\times n} \otimes \rr^{m\times m} := \mathrm{Span}_{\rr}\{ M_A \otimes  M_B | M_A \in \rr^{n\times n}, M_B \in \rr^{m\times m}\} \:.
\end{equation}
This property also holds for Hermitian matrices seen as a \textit{real} vector space,
\begin{equation}\label{eq:tensor_herm}
    \HermC{n} \otimes \HermC{m} \cong \HermC{nm} \:,
\end{equation}
as directly follows from dimensionality arguments; this is intrinsically the reason why complex quantum theory is locally tomographic \cite{Wooters1990}. On the other hand, this property does not hold for the symmetric matrices nor the antisymmetric matrices. Rather, they respectively decompose as $\SymX{nm}{\rr} \cong (\SymX{n}{\rr} \otimes \SymX{m}{\rr}) \oplus (\ASymX{n}{\rr} \otimes \ASymX{m}{\rr})$ and $\ASymX{nm}{\rr} \cong (\SymX{n}{\rr} \otimes \ASymX{m}{\rr}) \oplus (\ASymX{n}{\rr} \otimes \SymX{m}{\rr})$. This has an important consequence for the following, namely that the space of tensor products of symmetric matrices is a proper subspace of the space of symmetric matrices on a tensor product space:
\begin{equation}
    \SymX{n}{\rr} \otimes \SymX{m}{\rr} \subset \SymX{nm}{\rr} \:, \quad n,m >1 \:.
\end{equation}

\subsection{Positivity-preserving mapping on a single system}\label{sec:4bis_single_system}
We begin with some more notation. Let $\id_n$ be denoting the $n\times n$ unit matrix, i.e. $\id_n = (\delta_{i,j})^{n,n}_{i=1,j=1}$. Let $I = \left(\begin{smallmatrix}1 & 0 \\ 0 & 1\end{smallmatrix}\right)$ and ($I = \id_2$) and $X,Y,Z$ be the Pauli matrices such that $Y = \left(\begin{smallmatrix}0 & -i \\ i & 0\end{smallmatrix}\right)$ and $J := XZ = -iY = \left(\begin{smallmatrix} 0 & -1 \\ 1 & 0 \end{smallmatrix}\right)$. When clear from the context, we will also put a subscript referring to some dimension to mean ``extend this matrix by taking its Kronecker product with the unit matrix of the corresponding dimension''. That is, the following kind of shorthand notation will be used in this appendix and the next one: $I_n := I \rktensor \id_n$, $J_n := J \rktensor \id_n$, $X_n := X \rktensor \id_n$, etc.

Let $\Gs$ be the mapping defined as in \cref{eq:local mapping} with the subscript referring to the dimension omitted for the whole section as we focus on a single system of dimension $d=n$. In this appendix, we extend by linearity the domain and codomain of $\Gamma$ from 
    $\HermC{n} \rightarrow \SY_n(\rr),$
 to 
    $\cc^{n \times n} \rightarrow \rr^{2 n \times 2 n}$.
That is, for any matrix $M\in\mathbb{C}^{n \times n}$ which decomposes as
\begin{equation}
	M=\text{Re}(M)+i\cdot\text{Im}(M)=(A-D)+i\cdot(B+C) \:,
\end{equation}
the map $\Gs: \cc^{n \times n} \rightarrow \rr^{2 n \times 2 n}$ reads
\begin{equation}\label{eq:map local}
M \mapsto \G{M}= I\rktensor\text{Re}(M)+J \rktensor\text{Im}(M)=
\begin{pmatrix}
	A-D & -(B+C) \\
	B+C & A-D
\end{pmatrix}
\end{equation}

\begin{prop} \label{prop:ring}
The map $\Gs$ is a unital $*$-ring homomorphism $(\mathbb{C}^{n\times n},+,\cdot, {~}^\dag) \rightarrow (\mathbb{R}^{2n\times 2n},+,\cdot, {~}^T)$. That is, for any two matrices $M,N \in \cc^{n \times n},$
\begin{equation}\label{eq:homo_ring}
    \begin{gathered}
    \G{M+N}=\G{M}+\G{N}\:; \quad \G{MN}=\G{M}\G{N}\:; \\ 
    \G{\id_n} = \id_{2n} = I_n\:; \quad \G{M^\dag} = \G{M}^T \:.     
    \end{gathered}
\end{equation}
\end{prop}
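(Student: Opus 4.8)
The plan is to verify the four identities in~\eqref{eq:homo_ring} directly from the explicit block form $\G{M}=\id\otimes\ReP{M}+J\otimes\ImP{M}$, reducing everything to the single observation that the pair $\{\id,J\}\subset\rr^{2\times 2}$ behaves under matrix multiplication and transposition exactly as $\{1,i\}\subset\cc$ behaves under multiplication and complex conjugation; concretely, $\id^2=\id$, $\id J=J\id=J$, $J^2=-\id$, $\id^T=\id$ and $J^T=-J$. The only other tools needed are the $\rr$-linearity of $\ReP{\cdot}$ and $\ImP{\cdot}$, the mixed-product property $(P\otimes Q)(P'\otimes Q')=(PP')\otimes(QQ')$, and the rule $(P\otimes Q)^T=P^T\otimes Q^T$. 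Here and below, $\ReP{M},\ImP{M}\in\rr^{n\times n}$ always denote the \emph{entry-wise} real and imaginary parts of $M$ (equivalently $A-D$ and $B+C$ in the notation of~\cref{def:CMatDecompositions}), for which the familiar identities $\ReP{MN}=\ReP{M}\ReP{N}-\ImP{M}\ImP{N}$, $\ImP{MN}=\ReP{M}\ImP{N}+\ImP{M}\ReP{N}$, $\ReP{M^\dag}=\ReP{M}^T$, $\ImP{M^\dag}=-\ImP{M}^T$ hold.

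Additivity is immediate: $\ReP{\cdot}$, $\ImP{\cdot}$ and $X\mapsto\id\otimes X$, $X\mapsto J\otimes X$ are all $\rr$-linear, so $\Gs$ is $\rr$-linear, hence in particular $\G{M+N}=\G{M}+\G{N}$. The unit identity follows from $\ReP{\id_n}=\id_n$, $\ImP{\id_n}=0$, giving $\G{\id_n}=\id_2\otimes\id_n=\id_{2n}$. For multiplicativity, write $R_M:=\ReP{M}$, $I_M:=\ImP{M}$ and expand $\G{M}\G{N}=(\id\otimes R_M+J\otimes I_M)(\id\otimes R_N+J\otimes I_N)$; applying the mixed-product property to each of the four terms and using $\id J=J\id=J$, $J^2=-\id$, they collapse to $\id\otimes(R_MR_N-I_MI_N)+J\otimes(R_MI_N+I_MR_N)$, which by the product identities for $\ReP{\cdot},\ImP{\cdot}$ is precisely $\G{MN}$. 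For the $*$-compatibility, transposing the tensor factors gives $\G{M}^T=\id^T\otimes R_M^T+J^T\otimes I_M^T=\id\otimes R_M^T-J\otimes I_M^T$, while $\G{M^\dag}=\id\otimes\ReP{M^\dag}+J\otimes\ImP{M^\dag}=\id\otimes R_M^T-J\otimes I_M^T$ by the dagger identities above; the two agree.

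I do not expect a genuine obstacle: the argument is pure bookkeeping organised around the dictionary $1\leftrightarrow\id$, $i\leftrightarrow J$, conjugation $\leftrightarrow$ transpose. The only points deserving care are keeping the two coexisting notions of ``real/imaginary part'' in the paper straight — the entry-wise one used throughout this subsection versus the symmetric/Hermitian-part notions of~\cref{def:CMatDecompositions} — so that the product and dagger identities invoked are the standard entry-wise ones, and invoking the mixed-product and transpose rules for the Kronecker product with the factors in the correct order. If one prefers a structural formulation that makes all four claims simultaneous, note that $\Gs$ is the composite $\cc^{n\times n}\cong\cc\otimes_\rr\rr^{n\times n}\xrightarrow{\phi\otimes\mathrm{id}}\mathrm{span}_\rr\{\id,J\}\otimes\rr^{n\times n}\hookrightarrow\rr^{2\times 2}\otimes\rr^{n\times n}\cong\rr^{2n\times 2n}$, where the outer isomorphisms carry $\dag$ to $T$ and $T\otimes T$ respectively, and $\phi:a+bi\mapsto a\id+bJ$ is a unital $*$-ring isomorphism onto $\mathrm{span}_\rr\{\id,J\}$ (conjugation $\mapsto$ transpose); each arrow is a $*$-ring homomorphism, so the composite $\Gs$ is one as well.
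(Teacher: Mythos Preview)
Your proof is correct and follows essentially the same approach as the paper: both verify the four identities directly from the block/tensor form of $\Gs$, reducing everything to the fact that $\{\id,J\}$ mimics $\{1,i\}$ (in particular $J^2=-\id$, $J^T=-J$). The only cosmetic difference is that for the $*$-compatibility the paper computes $\G{M}^T$ in explicit $2\times 2$ block form using the $A,B,C,D$ decomposition of~\cref{prop:CMatDecomposition}, whereas you work uniformly with the tensor expression $\id\otimes\ReP{M}+J\otimes\ImP{M}$ and the rule $(P\otimes Q)^T=P^T\otimes Q^T$; these are the same computation in different notation.
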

\begin{proof}
\begin{equation}
    \begin{aligned}
        \G{M}^T &=
	\begin{pmatrix}
		(A-D)^T & (B+C)^T \\
		-(B+C)^T & (A-D)^T
	\end{pmatrix}
	=
	\begin{pmatrix}
		A+D & -(B-C) \\
		B-C & A+D
	\end{pmatrix} \\
	&= I \rktensor (A+D) + J \rktensor (B-C) = \G{M^\dag}\:.
    \end{aligned}
\end{equation}
This is because $A$ and $C$ are symmetric while $B$ and $D$ are antisymmetric.
\end{proof}

So this mapping sends the ring structure of complex matrices to the one of real matrices, and even preserves the adjoints. We can see by construction that it is injective but not surjective and that $\Gs$ will preserve any real linear combination. However, this mapping is not an algebra $*$-homomorphism. Indeed, it cannot properly map the scalars in $\cc^{n \times n}$ to those in $\rr^{n \times n}$ since it does not handle multiplication by a complex number as expected: $\G{(i\cdot M)}\neq i\cdot\G{M}$ as $i$ is not an element of $\rr$. Rather, multiplying any matrix $M$ by a complex $c \in \cc$ gives
\begin{equation}\label{eq:map_phase}
    \G{(c M)} = r [( \cos\theta I + \sin\theta J) \rktensor \ReP{M} + (\cos(\theta+\frac{\pi}{2}) I + \sin(\theta+\frac{\pi}{2}) J) \rktensor \ImP{M}) \:,
\end{equation}
for $c = r e^{i \theta}$ with $r\in \rr$ and $\theta \in [0,2\pi[$. 

Similarly, despite the preservation of adjoints, it will not be able to be an isometry with respect to the inner-product space structure induced by the Frobenius inner product ($\TrX{}{\cdot^\dag \: \cdot}$ in $\cc^{n \times n}$ and $\TrX{}{\cdot^T \: \cdot}$ in $\rr^{2n \times 2n}$) because the base fields are different. Any complex-valued inner product will be mapped to 0 through the mapped inner product $\TrX{}{\G{\cdot}^T \: \G{\cdot}}$ since, $\forall M,N \in \cc^{n\times n}$,
\begin{equation}\label{eq:map_IP}
    \TrX{}{\G{N}^T \: \G{M}} = 2\: \ReP{\TrX{}{N^\dag\cdot M}} \:,
\end{equation}
as directly follows from $J$ having trace 0 and $J^2 = -I$. 

Of course, requiring $\G{(i\cdot M)}=i\cdot\G{M}$ and $\TrX{}{N^\dag\cdot M} = \TrX{}{\G{N}^T \cdot \G{M}}$ is naive. The $\Gs$ mapping can nevertheless be both a $*$-algebra homomorphism as well as an inner product space isomorphism by seeing the `base field' of $\mathbb{R}^{2m\times 2m}$ as the linear span of $\{I,J\}$ in $\mathbb{R}^{2\times 2}$ instead of $\rr$. In this case, scalar multiplication would correspond to \cref{eq:map_phase} and the corresponding inner product would be such that $\langle {\G{N}}|{\G{M}} \rangle = \TrX{}{N^\dag\cdot M}$ (up to a normalisation constant). But this exactly amounts to identifying $\cc$ as a $*$-algebra over $\rr$ with its representation in $\rr^{2 \times 2}$ given by $(1,i) \mapsto (I,J)$ such that (the notation at the space level is not rigorous but is there to help understanding the chain of equivalences involved):
\begin{equation}
	\begin{gathered}
		\cc^{n \times n} \cong \cc \otimes \rr^{n \times n} \cong (\rr \otimes \rr^{n \times n}) \oplus (\rr \otimes \rr^{n \times n})\rightarrow 
        \rr^{2 \times 2} \otimes \rr^{n \times n} \cong \rr^{2n \times 2n}  \::\\
		M = 1 \rktensor \ReP{M} + 1i \rktensor \ImP{M} \mapsto I \rktensor \ReP{M} + J \rktensor \ImP{M}
	\end{gathered} \:.
\end{equation}
In that regard, one can summarize the objections with the Renou et al. approach presented in this paper as a review of the issues faced when considering $\rr$ to be the scalars of the real theory instead of the representation of $\cc$ in $\rr^{2\times 2}$. 
In the postulates, requiring every mention of $\cc$ to be replaced by $\rr$ instead of the real span of $\{I,J\}$ causes them to lose structure when passing from complex quantum theory to their real theory, which explains their findings: if parts of the mathematical structure is lost when mapping from \CNQT to a real theory, it is expected that not all predictions of the former can be recovered in the latter. Using the particular mapping \cref{eq:local mapping} is just a way to pinpoint what has been lost under the change of representation, and how to fix it.

To start, it is necessary to first understand how the complex numbers are mapped under $\Gs$. 
As $I_n = \G{\id_n}$ and $J_n = \G{i\: \id_n}$, the matrices $I$ and $J$ represent the real and imaginary unit scalars. 
Using the decomposition \eqref{eq:CMatDecomposition_total}, we have $iM = -(C+iD)+i(A+iB) = -(B+C) + i(A-D)$ and $\G{iM} = \begin{pmatrix}
	-(B+C) & -(A-D) \\
	A- D & -(B+C)
\end{pmatrix} = J \rktensor (A-D) -I \rktensor (B+C) = J_n \G{M} $. 
This shows that 
\begin{equation}\label{eq:map1_linear}
	\forall a,b \in \rr, \quad \G{(a+bi) M} = a \G{M} + b J_n \G{M} \:,
\end{equation}
meaning that the $\Gs$ map preserves linear sums when the suitable $i \mapsto \G{i\: \id_n} = J_n$ has been applied. 
In particular, the $b=0$ case shows that $\Gs$ is real-linear. 
Going back to polar coordinates $(a,b)\mapsto(r\cos\theta,r\sin\theta)$, this equation reads $\G{(re^{i\theta}) M} = r\cos\theta \G{M} + r\sin\theta J_n \G{M}$. Assuming $r=1$ in the case where the imaginary part of the complex matrix $M$ is zero (i.e., $B=C = 0$), that is, when it is real (i.e., $M = \ReP{M})$, the multiplication by a complex phase amounts to doing an $SO(2)$ rotation of the matrix $M$:
\begin{equation}
	\forall \theta \in [0,2\pi[\,,\: \forall M = \ReP{M}, \quad \G{e^{i\theta} M} = (\cos\theta I + \sin\theta J) \rktensor M = \begin{pmatrix}
	    \cos\theta & -\sin\theta \\ \sin\theta &\cos\theta
	\end{pmatrix}\rktensor M  \:.
\end{equation}
Hence, the discrepancy \eqref{eq:map_phase} sort of disappears in this case; the unitary phase freedom has become a planar rotation. The other discrepancy, \Cref{eq:map_IP} also disappears (up to a factor of 2) as $\ReP{\TrX{}{N^\dag \cdot M}} = \TrX{}{N^\dag \cdot M}$ for real matrices.

However, the postulates (i)$_\cc$ to (iii)$_\cc$ tell us that the predictions of quantum theory are inner products of positive, thus Hermitian, matrices, rather than of real matrices. When required to work with Hermitian matrices only, the phase discrepancy \eqref{eq:map_phase} remains, whereas the inner product discrepancy \eqref{eq:map_IP} also disappears (as will be proven below).
Thence, it appears plausible that the mathematical structure lost when passing from (i)-(iii)$_\cc$ to (i)-(iii)$_\rr$ has no observable consequences. Indeed, it will not change the inner products between positive operators, so none of the probabilities predicted through the Born rule. We now prove this claim.

In the main text, we name the image of Hermitian matrices through $\Gamma$ the special symmetric matrices. 
\begin{definition} The Special Symmetric Matrices are the matrices $M\in\mathbb{R}^{2n\times 2n}$ which have the form 
\[
M=\begin{pmatrix}
	A & -B \\
	B & A
\end{pmatrix},
\]
for $ A\in\SymX{n}{\rr}$ and $B\in\ASymX{n}{\rr}$. The space of all such matrices is denoted $\SYR{n}$.
\end{definition}
An interesting observation of St\"{u}kleberg was that the special symmetric matrices are precisely the symmetric matrices that behave correctly with the representation of the imaginary unit, meaning that they commute with it~\cite{Stueckelberg1960}.
\begin{lemma}\label{lem:SYM_char}
Special symmetric matrices are the real matrices that are both symmetric and tetradionic (see~\Cref{def:real_matrices_types}). In symbols,
	\begin{equation}
		\G{M} \in \SYR{n} \iff \G{M} \in \SymX{2n}{\rr} \: \& \: J_n \G{M} = \G{M} J_n \:.
	\end{equation}
\end{lemma}
From its characterization, it is obvious that $\SYR{n}$ is a subspace of $\rr^{2n \times 2n}$ of dimension $n^2$. It is also direct to see that the image of $\HermC{n}$ through the $\Gs$ map is contained in $\SYR{n}$. From there, one can infer the $\Gs$ map to be an isomorphism between $*$-algebras over $\rr$.
\begin{prop}\label{prop:algebra}
    The map $\Gs$ is an isomorphism of $*$-algebras over $\rr$ between $\HermC{n}$ and $\SYR{n}$. That is, in addition to the ring structure obtained from \Cref{prop:ring}, the following holds for any two Hermitian matrices $H,L$ and reals $a,b$:
    \begin{equation}
        \G{aH + bL} = a\G{H} + b\G{L} \:.
    \end{equation}
\end{prop}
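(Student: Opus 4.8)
The plan is to take the $*$-ring homomorphism of \Cref{prop:ring} and supply the two ingredients still missing for it to be an isomorphism of $*$-algebras over $\rr$: compatibility with \emph{real} scalar multiplication, and bijectivity of the restricted map $\Gs\colon\HermC{n}\to\SYR{n}$.

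For real-linearity, note that \Cref{prop:ring} already gives additivity, so only homogeneity under real scalars is new. This is precisely the $b=0$ specialisation of \cref{eq:map1_linear}, which gives $\G{aM}=a\G{M}$ for every $a\in\rr$ and every $M\in\cc^{n\times n}$, in particular every Hermitian $M$. Since $aH+bL\in\HermC{n}$ whenever $H,L\in\HermC{n}$ and $a,b\in\rr$, combining this with additivity yields the displayed identity $\G{aH+bL}=a\G{H}+b\G{L}$. Conceptually this is the statement that, once one restricts to Hermitian inputs, the obstructions \cref{eq:map_phase,eq:map_IP} to $\Gs$ being an algebra or isometry homomorphism disappear, so the real-linear structure is recovered.

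For bijectivity, the inclusion $\Gs(\HermC{n})\subseteq\SYR{n}$ was already recorded before the statement. For the reverse inclusion, given $M=\left(\begin{smallmatrix} A & -B \\ B & A\end{smallmatrix}\right)\in\SYR{n}$ with $A\in\SymX{n}{\rr}$ and $B\in\ASymX{n}{\rr}$, the matrix $H:=A+iB$ satisfies $H^\dag=A^T-iB^T=A+iB=H$, hence $H\in\HermC{n}$, and $\G{H}=\id\otimes A+J\otimes B=M$ by definition of $\Gs$; thus $\Gs$ maps $\HermC{n}$ \emph{onto} $\SYR{n}$. Injectivity then follows from $\dim_\rr\HermC{n}=n^2=\dim_\rr\SYR{n}$ together with real-linearity (a surjective real-linear map between real vector spaces of equal finite dimension is bijective); alternatively one reads $A$ and $B$ directly off the blocks of $\G{H}$, exhibiting the inverse $M\mapsto A+iB$. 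As for the algebra structure: the product relevant here is the one inherited from the ambient real $*$-algebras $(\cc^{n\times n},{}^\dag)$ and $(\rr^{2n\times 2n},{}^T)$; applying the $b=0$ case of \cref{eq:map1_linear} to \emph{all} $M$ (not merely Hermitian ones) together with \Cref{prop:ring} shows that $\Gs$ is an injective homomorphism of $*$-algebras over $\rr$ from $\cc^{n\times n}$ onto the commutant of $J_n$, and by \Cref{lem:SYM_char} this restricts on the $*$-fixed (self-adjoint) subspaces to the asserted isomorphism $\HermC{n}\cong\SYR{n}$.

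I do not expect a genuine obstacle: every step is an assembly of results already established (\Cref{prop:ring}, \cref{eq:map1_linear}, \Cref{lem:SYM_char}, and elementary dimension counts). The only point needing care—and the closest thing to a subtlety—is the precise meaning of ``$*$-algebra over $\rr$'' in the statement, because neither $\HermC{n}$ nor $\SYR{n}$ is closed under ordinary matrix multiplication; the resolution, as above, is to read the associative product as that of the surrounding full matrix algebras (equivalently, of $\cc^{n\times n}$ and of its image $\mathrm{Comm}(J_n)$, on which $\Gs$ genuinely is an $\rr$-algebra isomorphism), with $\HermC{n}$ and $\SYR{n}$ appearing as the fixed-point sets of the respective involutions.
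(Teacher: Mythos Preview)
Your argument is correct and its core is the same as the paper's: real-linearity on Hermitian inputs follows from the $b=0$ case of \cref{eq:map1_linear}, with the remaining $*$-ring structure inherited from \Cref{prop:ring}. The paper's proof stops there and does not separately verify bijectivity or discuss closure under multiplication.

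Your additions are worthwhile. The explicit inverse $M\mapsto A+iB$ (or the dimension count) fills a gap the paper leaves implicit, and your remark that neither $\HermC{n}$ nor $\SYR{n}$ is closed under the matrix product is a genuine subtlety in the phrasing ``$*$-algebra over $\rr$'' that the paper simply does not confront. Your resolution---reading the product as that of the ambient algebras $\cc^{n\times n}$ and $\mathrm{Comm}(J_n)$, with the Hermitian and special-symmetric matrices as the $*$-fixed subspaces---is the right way to make the statement precise.
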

\begin{proof}
    The only missing part was the linearity w.r.t. real coefficient of $\Gs$ when applied to Hermitian matrices. It thus follows from taking \cref{eq:map1_linear} in the case where $b=0$.
\end{proof}

Focusing on Hermitian matrices is relevant not only because they form a real algebra. Indeed, what is important in a theory is its predictive power. Yet, the main quantities that are lost when considering the representation of \CNQT over the reals are the complex-valued inner products. However, they should have little impact on the predictive power of the theory, as all quantum states and effects are positive operators, leading to positive-valued inner products. While it is true, as we will now show, the loss of the complex-algebraic structure will still have an influence, but a more subtle one. 

\begin{prop}\label{prop:isometry}
	The map $\Gs$ when applied from $\HermC{n}$ to $\SYR{n}$ is, up to renormalization of the trace by a factor of 1/2, an isometry of inner product spaces.
\end{prop}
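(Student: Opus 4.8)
The plan is to deduce this essentially immediately from equation~\eqref{eq:map_IP}, which was already established for arbitrary complex matrices: for all $M,N\in\cc^{n\times n}$ one has $\TrX{}{\G{N}^T\G{M}}=2\,\ReP{\TrX{}{N^\dag M}}$. All that remains is to observe that, once we restrict to Hermitian inputs, the real-part operation on the right-hand side becomes vacuous, so that $\Gs$ scales the Frobenius inner product by exactly $2$.

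First I would specialize to $M,N\in\HermC{n}$, so that $N^\dag=N$ and the relevant quantity is $\TrX{}{NM}$. Using cyclicity of the trace together with Hermiticity, $\overline{\TrX{}{NM}}=\TrX{}{(NM)^\dag}=\TrX{}{M^\dag N^\dag}=\TrX{}{MN}=\TrX{}{NM}$, hence $\TrX{}{NM}\in\rr$ and so $\ReP{\TrX{}{N^\dag M}}=\TrX{}{N^\dag M}$. Substituting back into~\eqref{eq:map_IP} yields $\TrX{}{\G{N}^T\G{M}}=2\,\TrX{}{N^\dag M}$, i.e.\ the Frobenius inner product on $\SYR{n}$ pulls back under $\Gs$ to exactly twice the Frobenius inner product on $\HermC{n}$.

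To conclude, I would note that $\Gs\colon\HermC{n}\to\SYR{n}$ is a real-linear bijection: it is real-linear and injective by \Cref{prop:algebra}, its image lies in $\SYR{n}$ (the special symmetric matrices being defined precisely as the image of $\HermC{n}$), and $\dim_\rr\SYR{n}=n^2=\dim_\rr\HermC{n}$, which forces surjectivity. Equipping the codomain with the renormalized inner product $\tfrac12\TrX{}{(\cdot)^T(\cdot)}$ — equivalently, rescaling the trace by $1/2$ as in the statement — then makes $\Gs$ an isomorphism of inner product spaces.

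I do not expect a genuine obstacle here: essentially all the content is carried by~\eqref{eq:map_IP}, whose verification (via $J^2=-\id$ and $\Tr J=0$) was already indicated in the text. The only points requiring a moment's care are the trivial observation that Hermitian matrices have real mutual traces, so that the $\ReP{\cdot}$ in~\eqref{eq:map_IP} disappears, and the bijectivity bookkeeping needed to upgrade "inner-product preserving" to "isometry of inner product spaces".
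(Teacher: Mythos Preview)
Your proof is correct and follows essentially the same approach as the paper: both reduce the claim to equation~\eqref{eq:map_IP} and then specialize to Hermitian inputs so that the inner product becomes real. The paper chooses to re-derive~\eqref{eq:map_IP} explicitly via the decomposition $M=(A+iB)+i(C+iD)$ and then sets $C=D=C'=D'=0$, whereas you simply quote~\eqref{eq:map_IP} and observe $\TrX{}{NM}\in\rr$ for Hermitian $N,M$; these are the same argument at different levels of verbosity.
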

	\begin{proof}
		It follows from \cref{eq:map_IP}, but we will prove this formula using the decomposition \cref{eq:CMatDecomposition_total} for completeness.  
		
		For two such decomposed matrices $M = (A + iB) + i(C +iD) $ and $N = (A' + iB') + i(C' + iD')$, it can be computed that 
        \[\TrX{}{M^\dag N} = (A-D)(A'-D')-(B-C)(B'+C')+i\left((B-C)(A'-D') + (A-D)(B'+C')\right)\] and that \[\TrX{}{\G{M}^T\G{N}} = 2\left((A-D)(A'-D')+(B+C)(B'+C')\right)\:. \] 
        
        $M$ and $N$ are made Hermitian by setting $C,D,C'$ and $D'$ to 0. In such a case, the mapping is one-to-one, $\G{\HermC{n}} = \SYR{n}$, and the above computation leads to
		\begin{equation}
			\forall H,L \in \HermC{n}, \quad  \TrX{}{\G{H} \G{L}} = 2 \TrX{}{H L} \:,
		\end{equation}
		showing that $\Gs$ is an isometry (up to a factor of 2) when applied to Hermitian matrices.
	\end{proof}

Hence, passing from $\HermC{n}$ to $\SYR{n}$ under $\Gs$ is more sensible than from $\cc^{n \times n}$ to $\rr^{2n \times 2n}$ as less structure is lost in the process. We can now focus on the set of states, and for that we need to track what happens to the trace as well as the positivity.

For the positivity condition, we look at the image through $\Gs$ of the set $\HermP{n}$ of the $n\times n$ positive semi-definite matrices over $\cc$.
\begin{theorem}\label{thm:posvitive semi definitness conservation of the local mapping}
	Under the mapping $\Gs$, the positive semi-definite matrices in $\HermC{d}$ are mapped to positive semi-definite matrices in $\SYR{n}$. Moreover, $\Gs(H) = \spec(H)\cup \spec(H)$, $\forall H \in \HermC{n}$ where $\spec(\cdot)$ denotes the spectrum.
\end{theorem}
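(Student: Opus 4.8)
The plan is to give two independent short proofs and to present the one built on the ring‑homomorphism structure as the main argument, since it reuses \Cref{prop:ring} directly and is the most transparent. Note that $\Gamma$ has already been extended in this appendix to all of $\cc^{n\times n}$, so $\Gamma(K)$ makes sense for matrices $K$ that are not necessarily Hermitian.

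\textbf{Main argument (via the $*$-ring homomorphism).} Take $H\in\HermC{d}$ with $H\geq 0$. Then $H$ admits a factorization $H=K^\dag K$ with $K\in\cc^{n\times n}$ (for instance $K=\sqrt{H}$, the positive semi-definite square root, or any Cholesky factor). Applying $\Gamma$ and invoking the multiplicativity together with the $\dag\mapsto{}^T$ intertwining of \Cref{prop:ring},
\begin{equation}
    \Gamma(H)=\Gamma(K^\dag K)=\Gamma(K^\dag)\,\Gamma(K)=\Gamma(K)^T\Gamma(K)\geq 0,
\end{equation}
since every matrix of the form $R^T R$ with $R$ real is positive semi-definite. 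That the output lies in $\SYR{n}$ rather than merely in $\SymX{2n}{\rr}$ was already recorded in the main text (the image of a Hermitian matrix under $\Gamma$ is special symmetric), so the statement is complete.

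\textbf{Alternative argument (direct quadratic form).} Write a Hermitian $H$ as $H=A+iB$ with $A\in\SymX{n}{\rr}$ and $B\in\ASymX{n}{\rr}$, so that $\Gamma(H)=\begin{pmatrix}A&-B\\ B&A\end{pmatrix}$. For $w=(x^T,y^T)^T\in\rr^{2n}$ put $z=x+iy\in\cc^n$. Expanding both sides and using $x^T A y=y^T A x$, $x^T B x=y^T B y=0$, and $y^T B x=-x^T B y$, one verifies the identity
\begin{equation}
    w^T\Gamma(H)\,w=x^T A x+y^T A y-2\,x^T B y=z^\dag H z.
\end{equation}
Hence $H\geq 0$ forces $w^T\Gamma(H)w\geq 0$ for every $w\in\rr^{2n}$, i.e.\ $\Gamma(H)\geq 0$. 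The same identity also gives the converse, so in fact $H\geq 0\iff\Gamma(H)\geq 0$; only the forward implication is needed here, but this equivalence is what is used later when characterizing the states and effects of the single-system RQT.

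\textbf{Main obstacle.} There is essentially none: both proofs are elementary linear algebra, and the first is a one‑liner modulo \Cref{prop:ring}. The only points deserving a word of justification are the existence of the complex factorization $H=K^\dag K$ (standard for positive semi-definite matrices) and the fact that $\Gamma$ carries $\dag$ to transpose (precisely \Cref{prop:ring}). As a consistency check one may further note, using multiplicativity together with $\Tr[\Gamma(M)]=2\,\mathrm{Re}\,\Tr[M]$ which follows from \cref{eq:map_IP}, that $\Tr[\Gamma(H)^k]=2\Tr[H^k]$ for all $k$, whence the spectrum of $\Gamma(H)$ is exactly that of $H$ with every multiplicity doubled—fully consistent with positivity being preserved on the nose.
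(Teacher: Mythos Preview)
Your proposal is correct. In fact, your ``alternative argument'' is essentially the paper's own proof: the paper expands both $z^\dag H z$ and $w^T\Gamma(H)w$ (with $z=x+iy$ and $w=(x^T,y^T)^T$) and observes that they reduce to the same real expression $x^TAx+y^TAy+2y^TBx$, which is precisely your quadratic-form identity. So on that side there is nothing to add.

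Your ``main argument'' via the $*$-ring homomorphism is a genuinely different route. It trades the explicit quadratic-form computation for a one-line structural argument: factor $H=K^\dag K$, then $\Gamma(H)=\Gamma(K)^T\Gamma(K)\geq 0$ by \Cref{prop:ring}. This is cleaner and makes transparent \emph{why} positivity is preserved (it is the image of a square under a $*$-homomorphism), and it generalizes immediately to any $*$-ring homomorphism into real matrices. The paper's direct approach, on the other hand, yields the \emph{equivalence} $H\geq 0\iff\Gamma(H)\geq 0$ in a single stroke, which is what is actually used later for the single-system state/effect characterization; your homomorphism argument alone gives only the forward implication, and you correctly fall back on the quadratic-form identity to recover the converse. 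Your closing spectrum-doubling remark (via $\Tr[\Gamma(H)^k]=2\Tr[H^k]$) is a nice extra that the paper does not make explicit.
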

\begin{proof}
	Let $M \in \cc^{n \times n}: \: M \geq 0$ meaning that $z^\dag M z \geq 0$ $\forall z \in \cc^n$. Then
	\begin{equation}
		\begin{aligned}
			\forall z: \quad &z^\dag M z \geq 0 \iff \\
			\forall x,y \in \rr^n, \forall A \in \SymX{n}{\rr}, \forall B \in \ASymX{n}{\rr}: \quad & (x - i y)^T (A + iB) (x + iy) \geq 0 \iff\\
			x^T A x + y^T A y + 2 y^T B x \geq 0 \:.
		\end{aligned}
	\end{equation}
	The last line has been reached by distributing the parentheses, then using $x^T M y = y^T M^T x$, which implies that $x^T B x = 0$ for $B$ antisymmetric and $x^T A y - y^T A x = 0$ for $A$ symmetric.
	
	Now let $\G{M} \in \rr^{2n \times 2n}: \: \G{M} \geq 0$ meaning that $\G{z}^T M \G{z} \geq 0$ $\forall \G{z} \in \rr^{2n}$. Then,
	\begin{equation}
		\begin{aligned}
			\forall \G{z} : \quad &\G{z}^T M \G{z} \geq 0 \iff \\
			\forall x,y \in \rr^n, \forall A \in \SymX{n}{\rr}, \forall B \in \ASymX{n}{\rr}: \quad & \begin{pmatrix}
			x \\ y
\end{pmatrix}^T \begin{pmatrix}
	A & -B \\ B & A
\end{pmatrix} \begin{pmatrix}
	x \\ y
\end{pmatrix}	 \geq 0 \iff\\
			x^T A x + y^T A y + 2 y^T B x \geq 0 \:,
		\end{aligned}
	\end{equation}
	so we have reached the same condition as the above one. This should be no surprise from $\Gs$ being an algebra $*$-homomorphism and the variables have been chosen to make it clear that the second condition is but the image of the first under $\Gs$.
	
	To prove $\Gs(H) = \spec(H)\cup \spec(H)$, $\forall H \in \HermC{n}$, we note that the unitary
		\[
		U=\tfrac{1}{\sqrt{2}}  \begin{pmatrix}
			\id_n & \id_n\\[4pt]
			i\,\id_n & -\,i\,\id_n
	\end{pmatrix},
	\]
	block-diagonalises 	$\Gs(H)$: 
			\[
	U\Gs(H)U^\dag = H \oplus H^\dag =  H \oplus H,
	\]
	where $\oplus$ is the direct sum. Since the spectrum is invariant under a  unitarily transformation, it follows $\Gs(H) = \spec(H)\cup \spec(H)$, $\forall H \in \HermC{n}$ follows.
\end{proof}
For the trace, we have just shown that $\text{Tr}[\G{H}\G{L}]=2\text{Tr}[HL]$ for $H$ and $L$ Hermitian. Taking $L=\id_n$, we recover the definition of the $\Ms$ map in the main text, \cref{eq:map M}, for a single subsystem. 

Therefore, the states in $\text{Herm}_n(\mathbb{C})$ are in one-to-one correspondence with states in $\SYR{n}$ under $\Ms$. As for the effects, since the $\Gs$ map preserves inner products and positivity, it preserves duals as well so they also correspond. This also hold if the effects are restricted to projectors, whose image through $\Gs$ are also proportional to projectors by consequence of \cref{eq:homo_ring}.

We now make a short detour to investigate what are the matrices outside of the image of Hermitian matrices.
\begin{lemma}\label{lem:ASYM_char}
    \begin{equation}
		A \in \SymX{2n}{\rr}\setminus \SYR{n} \iff A \in \SymX{2n}{\rr} \: \& \: J_n A = -A J_n \:.
	\end{equation}
\end{lemma}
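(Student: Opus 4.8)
The plan is to reduce everything to a single $2\times 2$ block computation together with a dimension count, using the companion characterization in \Cref{lem:SYM_char} as the template. Write a generic element of $\SymX{2n}{\rr}$ in $n\times n$ blocks as $A=\left(\begin{smallmatrix} P & Q\\ R & S\end{smallmatrix}\right)$; the constraint $A^T=A$ forces $P=P^T$, $S=S^T$ and $R=Q^T$. By \Cref{lem:SYM_char}, $\SYR n$ is exactly the subspace of $\SymX{2n}{\rr}$ cut out by the extra relation $J_n A = A J_n$, which in block form (using $J_n=\left(\begin{smallmatrix}0 & -\id_n\\ \id_n & 0\end{smallmatrix}\right)$) becomes $P=S$ and $R=-Q$, so that $Q,R$ are antisymmetric and $A$ has the block shape $\left(\begin{smallmatrix} A & -B\\ B & A\end{smallmatrix}\right)$ of a special symmetric matrix, as expected.

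Next I would run the analogous expansion for the anticommutation relation. Imposing $J_n A = -A J_n$ and matching blocks gives $R=Q$ and $P=-S$; combined with the symmetry constraints this yields $Q=Q^T$ and $A=\left(\begin{smallmatrix}P & Q\\ Q & -P\end{smallmatrix}\right)$ with $P,Q\in\SymX n{\rr}$ otherwise free. Hence the set $\mathcal A:=\{A\in\SymX{2n}{\rr}:J_nA=-AJ_n\}$ is a linear subspace of dimension $2\cdot\tfrac{n(n+1)}{2}=n(n+1)$.

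Finally I would package these facts into a direct-sum statement, which is what the lemma is really asserting. If $A$ both commutes and anticommutes with $J_n$ then $J_nA=0$, and since $J_n^2=-I_n$ is invertible this forces $A=0$, so $\SYR n\cap\mathcal A=\{0\}$; moreover $\dim\SYR n+\dim\mathcal A=n^2+n(n+1)=n(2n+1)=\dim\SymX{2n}{\rr}$, whence $\SymX{2n}{\rr}=\SYR n\oplus\mathcal A$. The decomposition can be exhibited explicitly by the symmetry-preserving projectors $A\mapsto\tfrac12\bigl(A\mp J_nAJ_n\bigr)$ onto $\SYR n$ and $\mathcal A$ respectively (symmetry is preserved because $J_n^T=-J_n$ gives $(J_nAJ_n)^T=J_nA^TJ_n$). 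Thus $\mathcal A$ is precisely the complementary subspace to $\SYR n$ inside $\SymX{2n}{\rr}$, which is the content of the lemma. I would flag that, since the right-hand side is manifestly a linear subspace, the ``$\setminus$'' on the left must be read as passage to this complementary subspace rather than as a set-theoretic difference (otherwise $I_n$ itself is a counterexample: it is symmetric, lies in $\SYR n$, yet does not anticommute with $J_n$). The block algebra is routine and parallel to the proof of \Cref{lem:SYM_char}; the only genuine point requiring care is this interpretational subtlety, together with checking that the dimension bookkeeping closes so that $\SYR n\oplus\mathcal A$ exhausts $\SymX{2n}{\rr}$.
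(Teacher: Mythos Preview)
Your argument is correct and follows the same line as the paper's: both hinge on the projectors $A\mapsto\tfrac12(A\mp J_nAJ_n)$, yielding the direct-sum decomposition $\SymX{2n}{\rr}=\SYR n\oplus\mathcal A$, and your flag that the ``$\setminus$'' must be read as the complementary subspace (not set difference) is exactly right. One small slip: $I_n$ is not a counterexample to the literal reading, since it lies in $\SYR n$ and so belongs to neither side of the biconditional; a genuine counterexample is the zero matrix (anticommutes trivially yet lies in $\SYR n$), or any $s+a$ with $0\neq s\in\SYR n$ and $0\neq a\in\mathcal A$.
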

\begin{proof}
    Starting from a symmetric matrix $S \in \SymX{2n}{\rr}$, one defines $\mathrm{Tetr}(S) := \frac{S - JSJ}{2}$ and $\mathrm{ATetr}(S) := \frac{S + JSJ}{2}$. (See \Cref{eq:projectors_subspace_real} below for more details on these maps. Note that $(JSJ)^T=JSJ$ so both matrices are symmetric.) Next, remark that $J_n \mathrm{Tetr}(S) = J_nS + SJ_n = \mathrm{Tetr}(S)J_n$ so $\mathrm{Tetr}(S)$ commutes with $J_n$ and so $\mathrm{ATetr}(S)$ anticommutes with $J_n$. Using \Cref{lem:SYM_char}, it must be that $\mathrm{Tetr}(S) \in \SymX{n}{\rr}$ which implies $\mathrm{ATetr}(S) \in \SymX{2n}{\rr}\setminus \SYR{n}$. Therefore, we have proven the right implication in the above formula. Finally, the converse also holds because \Cref{lem:SYM_char} is a necessary and sufficient condition.
\end{proof}
\begin{corollary}
    The set of symmetric matrices that are not special symmetric is orthogonal to the set of special symmetric matrices:
    \begin{equation}
        \forall M \in \SYR{n},\: \forall A \in \SymX{2n}{\rr}\setminus \SYR{n},\quad \TrX{}{M A} = 0 \:.
    \end{equation}
    As a consequence, these matrices are traceless. 
\end{corollary}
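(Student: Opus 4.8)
The plan is to exploit the (anti)commutation relations with $J_n := J\otimes\id_n$ that characterize the two subspaces in play. By \Cref{lem:SYM_char}, every $M\in\SYR{n}$ commutes with $J_n$, that is $J_n M = M J_n$; by \Cref{lem:ASYM_char}, every $A\in\SymX{2n}{\rr}\setminus\SYR{n}$ anticommutes with $J_n$, that is $J_n A = -A J_n$. Since $J_n^2 = -I_n$, the matrix $J_n$ is invertible with $J_n^{-1} = -J_n$, so conjugation by $J_n$ is a well-defined automorphism of $\rr^{2n\times 2n}$, and this is all the structure the argument needs.

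First I would conjugate the product $MA$ by $J_n$ and use the two relations above:
\[
J_n (MA) J_n^{-1} = (J_n M J_n^{-1})(J_n A J_n^{-1}) = M\cdot(-A) = -MA,
\]
where $J_n M J_n^{-1} = M J_n J_n^{-1} = M$ and $J_n A J_n^{-1} = -A J_n J_n^{-1} = -A$. Taking the trace of both sides and using cyclicity, $\TrX{}{MA} = \TrX{}{J_n (MA) J_n^{-1}} = \TrX{}{-MA} = -\TrX{}{MA}$, whence $2\TrX{}{MA}=0$ and therefore $\TrX{}{MA}=0$. (Equivalently, one can avoid inverses entirely: $\TrX{}{MA} = -\TrX{}{J_n^2 MA} = -\TrX{}{J_n MA J_n} = -\TrX{}{(MJ_n)(-J_nA)} = \TrX{}{M J_n^2 A} = -\TrX{}{MA}$.) This establishes the orthogonality statement.

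For the ``as a consequence'' part I would simply specialise $M$ to the identity $I_n = \id\otimes\id_n = \begin{pmatrix}\id_n & 0\\ 0 & \id_n\end{pmatrix}$, which is of special symmetric form with symmetric block $\id_n$ and antisymmetric block $0$, hence $I_n\in\SYR{n}$. The orthogonality just proved then gives $\TrX{}{A} = \TrX{}{I_n A} = 0$ for every $A\in\SymX{2n}{\rr}\setminus\SYR{n}$. There is no real obstacle here: the proof is a two-line trace manipulation once \Cref{lem:SYM_char,lem:ASYM_char} are invoked; the only points one should state explicitly are that $J_n$ is invertible (immediate from $J_n^2=-I_n$) and that $I_n$ genuinely lies in $\SYR{n}$, so that the final specialisation is legitimate.
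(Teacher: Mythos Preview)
Your proposal is correct and is precisely the argument the paper leaves implicit: the corollary is stated without proof immediately after \Cref{lem:SYM_char,lem:ASYM_char}, and the intended derivation is exactly the trace manipulation you give, using that $M$ commutes and $A$ anticommutes with $J_n$ together with cyclicity, followed by the specialisation $M=I_n\in\SYR{n}$ for tracelessness.
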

\begin{remark}
    Notice that the state space of the real quantum theory considered in \cite{Renou2021} is larger than a trace-normalized subset of $\SYRP{n}$, as it is taken to be one of $\SymP{2n}$ and $\SYRP{n} \subset \SymP{2n}$. 
    In the case of the map $\Gs$, this is not a problem to restrict the state space to $\SYRP{n}$, as the ``states'' in $\SymP{2n}\setminus \SYRP{n}$ are orthogonal to those in $\SYRP{n}$. 
    
    Provided that the dimension is at least $2n$, the corresponding \CNQT can be represented in $\SYRP{n}$, and the presence of extra states, i.e. those in $\SymP{2n}\setminus \SYRP{n}$, cannot prevent the obtainable probability distributions to be at least those obtainable with the \CNQT. We will discuss in more details these extra states in~\cref{sec:extension}.
\end{remark}

Going back to the action of $\Gs$ on complex quantum theory, the last aspect of the theory that has not been addressed is its dynamics. We start by considering the evolution of a state $\rho$ under unitary dynamics $U$, $\rho \mapsto U \rho U^\dag$. Because $\Gs$ is an algebra homomorphism, the action of a unitary transform decomposes into $\G{U \rho U^\dag} = \G{U}\G{\rho}\G{U}^T$. We already know what the image of the set of valid states is, so what remains is to characterize the image of unitary matrices through $\Gs$.
\begin{prop}\label{prop:image_unitary}
	If $U\in\mathbb{C}^{n\times n}$ is unitary then $\G{U}\in\mathbb{R}^{2n\times 2n}$ is orthogonal \& symplectic. This means the following:
	\begin{equation}
		U^\dag={U}^{-1} \quad \iff \G{U}^T = \G{U}^{-1} \quad\&\quad \G{U} J_n=J_n\G{U} \:.
	\end{equation}
    Such $\G{U}$ matrix will be referred to as \textit{orthosymplectic}. 
    
	If $V\in\mathbb{C}^{n\times n}$ is antiunitary, meaning that $V = UK$ with $K$ the complex conjugation map $K\ket{\psi} = \ket{\overline{\psi}}$ $\forall \psi$ then $\G{V}\in\mathbb{R}^{2n\times 2n}$ is orthogonal \& antisymplectic. This means the following:
	\begin{equation}\label{eq:antiortho}
		V=UK:\quad U^\dagger=U^{-1} \quad\&\quad \forall c \in \cc :\: Kc = \overline{c} \quad \iff\quad\G{V}^T=\G{V}^{-1}\;\&\;\G{V} J_n=-J_n\G{V} \:.
	\end{equation}
\end{prop}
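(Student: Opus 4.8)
The plan is to treat the unitary and antiunitary cases in turn, leaning almost entirely on \Cref{prop:ring}, which says $\Gs$ is an injective $*$-ring homomorphism. For the unitary case I would first run the chain $U^\dag U = \id_n \iff \G{U^\dag U} = \G{\id_n} = I_n \iff \G{U}^T\G{U} = I_n \iff \G{U}^T = \G{U}^{-1}$, where injectivity of $\Gs$ turns $\G{U^\dag U} = \G{\id_n}$ back into $U^\dag U = \id_n$, the middle step is \cref{eq:homo_ring} ($\G{U^\dag} = \G{U}^T$ and multiplicativity), and the last is just the definition of orthogonality. Separately, and regardless of unitarity, $\G{U}$ commutes with $J_n$: since the scalar matrix $i\id_n$ commutes with $U$, applying $\Gs$ and using $\G{i\id_n} = J_n$ gives $J_n\G{U} = \G{i\id_n\,U} = \G{U\,i\id_n} = \G{U}J_n$. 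Finally I would record the elementary fact that for an \emph{orthogonal} matrix $S$, commuting with $J_n$ is equivalent to being symplectic ($S^TJ_n = J_nS^{-1}$): substituting $S^T = S^{-1}$ converts one relation into the other. Putting these together yields $U$ unitary $\Rightarrow$ $\G{U}$ orthogonal and symplectic, together with the stated ``orthosymplectic'' characterisation, the commutation clause being automatic for any $\G{U}$ (so it carries no content beyond marking $\G{U}$ as lying in the image of the complex-linear maps).

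For the antiunitary case I would first extend $\Gs$ to antilinear maps. Writing $\psi = a + ib \in \cc^n$ with $a,b \in \rr^n$, complex conjugation $K$ in the standard basis is represented on $\rr^{2n} \cong \cc^n$ by $Z_n := Z \otimes \id_n = \left(\begin{smallmatrix} \id_n & 0 \\ 0 & -\id_n \end{smallmatrix}\right)$, so the natural definition is $\G{V} := \G{U}Z_n$ for $V = UK$. I would check $Z_n^T = Z_n$, $Z_n^2 = I_n$, and $Z_n J_n = -J_n Z_n$ (this sign is the matrix avatar of antilinearity), and also note $\G{\overline{M}} = Z_n\G{M}Z_n$ directly from $\G{M} = \id\otimes\mathrm{Re}(M) + J\otimes\mathrm{Im}(M)$, confirming $Z_n$ is the correct representative of $K$. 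Then $\G{V}^T\G{V} = Z_n\G{U}^T\G{U}Z_n$, which equals $I_n$ iff $\G{U}^T\G{U} = I_n$ iff (by the unitary case, i.e.\ multiplicativity plus injectivity of $\Gs$) $U$ is unitary; hence $\G{V}$ orthogonal $\iff$ $U$ unitary. Moreover $\G{V}$ always anticommutes with $J_n$, since $\G{U}$ commutes with $J_n$ while $Z_n$ anticommutes with it: $\G{V}J_n = \G{U}Z_nJ_n = -J_n\G{U}Z_n = -J_n\G{V}$. The same orthogonal-matrix identity as before, now carrying a sign, gives: for orthogonal $S$, anticommuting with $J_n$ is equivalent to being antisymplectic ($S^TJ_n = -J_nS^{-1}$). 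Assembling these pieces yields \cref{eq:antiortho}.

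I expect the only genuinely delicate point to be the extension of $\Gs$ to antilinear maps: one must pin down $Z_n$ as the representative of $K$ and verify $\G{\overline{M}} = Z_n\G{M}Z_n$ directly from the block form of $\Gs$, so that composites such as $V = UK$ are handled consistently and the crucial minus sign in $Z_nJ_n = -J_nZ_n$ is not lost. Everything else is the algebra above plus the two one-line equivalences relating ``orthogonal $+$ (anti)commutation with $J_n$'' to ``orthogonal $+$ (anti)symplectic''. Conceptually, the statement simply records that $\Gs$ maps the unitary group into the orthosymplectic group and, once enlarged by $K \mapsto Z_n$, maps the full (anti)unitary group into the orthogonal group, with the symplectic/antisymplectic dichotomy tracking the linear/antilinear one.
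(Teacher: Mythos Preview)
Your argument is correct. Note that the paper does not actually prove this proposition: it simply refers the reader to Ref.~\cite{Myrheim1999}. Your self-contained derivation from \Cref{prop:ring} (the $*$-ring homomorphism property, plus injectivity of $\Gs$) together with the extension $\G{K} := Z_n$ for the antilinear part is the natural route; the paper independently corroborates your choice of $Z_n$ as the representative of complex conjugation a few lines after the proposition, via the identity $\G{\overline{M}} = Z_n\G{M}Z_n$. The one point you flag as delicate---that $\Gs$ as defined only accepts $\cc$-linear inputs, so $\G{V}$ for $V=UK$ must be \emph{defined} rather than computed---is exactly right, and your definition is the canonical one under the identification $\cc^n \cong \rr^{2n}$.
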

See Ref.~\cite{Myrheim1999} for a proof. (If needed, the definitions of orthogonal and symplectic matrices are collected in \Cref{def:real_matrices_types}.)

\begin{remark}
    As a side note, and building upon an observation of Myrheim~\cite{Myrheim1999}, it would be interesting to use RNQT as a means to have a linear representation of time-reversal in quantum theory. Remember that due to Wigner's theorem: pure dynamics in quantum theory can only be unitary or antiunitary, with the antiunitary transformations corresponding to unitary transformations followed by a time-reversal. Collecting some results, we see the potential of the real quantum theory: antiunitary transformations have a real-linear representation in RNQT as antiorthosymplectic matrices (characterized by \cref{eq:antiortho}) and the (unormalised) time-reversed states are exactly those in $\SymP{n}\setminus\SYR{n}$ (characterized by \Cref{lem:ASYM_char}), since they are the image of quantum states that have undergone an antiunitary transform. 
\end{remark}
As an example of an antiunitary transformation representable in RNQT, take $\overline{M} = A -iB - iC - D$, one computes that this antiunitary is now but a $\frac{\pi}{2}$ rotation of the $\rr^{2\times2}$ matrices used to represent the complex scalars, $\G{\overline{M}} = Z_n \G{M} Z_n$. Here $Z_n = Z \otimes \id_n$ where $Z$ is the Pauli $Z$ matrix.

With the discussion so far, one may wonder when the predictions of real quantum theory may diverge from its complex counterpart. 
No big discrepancy has been found so far, except that the non-physical antiunitary transformation admits a linear representation. Still, this is also feasible in a mixed state representation of \CNQT via the transposition. So where do the theories diverge? The answer depends on how the combination rule is defined. Many authors in the 2000s started considering the Kronecker product $\rktensor$ for combination in the real theory by formal analogy with \CNQT. However, a mapping from \CNQT to its real representation that preserves the Kronecker product will face an issue: this product is defined with respect to the base field. While both Kronecker products are formally similar, passing from the definition over the complex $\rktensor: \cc^{m\times n} \times \cc^{q \times p} \rightarrow \cc^{mp \times nq}$ to the one over the reals $\rktensor: \rr^{m'\times n'} \times \rr^{q' \times p'} \rightarrow \rr^{m'p' \times n'q'}$ is a non-trivial assumption, and even when enlarging the dimensions $m',p',n',q'$ of the real counterpart, these two products are as dissimilar as a Laurent series is to a Taylor series. 

We have seen already that trying to map the scalars in $\cc$ to those in $\rr$ result in loosing structure; one should instead map the base field $\cc$ to its representation in $\rr^{2 \times 2}$, which is the real span of $\{\id,J\}$ and look for a combination rule that is bilinear with respect to this representation. 
Yet, we have also seen above that ignoring this mathematical subtlety leads to no issue when restricting oneself to Hermitian operators: the inner products are preserved under $\Gs$ and thus there can be no discrepancies between the probability distributions obtained from single-partite quantum systems described by \CNQT and by RNQT. In the next section, we will investigate whether this is still the case for multipartite systems.

\subsection{Positivity-preserving mapping on two subsystems}\label{sec:4bis_multipartite}
So far, the theory relied on representing $\cc$ as matrices over the reals, with the map $\Gs_n$ applying the representation on the complex matrices in $\cc^{n \times n}$ to obtain real matrices in $\rr^{2n \times 2n}$. Defining the action of the mapping over the combination of multiple subsystems then amounts to taking the tensor product of representations.
But it is well-known that the tensor product of two representations is rarely an irreducible representation. The approach taken in this section can then be seen as a Clebsch-Gordon procedure: we will try to build an irreducible representation out of the one provided by applying the Kronecker product as a naive representation of the tensor product. 

We start with the bipartite case. In terms of isomorphisms of spaces, the issue can be phrased in symbols as 
\begin{equation}
    \begin{array}{lclcllrclcl}
         \cc^{n \times n} &\otimes &\cc^{m \times m} &\cong &(\cc &\otimes &\rr^{n \times n} ) &\otimes &(\cc &\otimes &\rr^{m \times m}  ) \\
         \downarrow\Gs_n && \downarrow\Gs_m && \hphantom{(}\downarrow\Gs_1 &&&& \hphantom{(}\downarrow\Gs_1 && \\
         \rr^{2n \times 2n} &\otimes &\rr^{2m \times 2m} &\cong  &(\rr^{2\times 2} & \otimes &\rr^{n \times n} )  &\otimes  &( \rr^{2\times 2} &\otimes &\rr^{m \times m}) \\
         
         &&&\cong &\cc &\otimes &\rr^{mn \times mn} &&&&  \\
         &&&& \downarrow\Gs_{1\times1} &&&&&& \\
         &&&\cong  &\rr^{4\times4} &\otimes &\rr^{mn \times mn}  &&&& \\

        &&& \cong & \cc^{mn\times mn} &&& &&& \\
        &&&& \downarrow\Gs_{n\times m} &&& &&& \\
        &&&\cong & \rr^{4mn \times 4mn} &&& &&&
    \end{array}\quad.
\end{equation}
And so we are trying to find a way to compose $\Gs_n$ with $\Gs_m$ such that it returns some map $\Gs_{n\times m}$ derived from $\Gs_n \otimes \Gs_m$ and which provides an irreducible representation of the underlying complex quantum theory.

There is a way to directly see that the Kronecker product of two representations will not be irreducible: since the Kronecker product $\rktensor$ is a bilinear map over different fields ($\cc$ and $\rr$), there is a phase invariance that will be problematic to pass from one representation to the other. That is, the map $\Gs_n \rktensor \Gs_m$ will have trouble preserving the equality $M_A \rktensor M_B = (e^{i\theta} M_A) \otimes (e^{- i \theta}M_B)$ for two arbitrary matrices $M_A \in \cc^{n \times n}$ and $M_B \in \cc^{m \times m}$ since it is not complex-bilinear. 
Compare $\Gi{n}{M_A} \rktensor \Gi{m}{M_B} $ to $ \Gi{n}{e^{i\theta} M_A} \rktensor \Gi{m}{e^{- i \theta}M_B}$; from the algebra, these two expressions are equivalent when $\Gi{n}{e^{i\theta} Id_n} \rktensor \Gi{m}{e^{-i \theta} Id_m} = I_n \rktensor I_m$, i.e., when $\theta$ is a multiple of $\pi$ as $\Gi{n}{e^{i\theta} \: Id_n} \rktensor \Gi{m}{e^{-i \theta} \: Id_m} = \Gi{n}{\id_n} \rktensor \Gi{m}{\id_m}$ in that case. 
However we see that the real-linearity of $\Gs$ induces an issue: $\Gs_n \rktensor \Gs_m$ cannot be an isomorphism as, for example, when $\theta=\pi/2$ one has $Id_n \rktensor Id_m = -(i \: Id_n)\rktensor (i \: Id_m)$ but $\Gi{n}{Id_n} \rktensor \Gi{m}{Id_m} \neq -\Gi{n}{(i \: Id_n)} \rktensor \Gi{m}{(i\: Id_m)}$ since $I_n \rktensor I_m \neq -J_n \rktensor J_m$.
Even worse, it cannot be an isometry, as it fails to preserve traces; in $I_n \rktensor I_m \neq -J_n \rktensor J_m$, the left-hand side is traceless while the right-hand side has a trace of $4nm$, for instance. 
Actually, since $\TrX{}{J_n \Gi{n}{M_A}} = 0$ for any Hermitian matrix, the equality $\TrX{}{M_A \rktensor M_B} = - \TrX{}{iM_A \rktensor iM_B}$ fails to be reproduced through 
 $\Gs_n \rktensor \Gs_m$: $ \TrX{}{\Gi{n}{M_A} \rktensor \Gi{m}{M_B}} = \TrX{}{\Gi{n}{iM_A} \rktensor \Gi{m}{iM_B} }$ would imply $\TrX{}{M_A}\TrX{}{M_B} = 0$ for any pair of Hermitian matrices, which cannot be true.

Generalizing for any matrices $M_A,N_A \in \cc^{n \times n}$ and $M_B,N_B\in \cc^{m \times m}$, one has
\begin{equation}
    0 = \TrX{}{(\G{M_A} \rktensor \G{M_B})^\dag (\G{iN_1}\rktensor \G{iN_2})} \neq -\TrX{}{(M_A \rktensor M_B)^\dag(N_1 \rktensor N_2)} \:.
\end{equation}
The underlying reason behind this issue is the following: trying to define $\Gs_{n \times m}$ using the Kronecker product makes it so that the Kronecker product between the two copies of the representation of $\cc$ in $\rr^{2 \times 2}$ yields a representation of $\cc$ in $\rr^{4 \times 4}$ that is reducible; it has induced a two-fold degeneracy $1 \rightarrow \{I_n \rktensor I_m, -J_n \rktensor J_m\}$ and $i \rightarrow \{J_n \rktensor I_m, I_n \rktensor J_m\}$ under which $\G{\cdot} \rktensor \G{\cdot}$ and $-\G{i\cdot}\rktensor \G{i\cdot}$ provide the two orthogonal representations. 

Rather, we would prefer a map $\Gs_{n \times m}$ which directly sends the two spaces to be composed into the irreducible $4nm$-dimensional representation $1 \rightarrow I_{n\times m}$ and $i\rightarrow J_{n \times m}$. To do so, the standard procedure is to mix over the irreps:
\begin{equation}\label{eq:2mix}
    \G{M_A} \rwedgetensor \G{M_B} := \frac{1}{2}\left[\G{M_1} \rktensor \G{M_2} - \G{iM_1} \rktensor \G{iM_2}\right] \:.
\end{equation}
As we will see, this procedure solves every representation-induced problem. But to grasp its content more easily, one needs to do some reshuffling as well as take a detour into stabilizer state formalism. 

The first thing we will do is to reshuffle the tensor factors ordering for easier-to-read equations. In \cref{eq:2mix} above, it is assumed that the underlying space of matrices factors as $\rr^{4n \times 4n} = \rr^{2\times 2}_A \otimes \rr^{n \times n}_A \otimes \rr^{2\times 2}_B \otimes \rr^{m \times m}_B$, where we use subscripts $A$ and $B$ to label the relevant spaces. Until the end of this appendix, we use the convention of putting all the `scalar' spaces on the left while retaining the parties' ordering. For the above example this means putting the two $\rr^{2 \times 2}$ spaces where the complex numbers are represented on the left of the factorization, $\rr^{4n \times 4n} = \rr^{2\times 2}_A \otimes \rr^{2\times 2}_B \otimes \rr^{n \times n}_A \otimes \rr^{m \times m}_B$, so swapping the second and third factor. 

We will need the following $4$-dimensional representation of $\{1,i\}$ in $\rr^{4 \times 4}$:
\begin{subequations}\label{eq:IJ_4d}
    \begin{align}
        & \I{2} := \frac{1}{2} \left( \id \rktensor \id - J \rktensor J \right) \:; \\
        & \J{2} := \frac{1}{2} \left( \id \rktensor J + J \rktensor \id \right) \:.
    \end{align}
\end{subequations}
As well as the expression of the real and imaginary parts of a Kronecker product, recall that 
\begin{subequations}\label{eq:Re_Im_bipartite}
    \begin{align}
        & \ReP{M_A \rktensor M_B} = \ReP{M_A} \rktensor \ReP{M_B} - \ImP{M_A} \rktensor \ImP{M_B} \:;\\
        & \ImP{M_A \rktensor M_B} = \ReP{M_A} \rktensor \ImP{M_B} + \ImP{M_A} \rktensor \ReP{M_{B}} \:.
    \end{align}
\end{subequations}
Using these four substitutions, a long but straightforward computation yields
\begin{equation}\label{eq:2mix_Re_Im}
    \G{M_A} \rwedgetensor \G{M_B} = \I{2} \rktensor \ReP{M_A \rktensor M_B} + \J{2} \rktensor \ImP{M_A\rktensor M_B}\:.
\end{equation}
One formally recovers the formula \cref{eq:map local} in the above! There is a reason for that. Of course, the pair $\{ \I{2}, \J{2} \}$ is indeed a representation of $\{1,i\}$ in $\rr^{4 \times 4}$ as it can be checked that they obey the proper multiplication rules, $\I{2} = \big(\I{2}\big)^2 = -\big(\J{2}\big)^2$ and $\J{2} = \I{2}\J{2} = \J{2}\I{2}$, while the transpose plays the role of conjugation: $\big(\I{2}\big)^T = \I{2}$ and $\big(\J{2}\big)^T = - \J{2}$, and that they are orthogonal to each other, $\TrX{}{\big(\I{2}\big)^T \cdot \J{2}} = 0$.
But what is useful with this representation is that it handles the local application of smaller-dimensional representations properly:
\begin{equation}\label{eq:IJ_4d_local}
    (I \rktensor J) \I{2} = \I{2} (I \rktensor J) = (J \rktensor I) \I{2} = \I{2} (J \rktensor I) \
    = \J{2}\I{2} = \J{2} \:.
\end{equation}
This property is what makes these representations useful in stabilizer state formalism. For our purpose, it makes the $\rwedgetensor$ behave properly with respect to the phase freedom of the underlying complex quantum theory.
To prove it, one should first show that the $\rwedgetensor$ product is \textit{absorbing} the Kronecker product in the following sense:
\begin{lemma}\label{lem:2mix_absorbant}
    Let $M_A \in \cc^{n\times n}$ and $M_B \in \cc^{m \times m}$ be matrices, let $I_d = I \rktensor Id_d$ where $I$ is the 2-by-2 identity matrix and $Id_d$ its $d$-by-$d$ counterpart, and let $\Gs_n$ and $\Gs_m$ as defined above. Then, the following relations hold
    \begin{subequations}\label{eq:2mix_absorbant}
        \begin{align}
            &\Gi{n}{M_A} \rwedgetensor \Gi{m}{M_B} = \left(I_n \rwedgetensor I_m\right)\left(\Gi{n}{M_A} \rktensor \Gi{m}{M_B}\right) \\
            &\Gi{n}{M_A} \rwedgetensor \Gi{m}{M_B} = \left(\Gi{n}{M_A} \rktensor \Gi{m}{M_B}\right)\left(I_n \rwedgetensor I_m\right)
        \end{align}
    \end{subequations}
\end{lemma}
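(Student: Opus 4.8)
The plan is to reduce both identities directly to the defining formula \cref{eq:2mix} for $\rwedgetensor$, using only three ingredients already in hand: the identifications $I_n = \Gi{n}{\id_n}$ and $J_n = \Gi{n}{i\id_n}$ (and likewise with $m$), the fact that each $\Gs_d$ is a ring homomorphism (\cref{prop:ring}), and the elementary mixed-product property of the Kronecker product, $(P\otimes Q)(R\otimes S) = (PR)\otimes(QS)$. The first step is to observe that, by the definition \cref{eq:2mix} applied to the pair $(\id_n,\id_m)$, one has $I_n \rwedgetensor I_m = \tfrac12\big(\Gi{n}{\id_n}\otimes\Gi{m}{\id_m} - \Gi{n}{i\id_n}\otimes\Gi{m}{i\id_m}\big) = \tfrac12\big(I_n\otimes I_m - J_n\otimes J_m\big)$.

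For the first identity I would then expand
\begin{equation*}
\big(I_n \rwedgetensor I_m\big)\big(\Gi{n}{M_A}\otimes\Gi{m}{M_B}\big) = \tfrac12\Big[\big(I_n\,\Gi{n}{M_A}\big)\otimes\big(I_m\,\Gi{m}{M_B}\big) - \big(J_n\,\Gi{n}{M_A}\big)\otimes\big(J_m\,\Gi{m}{M_B}\big)\Big]
\end{equation*}
by the mixed-product property, and simplify each factor with \cref{prop:ring}: $I_n\,\Gi{n}{M_A} = \Gi{n}{\id_n}\,\Gi{n}{M_A} = \Gi{n}{M_A}$, while $J_n\,\Gi{n}{M_A} = \Gi{n}{i\id_n}\,\Gi{n}{M_A} = \Gi{n}{(i\id_n)M_A} = \Gi{n}{iM_A}$, and the same with $m$ in place of $n$. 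Substituting back turns the right-hand side into $\tfrac12\big(\Gi{n}{M_A}\otimes\Gi{m}{M_B} - \Gi{n}{iM_A}\otimes\Gi{m}{iM_B}\big)$, which is exactly $\Gi{n}{M_A}\rwedgetensor\Gi{m}{M_B}$ by \cref{eq:2mix}. The second identity is proved identically, multiplying on the right and using $\Gi{n}{M_A}\,I_n = \Gi{n}{M_A}$ and $\Gi{n}{M_A}\,J_n = \Gi{n}{M_A}\,\Gi{n}{i\id_n} = \Gi{n}{M_A(i\id_n)} = \Gi{n}{iM_A}$, where the last step uses that $i\id_n$ is central in $\cc^{n\times n}$.

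I do not expect a genuine obstacle: the argument is essentially bookkeeping, provided one resists the temptation to pull $i$ through $\Gs$. The one point that must be handled with care is precisely that the step $J_n\,\Gi{n}{M_A} = \Gi{n}{iM_A}$ is an instance of multiplicativity of $\Gs$ (multiplication by the matrix $i\id_n$), not of any complex-linearity of $\Gs$, which fails—compare \cref{eq:map_phase}. As a cross-check and for a more conceptual reading, the same statement can be recovered from the reshuffled form \cref{eq:2mix_Re_Im} by writing $I_n\rwedgetensor I_m = \I{2}\otimes\id_{nm}$ and invoking the stabilizer-type relations \cref{eq:IJ_4d_local} together with \cref{eq:Re_Im_bipartite}; this route also makes manifest why left and right multiplication yield the same answer, since $\I{2}$, $\J{2}$, $\id\otimes J$ and $J\otimes\id$ mutually commute. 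I would nonetheless present the ring-homomorphism computation as the actual proof, since it dispatches both equalities at once in a couple of lines.
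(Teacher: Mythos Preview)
Your proof is correct and follows essentially the same approach as the paper's: both rely on the definition \cref{eq:2mix}, the mixed-product property of the Kronecker product, and the ring-homomorphism identity $J_n\,\Gi{n}{M_A}=\Gi{n}{iM_A}$. The only cosmetic difference is direction and bookkeeping: the paper starts from $\Gi{n}{M_A}\rwedgetensor\Gi{m}{M_B}$, factors out $\tfrac12(I_n\otimes I_m - J_n\otimes J_m)$, and then identifies this prefactor with $I_n\rwedgetensor I_m$ via the reshuffled $\I{2}$ form \cref{eq:2mix_Re_Im}, whereas you obtain that identification directly by applying \cref{eq:2mix} to the pair $(\id_n,\id_m)$ and then run the computation from the right-hand side back to the definition---a slightly more economical route that avoids the reshuffling detour.
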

\begin{proof}
    Starting from the definition of the left-hand side,
    \begin{equation}
        \begin{aligned}
            \Gi{n}{M_A} \rwedgetensor \Gi{m}{M_B} =& \frac{1}{2}\left( \Gi{n}{M_A} \rktensor \Gi{m}{M_B} - \Gi{n}{i M_A} \rktensor \Gi{m}{i M_B} \right)\\
            =& \frac{1}{2}\left( \Gi{n}{M_A} \rktensor \Gi{m}{M_B} - (J_n\Gi{n}{M_A}) \rktensor (J_m\Gi{m}{M_B}) \right)\\
            =& \frac{1}{2}\left( I_n \rktensor I_m  - J_n \rktensor J_m \right)(\Gi{n}{M_A} \rktensor \Gi{m}{M_B})\\
            =& \frac{1}{2}\left( (\id \rktensor \id - J \rktensor J ) \rktensor (\id_n \rktensor \id_m) \right)(\Gi{n}{M_A} \rktensor \Gi{m}{M_B})\\
            =& \left( \I{2} \rktensor (\id_n \rktensor \id_m) \right)(\Gi{n}{M_A} \rktensor \Gi{m}{M_B})\\
            &\overset{\eqref{eq:2mix_Re_Im}}{=} \left( I_n \rwedgetensor I_m \right)(\Gi{n}{M_A} \rktensor \Gi{m}{M_B}) \:.
        \end{aligned}
    \end{equation}
    In the above, the refactorisation announced earlier in the appendix has been performed when going to the third line, and so equalities should be understood up to the insertion of an ad hoc swap operator. The second statement follows directly by the same reasoning, and using $\Gi{n}{i M_A} = \Gi{n}{M_Ai} = \Gi{n}{M_A} J_n$.
\end{proof}

Notice that the previous proof was made without needing the underlying complex matrix, only that the matrices were commuting with $J_n \rktensor J_m$. Hence, this result holds for any pair of matrices in $\rr^{2n\times 2n} \times \rr^{2m \times 2m}$ whose Kronecker product commutes with $J_n \rktensor J_m$. We will get back to this in the next section and use it to extend the definition of the $\rwedgetensor$ combination rule to all real matrices. More generally, any result proven for the image of some complex matrix through $\Gs$ will also hold for an arbitrary real matrix as long as it commutes with $J$. 

Meanwhile, \Cref{lem:2mix_absorbant} implies that real-bilinearity and the mixed product property are inherited from the Kronecker product.
\begin{corollary}\label{coro:rwedge_bilin_mixprod_SYR}
	The combination rule $\rwedgetensor$ as defined by \Cref{eq:2mix} is a real bilinear map obeying the mixed product property. That is,
	$\forall \widetilde{H},\widetilde{K} \in \SYR{n}, \forall \widetilde{L}, \widetilde{M} \in \SYR{m}$ and $\forall a,b \in \rr$, the following holds:
	\begin{subequations}
	\begin{align}
		 (a \widetilde{H} + b \widetilde{K}) \rwedgetensor \widetilde{L}  &= a(\widetilde{H} \rwedgetensor \widetilde{L}) + b(\widetilde{K} \rwedgetensor \widetilde{L})\:;\\
		 \widetilde{H} \rwedgetensor (a \widetilde{L} + b \widetilde{M}) &= a(\widetilde{H} \rwedgetensor \widetilde{L}) + b(\widetilde{H} \rwedgetensor \widetilde{M})\:;\\
		 (\widetilde{H}\widetilde{K})  \rwedgetensor (\widetilde{L}\widetilde{M}) &= (\widetilde{H} \rwedgetensor \widetilde{L}) (\widetilde{K} \rwedgetensor \widetilde{M}) \:;
	\end{align}
	\end{subequations} 
\end{corollary}
\begin{proof}
	As mentionned, one simply uses \Cref{lem:2mix_absorbant} to factor out the $I_n \rktensor I_m$ factor, uses the properties of the Kronecker product $\rktensor$, and then reapplies the factor. For instance, the mixed-produc rules is infered by
	\begin{equation}
	\begin{aligned}
		(\widetilde{H}\widetilde{K})  \rwedgetensor (\widetilde{L}\widetilde{M}) &= (I_n \rwedgetensor I_m) [(\widetilde{H}\widetilde{K})  \rktensor (\widetilde{L}\widetilde{M})] \\
		&=(I_n \rwedgetensor I_m) [(\widetilde{H} \rktensor \widetilde{L}) (\widetilde{K} \rktensor \widetilde{M})] \\
		&=(I_n \rwedgetensor I_m)(I_n \rwedgetensor I_m) [(\widetilde{H} \rktensor \widetilde{L}) (\widetilde{K} \rktensor \widetilde{M})] \\
		&=(I_n \rwedgetensor I_m) (\widetilde{H} \rktensor \widetilde{L})(I_n \rwedgetensor I_m) (\widetilde{K} \rktensor \widetilde{M}) \\
		&=(\widetilde{H} \rwedgetensor \widetilde{L}) (\widetilde{K} \rwedgetensor \widetilde{M}) \:,
	\end{aligned}
	\end{equation}
	where, we use the mixed product property for $\rktensor$ at the second line, then the idempotency of $(I_n \rwedgetensor I_m)$ at the third, followed by applications of~\Cref{lem:2mix_absorbant} until we reached the final line.
\end{proof}

We are now in a position to prove that the new bilinear combination rule properly accounts for the phase freedom.
\begin{prop}
    The real bilinearity of $\rwedgetensor$ properly renders the complex bilinearity of the complex Kronecker product $\rktensor$ through the mapping $\Gs$ in the following sense: $\forall \theta\in \rr, M \in \cc^{n \times n}, N\in \cc^{m \times m}$,
    \begin{equation}
        \begin{gathered}
        \Gi{n}{e^{i\theta}M } \rwedgetensor \Gi{m}{N} = \Gi{n}{M } \rwedgetensor \Gi{m}{e^{i\theta} N} \\
        = \cos{\theta}(\Gi{n}{M} \rwedgetensor \Gi{m}{N}) + \sin{\theta} (\Gi{n}{iM} \rwedgetensor \Gi{m}{N}) \:.
        \end{gathered}
    \end{equation}
    In particular, the complex phase freedom of $\rktensor$ is preserved through $\Gs$:
    \begin{equation}
        \Gi{n}{M} \rwedgetensor \Gi{m}{N}   =  \Gi{n}{e^{i\theta} M} \rwedgetensor \Gi{m}{e^{-i\theta}N}
    \end{equation}    
\end{prop}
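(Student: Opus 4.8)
The plan is to read everything off the closed form \cref{eq:2mix_Re_Im}, which expresses $\Gi{n}{M}\rwedgetensor\Gi{m}{N}$ as $\I{2}\otimes\ReP{M\otimes N}+\J{2}\otimes\ImP{M\otimes N}$, i.e. \emph{solely} in terms of the real and imaginary parts of the ordinary \emph{complex} tensor product $M\otimes N$ (up to the tensor-reordering convention fixed earlier in the appendix, which I will keep implicit). The key observation is that a phase is invisible at the level of the complex tensor product: $(e^{i\theta}M)\otimes N=M\otimes(e^{i\theta}N)=e^{i\theta}(M\otimes N)$, since complex scalars pass freely through $\otimes_\cc$ and combine.

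First I would prove the right-hand equality. Apply \cref{eq:2mix_Re_Im} to $\Gi{n}{e^{i\theta}M}\rwedgetensor\Gi{m}{N}$ and use $(e^{i\theta}M)\otimes N=e^{i\theta}(M\otimes N)$ together with the elementary identities $\ReP{e^{i\theta}P}=\cos\theta\,\ReP{P}-\sin\theta\,\ImP{P}$ and $\ImP{e^{i\theta}P}=\cos\theta\,\ImP{P}+\sin\theta\,\ReP{P}$ (the content underlying \cref{eq:map_phase}), with $P=M\otimes N$. Regrouping the $\cos\theta$ and $\sin\theta$ terms, the first group is $\I{2}\otimes\ReP{M\otimes N}+\J{2}\otimes\ImP{M\otimes N}=\Gi{n}{M}\rwedgetensor\Gi{m}{N}$, and the second group is $-\I{2}\otimes\ImP{M\otimes N}+\J{2}\otimes\ReP{M\otimes N}$, which equals $\Gi{n}{iM}\rwedgetensor\Gi{m}{N}$ because $(iM)\otimes N=i(M\otimes N)$ gives $\ReP{iM\otimes N}=-\ImP{M\otimes N}$ and $\ImP{iM\otimes N}=\ReP{M\otimes N}$. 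This is precisely the claimed $\cos/\sin$ decomposition.

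Next, the left-to-middle equality is immediate: $(e^{i\theta}M)\otimes N$ and $M\otimes(e^{i\theta}N)$ are literally the same complex matrix, so \cref{eq:2mix_Re_Im} gives $\Gi{n}{e^{i\theta}M}\rwedgetensor\Gi{m}{N}=\Gi{n}{M}\rwedgetensor\Gi{m}{e^{i\theta}N}$ with no further work (equivalently, rerun the previous paragraph with $e^{i\theta}$ in the second slot; the symmetry of $e^{i\theta}(M\otimes N)$ makes both computations identical). Finally, the phase-freedom statement is the special case $N\mapsto e^{-i\theta}N$: then $(e^{i\theta}M)\otimes(e^{-i\theta}N)=M\otimes N$ as complex matrices, so \cref{eq:2mix_Re_Im} yields $\Gi{n}{e^{i\theta}M}\rwedgetensor\Gi{m}{e^{-i\theta}N}=\Gi{n}{M}\rwedgetensor\Gi{m}{N}$ directly.

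I do not expect a genuine obstacle once \cref{eq:2mix_Re_Im} is in hand; the statement is essentially a corollary of that identity together with the fact that complex scalars traverse $\otimes_\cc$. The only care needed is bookkeeping the two $\rr^{2\times2}$ scalar tensor slots against the $\rr^{n\times n}$ and $\rr^{m\times m}$ slots, and noting that the identities for $\ReP{\cdot},\ImP{\cdot}$ under multiplication by $e^{i\theta}$ and by $i$ hold verbatim for matrices (these are just multiplication by a complex scalar, and $\ReP{\cdot},\ImP{\cdot}$ are real-linear). An entirely equivalent route avoids \cref{eq:2mix_Re_Im} and works straight from the definition \cref{eq:2mix}: expand $\Gi{n}{iM}\rwedgetensor\Gi{m}{N}$ and $\Gi{n}{M}\rwedgetensor\Gi{m}{iN}$, use $\Gi{n}{i(iM)}=-\Gi{n}{M}$ (real-linearity of $\Gs$, a consequence of \cref{prop:ring}), and observe the two expansions agree term by term; real-bilinearity of $\rwedgetensor$ (also read off \cref{eq:2mix}) then promotes this single identity to the full proposition via $e^{i\theta}=\cos\theta+i\sin\theta$.
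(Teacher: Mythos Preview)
Your proof is correct and, in fact, more economical than the paper's. Both arguments rely on \cref{eq:2mix_Re_Im}, but you use it as the \emph{only} ingredient: since that identity expresses $\Gi{n}{M}\rwedgetensor\Gi{m}{N}$ purely through the complex matrix $M\otimes N$, and complex scalars pass freely through $\otimes_\cc$, all three equalities drop out at once. The paper instead handles the first-to-last equality by real-linearity of $\Gs$ plus bilinearity of $\rwedgetensor$ (essentially your ``equivalent route'' sketched at the end), and then establishes the left-to-middle equality by a more operational argument: it invokes \Cref{lem:2mix_absorbant} to factor $\Gi{n}{e^{i\theta}M}\rwedgetensor\Gi{m}{N}=(\Gi{n}{e^{i\theta}\id_n}\rwedgetensor\Gi{m}{\id_m})(\Gi{n}{M}\otimes\Gi{m}{N})$, and then uses the local-$J$ identity \cref{eq:IJ_4d_local} to migrate the phase from the first slot to the second. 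What the paper's detour buys is an explicit demonstration that the phase transfer works \emph{mechanically} through the absorbant/local-$J$ machinery, which is conceptually useful when later extending to more subsystems; what your approach buys is brevity and a clearer explanation of \emph{why} the proposition holds, namely that \cref{eq:2mix_Re_Im} has already packaged the $\rwedgetensor$ product as a function of $M\otimes_\cc N$ alone.
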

\begin{proof}
    The equivalence between the first and last formulae, $\Gi{n}{e^{i\theta}M } \rwedgetensor \Gi{m}{N} = \cos{\theta}(\Gi{n}{M} \rwedgetensor \Gi{m}{N}) + \sin{\theta} (\Gi{n}{iM} \rwedgetensor \Gi{m}{N})$ is direct by bilinearity. To show equivalence with the second formula, we use \Cref{lem:2mix_absorbant} as well as $\Gs_{n}$ being an algebra homomorphism, this gives $\Gi{n}{e^{i\theta}M } \rwedgetensor \Gi{m}{N} = (\Gi{n}{e^{i\theta} \: \id_n} \rwedgetensor \Gi{m}{\id_m})(\Gi{n}{M } \rktensor  \Gi{m}{N})$. It remains to show that the phase can pass on the right-hand side of the $\rwedgetensor$ in $(\Gi{n}{e^{i\theta} \:\id_n} \rwedgetensor \Gi{m}{\id_m})$. By successively using \cref{eq:2mix_Re_Im} and \cref{eq:IJ_4d_local} we have
    \begin{equation}
        \begin{aligned}
            \Gi{n}{e^{i\theta} \: \id_n} &\rwedgetensor \Gi{m}{\: \id_m} 
            \\
            =& \cos{\theta}(\Gi{n}{\id_n} \rwedgetensor \Gi{m}{\id_m}) + \sin{\theta} ({J_n}\Gi{n}{\id_n} \rwedgetensor \Gi{m}{\id_m}) \\
            =& \cos{\theta}( \I{2} \rktensor (\id_n \rktensor \id_m)) + \sin{\theta} ({J_n} \rktensor I_m) ( \I{2} \rktensor (\id_n \rktensor \id_m) )\\
            =& \cos{\theta}( \I{2} \rktensor (\id_n \rktensor \id_m) ) + \sin{\theta} (I_n \rktensor J_m) ( \I{2} \rktensor (\id_n \rktensor \id_m) )\\
            =& \cos{\theta}(\Gi{n}{\id_n} \rwedgetensor \Gi{m}{\id_m}) + \sin{\theta} (\Gi{n}{\id_n} \rwedgetensor J_m\Gi{m}{\id_m}) \\
            &=\Gi{n}{ \id_n} \rwedgetensor \Gi{m}{e^{i\theta}\id_m}
        \end{aligned}\quad .
    \end{equation}
    The rest follows.
\end{proof}

So the $\rwedgetensor$ is free of the phase issue faced by the $\rktensor$. Combining the mixed product property with \Cref{lem:2mix_absorbant} shows that the $\rwedgetensor$ product will `absorb' any Kronecker product it encounters, so it is replacing it in every formula once it is introduced. 
Nonetheless, it is doing it in a fashion that does not change the traces of Hermitian matrices,
\begin{equation}\label{eq:rwedgetensor_trace}
    \TrX{}{\Gi{n}{M_A} \rwedgetensor\Gi{m}{M_B}} = \TrX{}{\Gi{n}{M_A} \rktensor \Gi{m}{M_B} } = 2^2 \TrX{}{M_A}\TrX{}{M_B} \:,
\end{equation}
as can be computed using the definition \eqref{eq:2mix}. 

But its robustness and fixing the phase issue are only qualitative reasons why $\rwedgetensor$ is a better combination rule than $\rktensor$. These are model-dependent arguments and do not tell us much about the achievable distributions. There is nonetheless a `better' choice of combination rule.

As we mentioned in the previous section, on single-partite systems, the $\Gs$ map is a sensible choice of mapping between complex and real theories as it injectively maps the complex states and effects spaces into their real counterparts.
In the following, we show that it is still reasonable in light of postulate (iv). That is, we show that the composite state spaces of complex quantum theory is embedded in those of the real theory through \textit{local application} of the map $\Gs$.

The first step to show this lies in the property $\eqref{eq:tensor_herm}$: the real span of a composite complex state space can be spanned by its tensor products. It will not be the case for its image as $\Gi{nm}{\HermC{nm}} \subset \rr^{2nm \times 2nm}$ whereas $\Gi{n}{\HermC{n}} \rktensor \Gi{m}{\HermC{m}} \subseteq \rr^{4nm \times 4nm}$ for $\rtensor = \{\rktensor,\rwedgetensor\}$; the dimensions do not match. But this is not an insurmountable obstacle. To go beyond it, one can still wonder if they are still isomorphic as state spaces, i.e., whether $\Gi{nm}{\HermC{nm}} \cong \Gi{n}{\HermC{n}} \rtensor \Gi{m}{\HermC{m}}$, which is the case.
\begin{prop}\label{prop:iso_spaces}
    For $\rtensor = \{\rktensor,\rwedgetensor\}$, the real vector space $\Gi{n}{\HermC{n}} \rtensor \Gi{m}{\HermC{m}}$ is a subspace of the special symmetric matrices in $\rr^{4nm \times 4nm}$, i.e.,
    \begin{subequations}
        \begin{align}
            &\Gi{n}{\HermC{n}} \rktensor \Gi{m}{\HermC{m}} \subseteq \SYR{2nm} \:;\\
            &\Gi{n}{\HermC{n}} \rwedgetensor \Gi{m}{\HermC{m}} \subseteq \SYR{2nm} \:.
        \end{align}
    \end{subequations}
    Moreover, both subspaces are isomorphic to $\SYR{nm}$.
\end{prop}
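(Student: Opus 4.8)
The plan is to reduce both cases to a single structural observation together with a dimension count. The observation is the block-form restatement of \Cref{lem:SYM_char}: a matrix $X\in\rr^{4nm\times4nm}$ lies in $\SYR{2nm}$ if and only if $X$ is symmetric and commutes with the imaginary-unit matrix $J_{2nm}=J\otimes\id_{2nm}$; consequently, a real subspace of $\rr^{4nm\times4nm}$ is contained in $\SYR{2nm}$ as soon as each of its spanning matrices is. The dimension count is $\dim\SYR{k}=k^2$, so $\dim\SYR{nm}=(nm)^2$, and since all the spaces involved are finite-dimensional real vector spaces, a subspace is isomorphic to $\SYR{nm}$ exactly when it has dimension $(nm)^2$.

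For $\rtensor=\rwedgetensor$ I would start from \cref{eq:2mix_Re_Im}, i.e. $\Gi{n}{M_A}\rwedgetensor\Gi{m}{M_B}=\I{2}\otimes\ReP{M_A\otimes M_B}+\J{2}\otimes\ImP{M_A\otimes M_B}$. By real-bilinearity of $\rwedgetensor$ together with \eqref{eq:tensor_herm} (so that products $M_A\otimes M_B$ span $\HermC{n}\otimes\HermC{m}=\HermC{nm}$), the span of these matrices equals $\{\I{2}\otimes\ReP{H}+\J{2}\otimes\ImP{H}:H\in\HermC{nm}\}$, which is the image of $\HermC{nm}$ under the map obtained from \cref{eq:local mapping} by substituting the $4$-dimensional representation $\{\I{2},\J{2}\}$ of $\{1,i\}$ for the $2$-dimensional one $\{\id,J\}$. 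Each such matrix is symmetric ($\I{2}$ symmetric and $\ReP{H}$ symmetric; $\J{2}$ antisymmetric and $\ImP{H}$ antisymmetric) and, taking either of the two $\rr^2$ tensor factors of the ambient $\rr^{4}$ as the scalar slot of $\SYR{2nm}$, it commutes with $J_{2nm}$, because both $\I{2}$ and $\J{2}$ commute with a single-factor $J$ by \cref{eq:IJ_4d_local}. Hence the whole span lies in $\SYR{2nm}$. Finally, $H\mapsto\I{2}\otimes\ReP{H}+\J{2}\otimes\ImP{H}$ is an injective linear map (the two summands are linearly independent and together recover $\ReP{H}$, $\ImP{H}$, hence $H$), so its image has dimension $\dim\HermC{nm}=(nm)^2$ and is therefore isomorphic to $\SYR{nm}$; concretely the isomorphism is the substitution $\I{2}\mapsto\id$, $\J{2}\mapsto J$, carrying it to $\Gi{nm}{\HermC{nm}}=\SYR{nm}$.

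For $\rtensor=\otimes$ I would use that $\Gi{n}{\cdot}$ is an isomorphism $\HermC{n}\to\SYR{n}$ (\Cref{prop:algebra}), so $\Gi{n}{\HermC{n}}\otimes\Gi{m}{\HermC{m}}=\SYR{n}\otimes\SYR{m}$ has dimension $n^2m^2=(nm)^2$ — already matching $\dim\SYR{nm}$ and giving $\SYR{n}\otimes\SYR{m}\cong\HermC{n}\otimes\HermC{m}=\HermC{nm}\cong\SYR{nm}$. For the containment, expand $\Gi{n}{M_A}=\id\otimes\ReP{M_A}+J\otimes\ImP{M_A}$ and likewise $\Gi{m}{M_B}$, form the tensor product, and collect the two scalar $\rr^2$ factors to the left as per the reshuffling convention adopted in this appendix. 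The resulting $4nm\times4nm$ matrix is symmetric, being a tensor product of the symmetric matrices $\Gi{n}{M_A}$ and $\Gi{m}{M_B}$, and, taking the scalar slot of $\Gi{n}{\cdot}$ as the scalar slot of $\SYR{2nm}$, it commutes with $J_{2nm}$: in each of the two terms of $\Gi{n}{M_A}$ that slot carries $\id$ or $J$, both commuting with $J$, while $\Gi{m}{M_B}$ acts on disjoint factors. Hence $\Gi{n}{M_A}\otimes\Gi{m}{M_B}\in\SYR{2nm}$, and since $\SYR{2nm}$ is a subspace, so is the span.

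The one genuinely delicate step is the bookkeeping of tensor factors: among the several $\rr^2$ slots present in a $4nm\times4nm$ matrix, one must correctly single out the one that plays the role of the imaginary unit for $\SYR{2nm}$, and then verify commutation with the corresponding $J_{2nm}$. For $\otimes$ this is elementary, since $\id$ and $J$ manifestly commute with $J$; for $\rwedgetensor$ it relies precisely on the stabilizer-type identities \cref{eq:IJ_4d_local} obeyed by $\{\I{2},\J{2}\}$ — the very property that distinguishes $\rwedgetensor$ as the correct combination rule. Everything else is routine linear algebra: the equivalence $\SYR{k}=\{X:X=X^{T},\ J_kX=XJ_k\}$, the fact that the matrix tensor product of subspaces multiplies dimensions, and $\dim\SYR{k}=k^2$.
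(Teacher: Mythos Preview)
Your proof is correct and follows the same overall strategy as the paper: establish containment in $\SYR{2nm}$ via the characterization of \Cref{lem:SYM_char} (symmetry plus commutation with $J$), then count dimensions to get the isomorphism with $\SYR{nm}$. The execution differs in two places. First, for the commutation step the paper argues more uniformly, invoking \Cref{lem:2mix_absorbant} so that the $\otimes$ and $\rwedgetensor$ cases are linked through the projector $I_n\rwedgetensor I_m$, whereas you treat the two combination rules separately with explicit block-form calculations; your route is more transparent but duplicates some work. Second, for the dimension argument the paper uses the trace relation \cref{eq:rwedgetensor_trace} together with \Cref{prop:isometry} to exhibit an \emph{orthogonal} basis of $n^2m^2$ vectors, while you argue via injectivity of $H\mapsto\I{2}\otimes\ReP{H}+\J{2}\otimes\ImP{H}$ (for $\rwedgetensor$) and multiplicativity of tensor dimensions (for $\otimes$). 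Both are valid; your version is the more elementary, since it bypasses the inner-product structure entirely.
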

\begin{proof}
    Since the transpose distributes over both combination rules, both spaces are subspaces of the symmetric matrices. Proving that it belongs to $\SYR{2nm}$ follows by using that both combination rules also preserve commutation: if $\widetilde{M} \in \rr^{2n \times 2n}$ commutes with $J_n$ then $\widetilde{M} \rtensor \widetilde{N} \in \rr^{4nm \times 4nm}$ will commute with $J_n \rktensor I_m$ and thus will be special symmetric in accordance with \Cref{lem:SYM_char}. 
    Since both combinations behave similarly as the Kronecker product of complex matrices under the trace, \cref{eq:rwedgetensor_trace}, and since each factor is locally isometrically isomorphic to the space of Hermitian matrices (\cref{prop:isometry}), an $n^2 \times m^2$ basis can be formed by composing a basis vectors of $\SYR{n}$ with one of $\SYR{m}$ under $\rtensor$. Thus, we have obtained a set of $n^2m^2$ linearly independent and pairwise orthogonal vectors of $\SYR{2nm}$ that can be used to span a subspace isomorphic to $\SYR{nm}$.
\end{proof}

Therefore, there exist ways to linearly map the combination of $\Gi{n}{\HermC{n}}$ and $\Gi{m}{\HermC{m}}$ into $\Gi{nm}{\HermC{nm}}$. There is however essentially one way that is ``better'' than the others: the one preserving the combination rule through $\Gs$. Define this map as $\tau_{\rtensor} : \rr^{4nm \times 4nm} \rightarrow \rr^{2nm \times 2nm}$ such that $ \tau_{\rtensor}(\SYR{n} \rtensor \SYR{m}) = \SYR{nm}$. That is, such that $\forall H\in \HermC{n}, \forall L \in \HermC{m}$:
\begin{equation}
    \Gi{nm}{H \rktensor L} = \tau_{\rtensor}(\Gi{n}{H} \rtensor \Gi{m}{L}) \:.
\end{equation}

Looking at the characterization \eqref{eq:2mix_Re_Im} of the $\rwedgetensor$ combination, there is an obvious choice for this map. Using the shorthand notation $\I{2}_d := \I{2} \rktensor \id_d$ and $\J{2}_d := \J{2} \rktensor \id_d$, define
\begin{equation}\label{eq:tau_SYR}
    \tau_{\rwedgetensor}(\cdot) :=  I \rktensor \frac{\TrX{\rr^{2 \times 2}\rktensor \rr^{2 \times 2}}{\I{2}_{nm} \: \cdot}}{2} + J \rktensor \frac{\TrX{\rr^{2 \times 2}\rktensor \rr^{2 \times 2}}{\J{2}_{nm} \: \cdot}}{2} \:,
\end{equation}
where the subscript notation in the traces indicates a partial trace over only the two $\rr^{2\times 2}$ systems in which the representation of $\{1,i\}$ is defined. This yields
\begin{definition}\label{def:rdot_SYR_1}
    For $\widetilde{H} \in \SYR{n}$ and $\widetilde{L} \in \SYR{m}$, let their $\rdottensor$ combination be defined by
    \begin{equation}
        \widetilde{H} \rdottensor \widetilde{L} := \tau_{\rwedgetensor}(\widetilde{H} \rwedgetensor \widetilde{L}) \:,
    \end{equation}
    with $\tau_{\rwedgetensor}$ given by \cref{eq:tau_SYR}.
\end{definition}
Its action on a $\rwedgetensor$ combination gives
\begin{equation}\label{eq:dot_char}
    \tau_{\rwedgetensor}(\Gi{n}{H} \rwedgetensor \Gi{m}{L}) = \id \rktensor \ReP{H \rktensor L} + J \rktensor \ImP{H\rktensor L}\:,
\end{equation}
which is none other than $\Gi{nm}{L \rktensor H}$. It directly follows that 
\begin{prop}\label{prop:SYRxSYR=SYR}
    The $\rdottensor$ combination of special symmetric matrices is associative and spans larger space of special symmetric matrices like 
    \begin{equation}\label{eq:SYRxSYR=SYR}
        \SYR{n} \rdottensor \SYR{m} \cong \SYR{nm}\:.
    \end{equation}
\end{prop}
\begin{proof}
    To prove associativity, we will use the bijection between special symmetric and Hermitian matrices. Let $M_1,M_2,$ and $M_3$ be arbitrary Hermitian matrices, then
    \begin{equation}
        \begin{aligned}
        (\Gi{d_1}{M_1}&\rdottensor \Gi{d_2}{M_2}) \rdottensor\Gi{d_3}{M_3} \\=& \left(    \id \rktensor\ReP{ M_1 \rktensor M_2} + J\rktensor \ImP{  M_1\rktensor M_2}  \right) \rdottensor \left(    \id \rktensor\ReP{ M_3} + J\rktensor \ImP{  M_3}  \right) \\
        =&        \id \rktensor \left(  \ReP{M_1 \rktensor M_2 } \rktensor \ReP{ M_3} - \ImP{ M_1\rktensor M_2}\rktensor\ImP{ M_3}  \right)\\ &+
        J \rktensor \left( \ReP{ M_1\rktensor M_2} \rktensor \ImP{ M_3} - \ImP{  M_1\rktensor M_2} \rktensor \ReP{ M_3}  \right) \\
        =&        \id \rktensor \ReP{ (M_1\rktensor M_2) \rktensor M_3} + J\rktensor \ImP{ (M_1\rktensor M_2)\rktensor M_3} \:.\label{eq: assoco proof line:last}
        \end{aligned}
    \end{equation}
    Similarly, we find
    \begin{equation}
        \begin{gathered}
            \Gi{d_1}{M_1}\rdottensor (\Gi{d_2}{M_2} \rdottensor\Gi{d_3}{M_3}) =\\ \id \rktensor \ReP{ M_1\rktensor( M_2 \rktensor M_3)} + J\rktensor \ImP{ M_1\rktensor (M_2\rktensor M_3)}.\label{eq: assoco proof line:last2}
        \end{gathered}
    \end{equation}
    Therefore, by comparing equations~\labelcref{eq: assoco proof line:last,eq: assoco proof line:last2}, and taking into account the associativity of the $\rktensor$ product, associativity holds.

    To prove the isomorphism \eqref{eq:SYRxSYR=SYR}, we can use that $\Gi{d_1}{M_1} \rdottensor \Gi{d_2}{M_2} = \Gi{d_1d_2}{M_1 \rktensor M_2}$ (as shown in \Cref{eq:dot_char}), that $\Gs_{d_1d_2}$ is a bijection between $\HermC{d_1d_2}$ and $\SYR{d_1d_2}$, and that $\HermC{d_1d_2} \cong \HermC{d_1} \rktensor \HermC{d_2}$ to write 
    \begin{equation}
        \begin{gathered}
            \Gi{d_1d_2}{\HermC{d_1d_2}} \cong \Gi{d_1d_2}{\HermC{d_1} \rktensor \HermC{d_2}} \\
            \iff\\
            \Gi{d_1d_2}{\HermC{d_1d_2}} \cong \Gi{d_1}{\HermC{d_1}} \rdottensor \Gi{d_2}{\HermC{d_2}} \\
            \iff\\
            \SYR{d_1d_2} \cong \SYR{d_1} \rdottensor \SYR{d_2}
        \end{gathered} \quad.
    \end{equation}
\end{proof}
Consequently, we have defined a new associative combination rule out of this procedure, and, by construction, it will be free from the issues encountered with the Kronecker product. 
We are led to the conclusion that the $\rdottensor$ is the right representation of the tensor product for spaces of special symmetric matrices. This have to be the case by construction: the Kronecker product is the representation of the tensor product for the spaces of Hermitian matrices, and the image through $\Gs_n$ of $\HermC{n}$ and $\rktensor$ are, respectively, $\SYR{n}$ and $ \rdottensor$.

\subsection{Proof of~\Cref{lem:properties of rdottensor}\label{sec:Proof_rdot=tensor_SYR}} 
\lemmatensor*
\begin{proof}
	This proof is a concise recap of the properties presented through the~\Cref{sec:4bis_multipartite}.

	We start by proving that it is a proper representation of the tensor product of vector spaces according to \Cref{def:tensor product}. Since the combination rule $\rdottensor$ is obtained from a real-linear mapping applied to the $\rwedgetensor$ rule (\Cref{def:rdot_SYR_1}), it will inherit its real-bilinear character (namely from~\Cref{coro:rwedge_bilin_mixprod_SYR}). Consequently, $\rdottensor$ is a balanced product. 
    
    The balanced products of the form $\widetilde{H} \rdottensor \widetilde{L} \in \SYR{n} \rdottensor \SYR{m}$ can then be proven to be elements of $\SYR{nm}$. Indeed, using the characterization \eqref{eq:dot_char},
    \begin{equation}
        \widetilde{H}\rdottensor \widetilde{L} = \id \rktensor \ReP{H \rktensor L} + J \rktensor \ImP{H\rktensor L} \:,
    \end{equation}
    one can directly see that the whole expression commutes with $J_{nm}$. In addition, it must be a symmetric matrix as $\ReP{H \rktensor L} = \ReP{H}\rktensor\ReP{L} - \ImP{H}\rktensor\ImP{L}$ is a symmetric matrix provided $H$ and $L$ to be Hermitian (equivalently, provided that $\widetilde{H}$ and $\widetilde{L}$ are special symmetric) and $\ImP{H \rktensor L} = \ImP{H}\rktensor\ReP{L} + \ReP{H}\rktensor\ImP{L}$ is an antisymmetric matrix. Hence both $\id \rktensor \ReP{H \rktensor L} $ and $ J \rktensor \ImP{H\rktensor L}$ are symmetric and commute with $J_{nm}$, making it an element of $\SYR{nm}$ according to \Cref{lem:SYM_char}. One can, therefore, identify the space $\SYR{n} \rdottensor \SYR{m}$ spanned by these balanced products to a (sub)space of $\SYR{nm}$, providing a (essentially) unique way to identify $\SYR{n} \rdottensor \SYR{m}$ with $\SYR{n} \rdottensor \SYR{m} \rightarrow \SYR{nm}$.

    Next, the mixed rule property can either be seen as inherited from~\Cref{coro:rwedge_bilin_mixprod_SYR} as well, or directly from the properties of the Kronecker product, as should be clear from the above characterization of the $\rdottensor$ combination.
    
    The mixed product property will now be used to show that $\rdottensor$ is compatible with the induced inner product, i.e. to show that for any collections of matrices $\{\widetilde{H}_i\}, \{\widetilde{K}_j\} \subset \SYR{n}$, $\{\widetilde{L}_i\}, \{\widetilde{M}_j\} \subset \SYR{m}$ and scalars $\{r_i\}, \{s_j\} \subset \rr$ the following holds:
    \begin{equation}\label{eq:lemma3_IP}
        \left\langle \sum_i r_i (\widetilde{H}_i \rdottensor \widetilde{L}_i),\: \sum_j s_j (\widetilde{K}_j \rdottensor \widetilde{M}_j)\right\rangle = \sum_{i,j} r_i s_j \left\langle \widetilde{H}_i,\: \widetilde{K}_j\right\rangle\left\langle \widetilde{L}_i,\: \widetilde{M}_i\right\rangle \:.
    \end{equation}
    Replacing the brackets with the corresponding trace, the l.h.s. of the above successively becomes
    \begin{equation}
        \begin{aligned}
            &\frac{1}{2}\TrX{}{ \Big(\sum_i r_i (\widetilde{H}_i \rdottensor \widetilde{L}_i)\Big)^T  \: \Big(\sum_j s_j (\widetilde{K}_j \rdottensor \widetilde{M}_j)\Big) } = \frac{1}{2}\TrX{}{\Big(\sum_i r_i (\widetilde{H}_i \rdottensor \widetilde{L}_i)\Big)  \: \Big(\sum_j s_j (\widetilde{K}_j \rdottensor \widetilde{M}_j)\Big) } \\
            &=\frac{1}{2} \sum_{i,j} r_i s_j \TrX{}{ (\widetilde{H}_i \rdottensor \widetilde{L}_i) \: (\widetilde{K}_j \rdottensor \widetilde{M}_j) } \\
            &= \frac{1}{2}\sum_{i,j} r_i s_j \TrX{}{ (\widetilde{H}_i\widetilde{K}_j ) \rdottensor (\widetilde{L}_i\widetilde{M}_j)} \\
            &= \frac{1}{2}\sum_{i,j} r_i s_j \TrX{}{ I \rktensor \ReP{(H_iK_j ) \rktensor (L_iM_j)} + J \rktensor \ImP{(H_iK_j) \rktensor (L_iM_j)}} \\
            &= \sum_{i,j} r_i s_j \TrX{}{\ReP{(H_iK_j) \rktensor (L_iM_j)}}\\
            &=  \sum_{i,j} r_i s_j \TrX{}{\ReP{H_iK_j} \rktensor \ReP{L_iM_j} - \ImP{H_iK_j )} \rktensor \ImP{L_iM_j}} \\
            &=  \sum_{i,j} r_i s_j \TrX{}{\ReP{H_iK_j}}  \TrX{}{\ReP{L_iM_j}}\:.
        \end{aligned}
    \end{equation}
    Where we successively used the mixed product property, the definition of $\rdottensor$, the fact that the product of two Hermitian matrices is Hermitian and that the imaginary part of an Hermitian matrix is antisymmetric thus traceless.
    In a similar fashion, the r.h.s. is simplified into
    \begin{equation}
        \begin{aligned}
            &\sum_{i,j} r_i s_j\left(\frac{1}{2}\TrX{}{(\widetilde{H}_i\widetilde{K}_j )}\right)\left(\frac{1}{2}\TrX{}{(\widetilde{L}_i\widetilde{M}_j )}\right)\\ 
            &= \frac{1}{4}\sum_{i,j} r_i s_j \TrX{}{ I \rktensor \ReP{H_iK_j} + J \rktensor \ImP{H_iK_j}}\TrX{}{I \rktensor \ReP{L_iM_j} + J \rktensor \ImP{L_iM_j}} \\
            &=\sum_{i,j} r_i s_j \TrX{}{\ReP{H_iK_j} } \TrX{}{\ReP{L_iM_J}} \:.
        \end{aligned}
    \end{equation}
    Consequently, we see that both sides are the same and we conclude that \Cref{eq:lemma3_IP} does indeed hold.
    
	Finally, $\SYR{d_1} \rdottensor \SYR{d_2} \cong \SYR{d_1d_2}$ was proven in \Cref{prop:SYRxSYR=SYR} and since it is a tensor product of inner product spaces, this immediately implies the property $\dim\SYR{d_1}\dim\SYR{d_2} = \dim\SYR{d_1d_2}$ from their bases.
\end{proof}

\section{Proof of \Cref{thm:new combination rule}\label{sec:proof_theoremRQT}}
\theoremRQT*
\begin{proof}
    In accordance with the definition of a RNQT,~\Cref{def:RQT}, this proof consists in proving an injection of the \CNQT into the RNQT in a manner that preserves each postulate and all probabilities. Precisely, under the action of the pair of maps $(M:\cc^{n \times n} \rightarrow \rr^{2n \times 2n},E: \cc^{n \times n} \rightarrow \rr^{2n \times 2n})$ defined in \Cref{eq:map M,eq:map E,eq:local mapping}, we need to show that:
    \begin{enumerate}
        \item For each state $\rho$ satisfying (i)$_\cc$, there is a corresponding state $\widetilde{\rho} = M(\rho)$ satisfying (i)$_\rr$;
        \item For each effect $\effect_r$ satisfying (ii)$_\cc$, there is a corresponding state $\widetilde{\effect}_r = E(\effect_r)$ satisfying (ii)$_\rr$;  \item The probabilities are preserved:
        \begin{equation}
            \TrX{}{\effect_r \rho} = \TrX{}{E(\effect_r)M(\rho)} \:; 
        \end{equation}
        \item That the combination $\rdottensor$ is compatible with postulate (iv)$_\rr$.
    \end{enumerate}
    This is sufficient to show that at least a subset of all states and effects of the theory satisfying (i)-(iv)$_\rr$ can be utilized to reproduce all the statistics of the theory satisfying (i)-(iv)$_\cc$.

    Items 1. and 2. are directly true as consequences of the map $\Gs_n$ being a positive $*$-algebra homomorphism from $\HermP{n}$ to $\SYRP{n}$, as shown in \Cref{prop:algebra} and \Cref{thm:posvitive semi definitness conservation of the local mapping}.

    Item 3. follows as a consequence of $\Gs_n$ being an inner product space isometry up to a factor of 2, as shown in \Cref{prop:isometry}, and this factor of $2$ is absorbed in the definition of the maps $(M,E)$. 

    As for item 4., it was proven in \Cref{lem:properties of rdottensor} to hold for the special symmetric matrices, which includes the images of the complex states and effects through $M$ and $E$. 
\end{proof}

\section{A real theory with Kronecker-product representation of the tensor product}\label{app:Kronecker-product case}
We now consider the case in which one uses the standard combination rule in \CNQT, namely the Kronecker product, $\rktensor$.

\begin{theorem}\label{thm:normal tensor product}
The choice of matrix representation of the tensor product $\rktensor$, in maps $M(\cdot)$ and $E(\cdot)$, gives rise to a real theory for which:
\begin{itemize}
    \item [1)] The measurement statistics of quantum theory are preserved for special states and special measurements, namely 
    \begin{align}
	 \tr[\effect_r \rho]=[E^{-1}(\effect_r) M^{-1}(\rho)],\label{eq:inner product 2}
\end{align}
for all POVMs $\{\effect_r\}_r\in\EffR$ and density matrices $\rho\in\StatR$.
    \item [2)] If a state in $\StatR$ or POVM element in $\EffR$ is separable, then it is positive semi-definite. However, there exist entangled states in $\StatR$  and entangled POVMs in $\EffR$, which are not positive semi-definite. Hence condition C1${}_\rr$ is violated: $\StatR \not = \mathcal{S}(\SY_{n,d}(\rr))$ and  $\EffR \not= \mathcal{E}(\SY_{n,d}(\rr))$.
\end{itemize}
\end{theorem}
See~\Cref{sec:proof thm normal tensor product} for proof. This theory thus does not satisfy postulates (i)${}_\rr$ and (ii)${}_\rr$. One direct consequence of this theorem is that the non-positive-semi-definiteness of a state or effect is an entanglement witness~\cite{Horodecki1996,Terhal2000}.

Unlike with the choice of tensor-product representation $\rdottensor$ from~\Cref{A new combination rule}, the representation arising from the Kronecker-product representation cannot be used to define a real-number pure-state version. This is because~\Cref{thm:normal tensor product} tells us that at least one of the states will have at least one negative eigenvalue. In order to find a pure-state theory, one would need to take the square root of said negative eigenvalue, resulting in an imaginary number, and thus requiring the resultant theory to use complex numbers.

This, of course, is not to say that this real representation of quantum theory does not account for pure states; it merely says that said pure states must be represented by their corresponding density matrices. In many ways, even in \CNQT, the theory of merely pure states is incomplete because it cannot account for the full range of statistics required to reproduce experiments, since it lacks a representation of mixed states.

\subsection{Proof of~\Cref{thm:normal tensor product}}\label{sec:proof thm normal tensor product}
\begin{proof}
We first start by showing that the Kronecker representation $\rktensor$ implies item 1). To start with, note that since maps $E$ and $M$ are invertible by definition, it follows that if~\cref{eq:inner product 2} holds, then it also holds for all complex POVMs and for all complex states, i.e.~\cref{eq:inner product 2} holds iff
\begin{align}
    \tr[E(\effect_r') M(\rho')]=\tr[\effect_r' \rho'],\label{eq:inverted inner product}
\end{align}
holds for all $\effect_r', \rho'$ positive semi-definite in $\Herm_{n,d}(\cc)$. Moreover, due to the linearity of $M$ and the trace, it suffices to prove that equality~\cref{eq:inverted inner product} holds for all product states $\rho'=\rho_1\rktensor\rho_2\rktensor \ldots \rktensor \rho_n$ and POVM elements $\effect_r'=e_1\rktensor e_2\rktensor \ldots \rktensor e_n$. We find
\begin{align}
	&\tr[M(e_1\rktensor e_2\rktensor \ldots \rktensor e_n) M(\rho_1\rktensor\rho_2\rktensor \ldots \rktensor \rho_n)]
	\\&= \tr[\Gamma_{d_1}(e_1) \Gamma_{d_1}(\rho_1) ]   \tr[\Gamma_{d_2}(e_2) \Gamma_{d_2}(\rho_2) ] \ldots  \tr[\Gamma_{d_n}(e_n) \Gamma_{d_n}(\rho_n) ].\label{eq:fisrt trace eq}
\end{align}
Let us now focus on the $l$th term individually. By direct computation, we find
\begin{align}
	\tr[\Gamma_{d_l}(e_l) \Gamma_{d_l}(\rho_l) ]= \tr[e_l\rho_l].
\end{align}

Thus from~\cref{eq:fisrt trace eq}, we find
\begin{align}
	&\tr[M(e_1\rktensor e_2\rktensor \ldots \rktensor e_n) M(\rho_1\rktensor\rho_2\rktensor \ldots \rktensor \rho_n)]
    \\&=  \tr[e_1\rho_1] \tr[e_2\rho_2]\ldots  \tr[e_n\rho_n]
	\\&=\tr[(e_1\rktensor e_2\rktensor \ldots \rktensor e_n) (\rho_1\rktensor\rho_2\rktensor \ldots \rktensor \rho_n)],\label{eq:fisrt trace eq2}
\end{align}
thus verifying~\cref{eq:inner product 2} for all states in $\StatR$ and effects in $\EffR$ of the product-state form.

We now move on to prove the 1st part of item 2), namely that if a state is separable, then it is positive semi-definite, namely if $\rho\in\Ssep$, then $\rho \geq 0$.  For this, we recall the definition of states in $\Ssep$, namely from~\Cref{sec:Structure} we have that $\rho\in\Ssep$ is of the form
\begin{align}
	\rho=  \sum_j p_j \,\rho_j^1\otimes \rho_j^2\otimes\ldots\otimes \rho_j^n=\sum_j p_j \,\rho_j^1\rktensor \rho_j^2\rktensor\ldots\rktensor \rho_j^n,\label{eq:tensor product of RQT states}
\end{align} 
where in the last line, we have used the fact that we are using the Kronecker-product representation $\rktensor$, and where for the $l$th term it holds that $\Gamma^{-1}_{d_l}(\rho_j^l) \geq 0$, since $\Gamma^{-1}_{d_l}(\rho_j^l)$ is in $\StatCom$. Therefore, denoting $\Gamma^{-1}_{d_l}(\rho_j^l) =: \rho_\textup{Re}+\mi \rho_\textup{Im}$, it follows  (due to~\cref{eq:local mapping}) that
\begin{align}
	\rho_j^l = \Gamma_{d_l}(\rho_\textup{Re}+\mi \rho_\textup{Im})= 
\begin{pmatrix} \rho_\textup{Re} & -\rho_\textup{Im}\\ \rho_\textup{Im} & \rho_\textup{Re} \end{pmatrix}
\end{align}
where $\rho_\textup{Re}+\mi \rho_\textup{Im} \geq 0$.

We have seen in~\Cref{thm:posvitive semi definitness conservation of the local mapping} that  $\rho_\textup{Re}+\mi \rho_\textup{Im} \geq 0$  implies $\Gamma_{d_l}(\rho_\textup{Re}+\mi \rho_\textup{Im}) \geq 0$.    Therefore, since the coefficients $p_j$ in~\cref{eq:tensor product of RQT states} are non negative, and the tensor product of positive semi-definite matrices is itself positive semi-definite, we conclude that $\rho$ in~\cref{eq:tensor product of RQT states} is positive semi-definite. Thus we have proven that  if $\rho\in\Ssep$, then $\rho \geq 0$. Similarly, one can show that  if $e\in\Esep$, then $e \geq 0$.

Finally, all that is left to prove, is that  there exist entangled effects and entangled states in $\EffR$ and $\StatR$ respectively, which are not positive semi-definite.  We do this by explicit construction. Since the set $\StatR$ is a partition of the set of real separable states and real entangled states, and we have proven that all separable states are positive semi-definite, it suffices to prove that there exists a state in $\StatR$ which has at least one negative eigenvalue. (It will then follow by continuity that there are other states and POVM elements which have at least one negative eigenvalue.) This is what we will now do. The proof follows analogously for the existence of an entangled effect with at least one negative eigenvalue.

Consider the Bell state for two qubits in \CNQT:  $\ket{\Phi^{+}}= \frac{1}{\sqrt{2}} (\ket{00}+\ket{11}$. Or in density matrix form
\[
|\Phi^+\rangle\langle\Phi^+| = \frac{1}{2}
\begin{pmatrix}
	1 & 0 & 0 & 1 \\[4pt]
	0 & 0 & 0 & 0 \\[4pt]
	0 & 0 & 0 & 0 \\[4pt]
	1 & 0 & 0 & 1
\end{pmatrix}.
\]
We can compute exactly the corresponding density matrix in $\SY_{n,d}(\rr)$ for this case together with its eigenvalues. We find
\begin{align}
	\Gamma_{2}\otimes \Gamma_{2} \,(|\Phi^+\rangle\langle\Phi^+|)= 
	\left(
	\begin{array}{cccccccccccccccc}
		\frac{1}{2} & 0 & 0 & 0 & 0 & \frac{1}{8} & 0 & 0 & 0 & 0 & 0 & 0 & 0 & 0 & 0 & -\frac{1}{8} \\
		0 & 0 & 0 & 0 & \frac{1}{8} & 0 & 0 & 0 & 0 & 0 & 0 & 0 & 0 & 0 & \frac{1}{8} & 0 \\
		0 & 0 & \frac{1}{2} & 0 & 0 & 0 & 0 & \frac{1}{8} & 0 & 0 & 0 & 0 & 0 & \frac{1}{8} & 0 & 0 \\
		0 & 0 & 0 & 0 & 0 & 0 & \frac{1}{8} & 0 & 0 & 0 & 0 & 0 & -\frac{1}{8} & 0 & 0 & 0 \\
		0 & \frac{1}{8} & 0 & 0 & 0 & 0 & 0 & 0 & 0 & 0 & 0 & \frac{1}{8} & 0 & 0 & 0 & 0 \\
		\frac{1}{8} & 0 & 0 & 0 & 0 & \frac{1}{2} & 0 & 0 & 0 & 0 & -\frac{1}{8} & 0 & 0 & 0 & 0 & 0 \\
		0 & 0 & 0 & \frac{1}{8} & 0 & 0 & 0 & 0 & 0 & -\frac{1}{8} & 0 & 0 & 0 & 0 & 0 & 0 \\
		0 & 0 & \frac{1}{8} & 0 & 0 & 0 & 0 & \frac{1}{2} & \frac{1}{8} & 0 & 0 & 0 & 0 & 0 & 0 & 0 \\
		0 & 0 & 0 & 0 & 0 & 0 & 0 & \frac{1}{8} & \frac{1}{2} & 0 & 0 & 0 & 0 & \frac{1}{8} & 0 & 0 \\
		0 & 0 & 0 & 0 & 0 & 0 & -\frac{1}{8} & 0 & 0 & 0 & 0 & 0 & \frac{1}{8} & 0 & 0 & 0 \\
		0 & 0 & 0 & 0 & 0 & -\frac{1}{8} & 0 & 0 & 0 & 0 & \frac{1}{2} & 0 & 0 & 0 & 0 & \frac{1}{8} \\
		0 & 0 & 0 & 0 & \frac{1}{8} & 0 & 0 & 0 & 0 & 0 & 0 & 0 & 0 & 0 & \frac{1}{8} & 0 \\
		0 & 0 & 0 & -\frac{1}{8} & 0 & 0 & 0 & 0 & 0 & \frac{1}{8} & 0 & 0 & 0 & 0 & 0 & 0 \\
		0 & 0 & \frac{1}{8} & 0 & 0 & 0 & 0 & 0 & \frac{1}{8} & 0 & 0 & 0 & 0 & \frac{1}{2} & 0 & 0 \\
		0 & \frac{1}{8} & 0 & 0 & 0 & 0 & 0 & 0 & 0 & 0 & 0 & \frac{1}{8} & 0 & 0 & 0 & 0 \\
		-\frac{1}{8} & 0 & 0 & 0 & 0 & 0 & 0 & 0 & 0 & 0 & \frac{1}{8} & 0 & 0 & 0 & 0 & \frac{1}{2} \\
	\end{array}
	\right),
\end{align} 
with eigenvalues 
\begin{align}
	\left\{\frac{3}{4},\frac{3}{4},\frac{1}{2},\frac{1}{2},\frac{1}{2},\frac{1}{2},-\frac{1}{4},-\frac{1}{4},\frac{1}{4},\frac{1}{4},\frac{1}{4},\frac{1}{4},0,0,0,0\right\}.
\end{align}
Thus the existence of the negative eigenvalue of $-1/4$ concludes the proof.
\end{proof}

\section{Extending the alternative representations of the tensor product to the full space}\label{sec:extension}
An objection that may be raised against what is done in the previous appendix as well as in the main text is that the image through the map $\Gs_n$ of the Hermitian matrices $\HermC{n}$ is only a proper subspace of the real symmetric matrices, namely $\SYR{n}$. While we showed it to be orthogonal to the image of the Hermitian matrices, one is still legitimate in wondering \textit{what is this other part?}

Another objection is that the `real quantum theory' we present in this paper is but a representation of the complex quantum theory. One is also legitimate in requiring that we extend the theory to be consistent with the axioms (i)-(iv)$_\rr$ of the real theory, rather than those of complex quantum theory. That is, that the sets of (unnormalized) states and effects under consideration are the full space of real positive semi-definite matrices, instead of the image of the complex positive semi-definite matrices within it.

In this appendix, we address both these lines of concern by showing how to extend the definition of the $\rwedgetensor$ and $\rdottensor$ combination rules so that our real quantum theory can be taken as a stand-alone theory defined over real Hilbert spaces.
More precisely, we will introduce a generalization of~\Cref{def:rdot_SYR} so as to extend it from a combination rule restricted on spaces of special symmetric matrices ($\rdottensor : \SYR{d_1} \times \SYR{d_2} \rightarrow \SYR{d_1d_2}$) to one between any pair of even-dimensional real matrices ($\rdottensor : \rr^{2d_1 \times 2d_1} \times \rr^{2d_2\times 2d_2} \rightarrow \rr^{2d_1d_2\times 2d_1d_2}$).

\subsection{Restatement of the problem, preliminary remarks on dimensionality, and the notion of tensor product of sets}

In the previous appendix, we presented the map $\Gs$ as acting on the `complex scalar' multiplying a real matrix to turn it into a complex matrix, i.e. as if $\Gs$ primarily acts on the $\cc$ part of $\cc^{n \times n} \cong \cc \otimes \rr^{n \times n}$. This mapping then consists in turning the complex unit into a matrix through the well-known isomorphism $\{1,i\} \mapsto \{\id{}, J\}$ in the decomposition $\cc^{n \times n} \ni M = 1 \times \ReP{M} + i \times \ImP{M} \mapsto \id \rktensor \ReP{M} + J \rktensor \ImP{M} = \Gamma(M) \in \rr^{2n \times 2n}$. This yielded a representation of complex matrices in which the image of the Hermitian matrices forms the real subspace of symmetric matrices $\SYR{n}$ characterized by $\SYR{n} = \{\widetilde{H} \in \rr^{2n \times 2n} | \widetilde{H} = \id \rktensor A + J \rktensor B, A \in \SymX{n}{\rr}, B\in \ASymX{n}{\rr}\}$. In this space, the set of (complex) quantum states was proven to correspond to a trace-normalized slice of its positive cone $\SYR{n}^+$.

The main question driving the narrative was then how to combine two local states into a global bipartite one?
As the Kronecker product $\rktensor$ led to phase issues, we instead derived a first candidate to the combination rule, $\rwedgetensor$, which is a real-bilinear map compatible with the inner product and that obeys the mixed product property. It was however not totally satisfactory as it was not a combination rule for the state space in this subspace, meaning that the elements obtained after using $\rwedgetensor$ are not contained in the corresponding state space, i.e. $\widetilde{H} \rwedgetensor \widetilde{L} \notin \SYR{nm}^+$, $\forall \widetilde{H} \in \SYR{n}^+, \forall \widetilde{L} \in \SYR{m}^+$ since $\widetilde{H} \rwedgetensor \widetilde{L} \in \rr^{4nm \times 4nm}$ whereas $\SYR{nm}^+ \subset \rr^{2nm\times2nm}$. 
Nonetheless, real-linear combinations of the matrices $\widetilde{H} \rwedgetensor \widetilde{L}$ still span a vector space isomorphic to $\SYR{nm}$ within $\SYR{2nm}$. This is one of the reasons why the second combination rule, $\rdottensor$, has been developed: it has the property $\SYR{n} \rdottensor \SYR{m} = \SYR{nm}$. 

The question, then, is what happens when we look at the full state space of the real theory in which we embedded this representation of the complex quantum theory -- $\SymX{2nm}{\rr}$, namely? 

For a quantum theory defined over a complex Hilbert space of dimension $n$, we are dealing with not one but three relevant spaces to be considered for its real Hilbert space counterpart: 1) the space of special symmetric matrices $\SYR{n}$; 2) the space of symmeric matrices $\SymX{2n}{\rr}$; and 3) the space of real matrices $\rr^{2n \times 2n}$. (It should be clear that these three spaces are real vector spaces contained within one another, $\SYR{n} \subset \SymX{2n}{\rr} \subset \rr^{2n \times 2n}$.) 
In terms of interpretation, the first is the vector space containing the representation of the state space of the complex quantum theory, the second is the vector space containing the cone of positive matrices therefore the full state and effect spaces of the real theory, and the third is simply the matrix space over which the representation of both theories are defined. 

Now, given two copies of any of these spaces, one can wonder how to compose them into a global bipartite space. 
If one required the tensor product to be represented by the Kronecker product regardless of the base field, there is a direct answer, as this bilinear mapping is defined for all real matrices. But while the Kronecker product $\rktensor$ has a broad enough definition for it to be meaningfully defined on each of the three spaces, it is not the case for the combination rules $\rwedgetensor$ and $\rdottensor$; one needs to extend their definitions to real matrix spaces.

A priori, an immediate problem lies in the special symmetric matrices being defined only over spaces of even-dimensional matrices. While it would certainly be interesting to discuss odd-dimensional matrix representations of $\{1,i\}$ such that the representatives are orthogonal (w.r.t. the Frobenius inner product) to one another and that the representative of $i$ is antisymmetric, one can instead ignore this issue altogether. Indeed, following the arguments put forward in Ref. \cite{Renou2021}, as there is no way to witness the dimension of the underlying spaces, this should not matter. It suffices to prove that the theory is consistent in space of even-dimensional real matrices and, for our purposes, to further show that the correlations obtained by complex quantum theory are achievable in these spaces to ``unfalsify quantum theory based on real numbers''; we nevertheless leave the liberty to the reader to conclude whether `Nature needs even dimensions'.

Another a priori issue can be quickly identified by looking at the dimension of these spaces. Take $\rdottensor$ since it is the `nicely' behaved combination rule for the special symmetric matrices. One can first wonder if there is a way to extend the domain and codomain of $\rdottensor$ from $\SYR{n} \times \SYR{m} \rightarrow \SYR{nm}$ to $\SymX{2n}{\rr} \times \SymX{2m}{\rr} \rightarrow \SymX{2mn}{\rr}$. 
We showed that for special symmetric matrices, $\mathrm{dim(}\SYR{n} \rdottensor \SYR{m})$ = $ \mathrm{dim}(\SYR{n}) \mathrm{dim}(\SYR{m}) = \mathrm{dim}(\SYR{nm})$. But if this property were to carry to the full space, any extension would obviously be a surjection as follows from a dimensionality argument: the codomain is a real vector space of dimension $nm(2nm+1)$ while the domain is one of dimension $n(2n+1)\times m(2m+1) = nm(2nm+1) + nm(2nm + 2n + 2m)$. 
Similarly, the extension to real matrices $\rr^{2n\times 2n} \times \rr^{2m \times 2m} \rightarrow \rr^{2mn \times 2mn}$ would also be a surjection as the codomain is of dimension $4n^2m^2$ while the domain is of dimension $16n^2m^2$.

Fortunately, we know that this property of the Kronecker product does not hold in general for the new products. For instance, we already showed that $\SYR{n} \rwedgetensor \SYR{m} \simeq \SYR{nm}$ while being a subspace of $\SYR{2nm}$. Hence, that  $\mathrm{dim(}\SYR{n} \rwedgetensor \SYR{m})$ = $ \mathrm{dim}(\SYR{n}) \mathrm{dim}(\SYR{m}) \neq \mathrm{dim}(\SYR{2nm})$. 
This is a symptom of the difference there exists between the 3 kinds of spaces listed above: they agree for complex quantum theory represented as positive semi-definite matrices combined via the Kronecker product. This is very convenient, but it fools us into forgetting to specify which spaces we are taking the tensor products of. In the real theory, this agreement is lifted, and the specification of exactly which spaces, algebras, cones, etc. are being joined through the tensor product becomes important. Concretely, with respect to the rules $\rktensor, \rwedgetensor; \rdottensor$, it will become important to identify for which space they connect together they happen to be a representation of the tensor product (if they happen to be one).

\subsection{Some more matrix decomposition}
Before proceeding forward with this appendix, a second matrix-theory detour is needed to present some relevant methods. Compared to the one of \Cref{sec:matrix_decompo}, the material presented here is non-standard and possibly new.

\begin{definition} \label{def:real_matrices_types}
We use the following terminology\footnote{All words but \textit{tetradionic} are standard. We followed Weyl's idea of doing a calque from Latin to Greek of the word `complex' to coin the word `symplectic' to construct the term `tetradionic'. It comes from the calque of `quaternion' into the word `tetradion'. The word quaternion indeed comes from the Latin \textit{quaterni{\=o}}, made of \textit{quatern{\-i}} (``four each'') and the suffix \textit{-i{\=o}} (used to turn the numeral into a noun), so we use the Greek counterpart suggested by Wiktionary, \textit{tetradion} (our transliteration), made of \textit{tetr\'ad} (``four'') and the suffix -\textit{ion}.}. \\
\begin{subequations}
    \begin{minipage}[t]{0.48\linewidth}
    The matrices $M \in\mathbb{R}^{2n\times 2n}$ and $S \in\mathbb{R}^{2n\times 2n}$ are called
    \begin{align}
        \text{symmetric when}&&\vphantom{S^T J_n = -J_n S^{-1}} M^T = M \\
        \text{antisymmetric when}&&\vphantom{S^T J_n = -J_n S^{-1}} M^T = - M \\
        \text{tetradionic when} &&\vphantom{S^T J_n = -J_n S^{-1}} MJ_n = J_nM \\
        \text{antitetradionic when} &&\vphantom{S^T J_n = -J_n S^{-1}} MJ_n = -J_nM 
    \end{align}
    \end{minipage}%
    \begin{minipage}[t]{0.52\linewidth}
    \begin{align}
        &\text{; orthogonal when} &\vphantom{S^T J_n = -J_n S^{-1}} S^T = S^{-1};\\
        &\text{; antiorthogonal when} &\vphantom{S^T J_n = -J_n S^{-1}} S^T = - S^{-1};\\ 
        &\text{; symplectic when} &\vphantom{S^T J_n = -J_n S^{-1}} S^T J_n = J_n S^{-1};\\
		&\text{; antisymplectic when} & S^T J_n = -J_n S^{-1} .
    \end{align}
    \end{minipage}
\end{subequations}
\end{definition}
We already defined $(\mathsf{A})\SymX{2n}{\rr}$ as the set of all (anti)symmetric matrices on $\rr^{2n\times 2n}$. Accordingly, we denote $(\mathsf{A})\TetrX{n}{\rr}$ the sets of all (anti)tetradionic matrices on $\rr^{2n\times 2n}$. (Similarly to symplectic matrices, the dimension of tetradionic matrices is always assumed doubled.) 
It is quite evident that these sets are subspaces of $\rr^{2n \times 2n}$. One can construct the projectors on these four subspaces. Let $M\in \rr^{2n \times 2n}$ and let $\mathcal{I},\mathcal{T}$ and $\mathcal{J}$ be such that $\mathcal{I}(M) = M$, $\mathcal{T}(M) = M^T$ and $\mathcal{J}(M) = J_n M J_n$, then the following four $\rr^{2n \times 2n} \rightarrow \rr^{2n \times 2n}$ mappings:
\begin{subequations}\label{eq:projectors_subspace_real}
    \begin{align}
        \mathrm{Sym} := \frac{\mathcal{I} + \mathcal{T}}{2}&:&M \mapsto \SymPart{M} = \frac{M + M^T}{2} \:;\\
        \mathrm{ASym} := \frac{\mathcal{I} - \mathcal{T}}{2}&:&M \mapsto\ASymPart{M} = \frac{M - M^T}{2} \:;\\
        \mathrm{Tetr} := \frac{\mathcal{I} - \mathcal{J}}{2}&:&M \mapsto\TetrPart{M} = \frac{M - J_n M J_n}{2} \:;\\
        \mathrm{ATetr} := \frac{\mathcal{I} + \mathcal{J}}{2}&:&M \mapsto\ATetrPart{M} = \frac{M + J_n M J_n}{2} \:.
    \end{align}
\end{subequations}
are so that the set $\SymPart{\rr^{2n \times 2n}}:= \{\frac{M + M^T}{2}| M \in \rr^{2n \times 2n}\}$ is exactly $\SymX{2n}{\rr}$ and the analog holds for the other three subspaces mutatis mutandis. Working out the relations between these projectors yields the following.
\begin{lemma}\label{lem:projectors_real_subspaces}
    The mappings defined in~\cref{eq:projectors_subspace_real} are projectors and verify the following:
    \begin{subequations}
    \begin{align}
        &\SymPart{M} + \ASymPart{M} = M \:; \\
        &\ASymPart{\SymPart{M}} = \SymPart{\ASymPart{M}} = 0 \:;\\
        &\TetrPart{M} +\ATetrPart{M} = M \:; \\
        &\ATetrPart{\TetrPart{M}} = \TetrPart{\ATetrPart{M}} = 0 \:;\\
        &\SymPart{\TetrPart{M}} = \TetrPart{\SymPart{M}} \:;\\
        &\ASymPart{\TetrPart{M}} = \TetrPart{\ASymPart{M}} \:;\\
        &\SymPart{\ATetrPart{M}} = \ATetrPart{\SymPart{M}} \:;\\
        &\ASymPart{\ATetrPart{M}} = \ATetrPart{\ASymPart{M}} \:;
    \end{align}
\end{subequations}
\end{lemma}
The proof is obtained by direct computation using the definitions~\eqref{eq:projectors_subspace_real}.
According to~\Cref {lem:SYM_char}, the subspace of special symmetric matrices is the intersection of the symmetric and tetradionic ones, i.e., $\SYR{n} = \SymX{2n}{\rr} \cap \TetrX{n}{\rr}$. In accordance with this and the previous lemma, we define the following four spaces
     \begin{subequations}
         \begin{align}
            & \SYR{n} := \SymX{2n}{\rr} \cap \TetrX{n}{\rr} \:; \\
            & \nSYR{n} := \SymX{2n}{\rr} \cap \ATetrX{n}{\rr} \:; \\
            & \ASYR{n} := \ASymX{2n}{\rr} \cap \TetrX{n}{\rr} \:; \\
            & \nASYR{n}:= \ASymX{2n}{\rr} \cap \ATetrX{n}{\rr} \: \:.
        \end{align}
     \end{subequations}

These have been chosen because of the decomposition of the space of real matrices they induce as a cascading corollary of the previous lemma: 
\begin{corollary}\phantom{.}
    \begin{enumerate}
    \item With respect to the Frobenius inner product,
    \begin{enumerate}
        \item the spaces $\SymX{2n}{\rr}$ and $\ASymX{2n}{\rr}$ are orthogonal;
        \item the spaces $\TetrX{n}{\rr}$ and $\ATetrX{n}{\rr}$ are orthogonal;
        \item the spaces $\SYR{n}$, $\nSYR{n}$, $\ASYR{n}$, and $\nASYR{n}$ are pairwise orthogonal.
    \end{enumerate}
    \item The following decompositions of spaces hold:
    \begin{subequations}
        \begin{align}
            \SymX{2n}{\rr} &\cong \SYR{n} \oplus \nSYR{n} \:;\\
            \ASymX{2n}{\rr} &\cong \ASYR{n} \oplus \nASYR{n} \:;\\
            \rr^{2n \times 2n} &\cong \SymX{2n}{\rr} \oplus \ASymX{2n}{\rr} \:;\\
            \rr^{2n \times 2n} &\cong \TetrX{n}{\rr} \oplus \ATetrX{n}{\rr} \:;\\
            \rr^{2n \times 2n} &\cong \SYR{n} \oplus \nSYR{n} \oplus \ASYR{n} \oplus \nASYR{n} \:. \label{eq:real_space_decomposition}
        \end{align}
    \end{subequations}
    \item One has:
        \begin{subequations}
            \begin{align}
                & \dim(\SYR{n}) = \dim(\ASYR{n}) = n^2 \:;\\ 
                & \dim(\nSYR{n}) = n(n + 1) \:;\\
                & \dim(\nASYR{n}) = n(n - 1) \:; \\
                & \dim(\SymX{2n}{\rr}) = n(2n+1) \:;\\
                & \dim(\ASymX{2n}{\rr}) = n(2n-1) \:;\\
                & \dim(\TetrX{n}{\rr}) = 2n^2 \:;\\
                & \dim(\ATetrX{n}{\rr}) = 2n^2 \:.
            \end{align}
        \end{subequations}
    \end{enumerate}
\end{corollary}
\begin{proof}
    If one is not convinced that \Cref{lem:projectors_real_subspaces} $\Rightarrow$ 1., this item can also be proven explicitly using the definitions. Then 1. $\Rightarrow$ 2. using relations like e.g. $(\SymX{2n}{\rr} \cap \TetrX{n}{\rr}) \oplus (\SymX{2n}{\rr} \cap \TetrX{n}{\rr}) = \SymX{2n}{\rr} \oplus (\TetrX{n}{\rr} \cap \ATetrX{n}{\rr})$ to show $\SYR{n} \oplus \nSYR{n} = \SymX{2n}{\rr}$ and $\ASymPart{M} + \SymPart{M} = M$ to show $\SymX{2n}{\rr} \oplus \ASymX{2n}{\rr} = \rr^{2n \times 2n}$. Finally 2. $\Rightarrow$ 3. from simple counting arguments starting from the spaces whose dimensions are already known.
\end{proof}

\begin{prop}\label{prop:real_matrix_decomposition}
    Let $M$ be a matrix in $\rr^{2n \times 2n}$, this matrix can be decomposed into the following eight pairwise orthogonal pieces:
    \begin{equation}\label{eq:real_matrix_decomposition}
        M = I \rktensor A + J \rktensor B + X \rktensor E + Z \rktensor F +  I \rktensor D + J \rktensor C + X \rktensor H + Z \rktensor G \:.
    \end{equation}
    where $A,C,E,F \in \SymX{n}{\rr}$ and $B,D,G,H \in \ASymX{n}{\rr}$. In particular, the parts implied by the decomposition \eqref{eq:real_space_decomposition} can be identified with the following elements of $M$:
    \begin{subequations}\label{eq:real_matrix_decomposition_parts}
        \begin{align}
            &\SymPart{\TetrPart{M}} = I \rktensor A + J \rktensor B  &\in \SYR{n} \:;\\
            &\SymPart{\ATetrPart{M}} = X \rktensor E + Z \rktensor F &\in \nSYR{n} \:;\\
            &\ASymPart{\TetrPart{M}} = I \rktensor D + J \rktensor C &\in \ASYR{n} \:;\\
            &\ASymPart{\ATetrPart{M}} = X \rktensor H + Z \rktensor G & \in \nASYR{n}\:.
        \end{align}
    \end{subequations}
\end{prop}
\begin{proof}
    The latter part \eqref{eq:real_matrix_decomposition_parts} is direct to prove using the relations $\mathcal{T}(I) = \mathcal{J}(I )= I$,  $\mathcal{T}(J) = \mathcal{J}(J)= -J$, $\mathcal{T}(X) = \mathcal{J}(X)= X$, and $\mathcal{T}(Z) = \mathcal{J}(Z)= -Z$. 
    
    The complete decomposition is obtained by realizing that the four projectors are pairwise orthogonal (using \Cref{lem:projectors_real_subspaces}) and sum up to the identity projection: $\mathcal{I} = \mathrm{Sym} \circ \mathrm{Tetr} + \mathrm{Sym} \circ \mathrm{ATetr} + \mathrm{ASym} \circ \mathrm{Tetr} +\mathrm{ASym} \circ \mathrm{ATetr}$.
\end{proof}

\subsection{Towards defining the extended combination rule}
The particular form of the decomposition in \Cref{prop:real_matrix_decomposition} was chosen because the Pauli matrices appear in it, the algebra of which is well-known and has a discrete group structure that we will leverage to build our tensor product. This connection with the representation of an implicit discrete group will be discussed in the subsection. For the time being, we now have enough background to justify the extended form of the $\rwedgetensor$-combination rule and to add some more motivation why one should use it instead of the Kronecker product. 

With it, we can make it clearer why the Kronecker product is not the appropriate tensor product when representing the complex matrices within the reals. There are two relevant decompositions of a real matrix: into its symmetric and antisymmetric parts, as the former will contain the positive cone, and into its tetradionic and antitetradionic parts, as the former will contain the image of the complex matrices since it is the part that commutes with the representation of $i$.

Focusing on the symmetric and the tetradionic parts, it directly follows from $M^T \rktensor N^T = (M \rktensor N)^T$ that the Kronecker product of two symmetric matrices remains symmetric, i.e., $\SymX{2n}{\rr} \rktensor \SymX{2m}{\rr} \subset \SymX{4nm}{\rr}$. Under the same logic, the Kronecker product of two tetradionic matrices is tetradionic as well, i.e., $\TetrX{n}{\rr} \rktensor \TetrX{m}{\rr} \subset \TetrX{2nm}{\rr}$. However, the special symmetric matrices, i.e., those which are both tetradionic and symmetric, cannot keep both properties when taken in Kronecker products,  i.e., $\SYR{n} \rktensor \SYR{m} \subset \SYR{2nm}$. 

As we have proven in another section, going to the ad hoc representation of $\{1,i\}$ in the tensor product space $\rr^{2n \times 2n} \otimes \rr^{2m \times 2m}$ amounts to replacing the unit with the irreducible representation of the group identity $\I{2} \in \rr^{4\times4}$ extended by the Kronecker product with the units, i.e. when requiring the substitution $I_n \rktensor I_m = \id_{2n} \rktensor \id_{2m} \rightarrow \I{2}_{nm} = \I{2} \rktensor \id_n \rktensor \id_{m} $. By \Cref{lem:2mix_absorbant}, this procedure has the nice property that it replaces any Kronecker product with the other tensor product $\rwedgetensor$. Indeed, $A\rktensor B = (I_n \rktensor I_m)(A \rktensor B) \mapsto (I_n \rwedgetensor I_m)(A \rktensor B) \overset{\eqref{eq:2mix_absorbant}}{=} A \rwedgetensor B$. 

We now turn this property into a definition to extend the domain and codomain of $\rwedgetensor$ from $\SYR{n} \times \SYR{m} \rightarrow \SYR{nm}$ to $\rr^{2n \times 2n} \times \rr^{2m \times 2m} \rightarrow \rr^{4nm \times 4nm}$. 
However, and in comparison to the $\SYR{n} \times \SYR{m}$ case, the obtained elements are no longer automatically tetradionic, therefore they might not commute with $\I{2}_{nm}$. Hence, in the extended definition, the Kronecker products have to be multiplied by the representation of the identity element from each side. We arrive to the matrix $A \rwedgetensor B \in \rr^{4nm \times 4nm}$ defined as 
    \begin{equation} \label{eq:rwedgetensor_2}
        A \rwedgetensor B := \I{2}_{nm} (A \rktensor B) \I{2}_{nm}\:,
    \end{equation}
where $A \in \rr^{2n \times 2n}$ and $B \in \rr^{2m \times 2m} $ be two matrices. 
This definition amounts to projecting the Kronecker product into the $(2nm)^2$-dimensional subspace defined by the rank-$2nm$ projector $\I{2}_{nm}$.

We posit this simple definition because it is the most economical way to extend our tensor product representation: we already showed that it will contain the irreducible $4$-dimensional representation of $\{1,i\}$ and that it is the adequate tensor product for special symmetric matrices. In the extended space, we only need to show that it is a reasonable notion of a combination rule. 
But to show this, it remains to generalize the definition for multiple parties and then to study its properties. 
In order to do this, we need to make another short group-theoretic-detour to look at the group properties of the matrices appearing in the decomposition \eqref{eq:real_matrix_decomposition}.

\subsection{Irreducible representation of the dihedral group of order 8 on even-dimensional real matrices}

To keep this appendix streamlined, the technical presentation of the relevant group theory and all the proofs for this section have been regrouped to a different appendix, \cref{sec:D_8_extras}.

The group generated by $\{1,i\}$ under multiplication and represented by the matrices $\{I , J\}$ is $\zz_4$, the cyclic group of order four. In the decomposition \eqref{eq:real_matrix_decomposition}, these two matrices appear alongside two other matrices $X,Z$, so we take the set $\{I, J, X, Z\}$ as the representation of a bigger group that contains $\zz_4$ as a subgroup. 

Since these four matrices are none other than the `real version' of the Pauli matrices obtained after the substitution $Y \mapsto J = -iY$, the bigger group is the finite group generated by these matrices under multiplication. This group is $D_8 = \zz_2 \rtimes \zz_4$, the diehedral group of order 8, and $\{I, J, X, Z\}$ provides an irreducible representation of it (see \cref{prop:sigma=D8}).

Going into the tensor product space, $\rr^{2\times 2} \otimes \rr^{2\times 2} \cong \rr^{4 \times 4}$, taking the Kronecker products of two representations of $D_8$ in $\rr^{2 \times 2}$, $\{ \pm I, \pm J, \pm X, \pm Z\} \rktensor \{\pm I, \pm J, \pm X, \pm Z\}$ yields a set of 32 elements which certainly contain more than one representation of $D_8$ in $\rr^{4 \times 4}$. In contrast, the $\rwedgetensor$-composition of the two representations combines the 32 elements into (after some computations) the following five distinct elements (the $\pm$ signs have been removed for conciseness).
\begin{subequations} 
    \begin{align}\label{eq:D8_wedge_2}
        \I{2}  :=& I \rwedgetensor I = - J \rwedgetensor J &= \frac{1}{2}(I \rktensor I - J \rktensor J)  \:;\\
        \J{2} :=& J \rwedgetensor I = I \rwedgetensor J &= \frac{1}{2}(I \rktensor J + J \rktensor I)  \:; \\
        \X{2} :=& X \rwedgetensor X = - Z \rwedgetensor Z &= \frac{1}{2}(X \rktensor X - Z \rktensor Z ) \:; \\
        \Z{2} :=&  X \rwedgetensor Z = Z \rwedgetensor X &= \frac{1}{2}(X \rktensor Z + Z \rktensor X) \:; \\
        & I \rwedgetensor X = X \rwedgetensor I = X \rwedgetensor J = J \rwedgetensor X & \notag \\
        &= I \rwedgetensor Z = Z \rwedgetensor I = Z \rwedgetensor J = J \rwedgetensor Z &= 0 \:. \label{eq:D8_wedge_2_cross_terms}
    \end{align}
\end{subequations}
Hence, it automatically provides four non-zero elements that do indeed form an irreducible four-dimensional representation (see \cref{prop:sigma2=D8}). 

The interesting bit is that the $\rwedgetensor$ automatically forbids cross terms, meaning that every term in equation \eqref{eq:D8_wedge_2_cross_terms} is equal to 0. At the level of the matrices this representation accompanies, it means that in a matrix $M \rwedgetensor N$, the tetradionic parts of $M$ and $N$ will never interact with the corresponding antitetradionic parts. In particular, that each special and non-special symmetric parts do not get mixed; i.e., $\SYR{n} \rwedgetensor \nSYR{m} = \nSYR{n} \rwedgetensor \SYR{m} = \emptyset$.

Remark that the extended representation loses its locality property for the $\X{2}$ and $\Z{2}$ parts. By this, we mean that local application of one of the matrices on $\X{2}$ does not act like a global application. For example, $(I \rktensor Z)\X{2} \neq \Z{2}\X{2} $ or $(X \rktensor X) \Z{2} \neq \I{2}\Z{2}$. This is in contrast to Equations \eqref{eq:IJ_4d_local}, but since the $\{\X{2},\Z{2}\}$ part of the space does not have an interpretation as an embedding of complex quantum theory into the reals, there is no reason to require this property to hold in the first place. 
A longer discussion on the locality issue and the related reason why the $\rwedgetensor$ automatically provides an irreducible representation of $D_8$ is provided in \cref{sec:D_8_whyrcompo}.

Finally, we extend the representation to an arbitrary number of parties as the $n$-partite generalization of \eqref{eq:rwedgetensor_2} would require it. To do so, we can infer a recursive rule from the above bipartite representation to define the $n$-partite representation of $D_8$. On $\bigotimes_n \rr^{2 \times 2} \cong \rr^{2^n \times 2^n}$, we define the matrices used in the representation via the following recursive rules:
\begin{subequations} 
    \begin{align}\label{eq:D8_wedge_n}
        \I{n}  :=&  \frac{1}{2}(\I{n-1} \rktensor I - \J{n-1} \rktensor J)  \:;\\
        \J{n} :=&  \frac{1}{2}(\I{n-1} \rktensor J + \J{n-1} \rktensor I)  \:; \\
        \X{n} :=&  \frac{1}{2}(\X{n-1} \rktensor X - \Z{n-1} \rktensor Z ) \:; \\
        \Z{n} :=&   \frac{1}{2}(\X{n-1} \rktensor Z + \Z{n-1} \rktensor X) \:.
    \end{align}
\end{subequations}
These matrices generate a $2^n$-dimensional irreducible representation of $D_8$.

Remark that the recursive relations do not depend on the ordering of the factors. E.g., 
\begin{subequations}
    \begin{align}
        \I{n}  =&  \frac{1}{2}(\I{n-1} \rktensor I - \J{n-1} \rktensor J) = \frac{1}{2}(I \rktensor \I{n-1}-  J \rktensor \J{n-1})  \label{eq:I_n}\:;\\
        \J{n} =&  \frac{1}{2}(\J{n-1} \rktensor I + \I{n-1} \rktensor J) = \frac{1}{2}(J \rktensor \I{n-1}  + I \rktensor \J{n-1})  \:; \\
        \X{n} =& \frac{1}{2}(\X{n-1} \rktensor X - \Z{n-1} \rktensor Z ) =   \frac{1}{2}(X \rktensor \X{n-1}- Z \rktensor \Z{n-1} ) \:; \\
        \Z{n} =&  \frac{1}{2}(\X{n-1} \rktensor Z + \Z{n-1} \rktensor X)=\frac{1}{2}(  X \rktensor \Z{n-1} +Z \rktensor \X{n-1} ) \:.
    \end{align}
\end{subequations}
Also, remark that the local properties of $I$ and $J$ carry through the $n$-partite representation. One can indeed show equalities like
\begin{subequations}
    \begin{align}
        &(J \rktensor I \rktensor I \rktensor \ldots \rktensor I \rktensor I) \I{n} = (\J{n-1} \rktensor I) \I{n} = (\I{n-1} \rktensor J) \I{n} = \J{n} \:;\\
        &(J \rktensor J \rktensor I \rktensor \ldots \rktensor I \rktensor I) \J{n} = (\J{n-1} \rktensor J) \J{n} = (J \rktensor \J{n-1}) \I{n} = - \J{n}  \:;
    \end{align}
\end{subequations}
etc. (The proofs are but recursions of the bipartite proof, see~\Cref{eq:IJ_4d_local}.)

With this, we have enough background to define the $\rwedgetensor$ combination rule for the general case and to subsequently derive the $\rdottensor$ rule from it.

\subsection{Definition and properties of the extended combination rule}

We begin with the $\rwedgetensor$ combination. Its definition is obtained by generalizing to $n$ parties the initial proposition made in \Cref{eq:rwedgetensor_2}.
\begin{definition} \label{def:rwedgetensor}
    Let $M_1 \in \rr^{2d_1 \times 2d_1}$, $M_2 \in \rr^{2d_2 \times 2d_2} $, ..., and $M_n \in \rr^{2d_n \times 2d_n}$ be $n$ matrices. Their $\rwedgetensor$-combination is given by the matrix $M_1 \rwedgetensor M_2 \rwedgetensor ... \rwedgetensor M_n \in \rr^{2^n d_1 ... d_n \times 2^n d_1 ... d_n}$ defined as 
    \begin{equation}
        M_1 \rwedgetensor M_2 \rwedgetensor ... \rwedgetensor M_n := \I{n}_{d_1d_2 ... d_n} (M_1 \rktensor M_2 \rktensor ... \rktensor M_n) \I{n}_{d_1d_2 ... d_n}\:,
    \end{equation}
    where $\I{n}_{d_1d_2 ... d_n}$ is equivalent, up to rearrangements of tensor factors, to $\I{n} \rktensor \id_{d_1} \rktensor \id_{d_2} \rktensor \ldots \rktensor \id_{d_n}$ and with $\I{n}$ given by \Cref{eq:I_n}.
\end{definition}

Firstly, it is direct to show that it is associative.
\begin{prop}\label{prop:rwedge_bilin}
    The $\rwedgetensor$-combination is an associative and real-bilinear map.
\end{prop}
\begin{proof}
    Bilinearity comes directly from the Kronecker product. As for associativity, let $M_1 \in \rr^{2d_1 \times 2d_1}$, $M_2 \in \rr^{2d_2 \times 2d_2} $, and $M_3 \in \rr^{2d_3 \times 2d_3}$ be arbitrary matrices, then
    \begin{equation}
        \begin{aligned}
            (M_1 \rwedgetensor M_2) \rwedgetensor M_3=& \frac{1}{2}\big( \I{2}_{d_1d_2} \rktensor I_{d_3} - \J{2}_{d_1d_2} \rktensor J_{d_3} \big) \\
            &\big[(\I{2}_{d_1d_2}(M_1 \rktensor M_2)\I{2}_{d_1d_2}) \rktensor M_3\big] \frac{1}{2}\big( \I{2}_{d_1d_2} \rktensor I_{{d_3}} - \J{2}_{d_1d_2} \rktensor J_{d_3} \big) \\
            =& \I{3}_{d_1d_2{d_3}}\big[(\I{2}_{d_1d_2}(M_1 \rktensor M_2)\I{2}_{d_1d_2}) \rktensor M_3\big]\I{3}_{d_1d_2{d_3}} \\
            =& \I{3}_{d_1d_2{d_3}}\big[(\I{2}_{d_1d_2} \rktensor I_{d_3}) (M_1 \rktensor M_2 \rktensor M_3) (\I{2}_{d_1d_2} \rktensor I_{d_3})\big]\I{3}_{d_1d_2{d_3}} \:
        \end{aligned}
    \end{equation}
as follow from the definitions and factorizing the $\I{2}_{d_1d_2}$ terms in the last equality. One then uses the following locality property of $\I{n}$:
    \begin{equation}
        \I{3} = \I{3} (\I{2} \rktensor I) = \I{3} (I \rktensor \I{2}) =(I \rktensor \I{2}) \I{3} =  (\I{2} \rktensor I) \I{3}
    \end{equation}
in order to `shift' the $n=2$ representation onto $M_2$ and $M_3$ such that: 
    \begin{equation}
        \begin{aligned}
            (M_1 \rwedgetensor M_2) \rwedgetensor M_3 
            =& \I{3}_{d_1d_2{d_3}}\big[(\I{2}_{d_1d_2} \rktensor I_{d_3}) (M_1 \rktensor M_2 \rktensor M_3) (\I{2}_{d_1d_2} \rktensor I_{d_3})\big]\I{3}_{d_1d_2{d_3}} \\
            =& \I{3}_{d_1d_2{d_3}} \big[(I_n \rktensor \I{2}_{d_2d_3}) (M_1 \rktensor M_2 \rktensor M_3) ( I_{d_1} \rktensor \I{2}_{d_2d_3})\big] \I{3}_{d_1d_2d_3}\\
            &...\\
            =&\frac{1}{2}\big( I_n \rktensor \I{2}_{d_2d_3}  - J_n \rktensor \J{2}_{d_1d_2} \big) \\
            &\big[M_1 \rktensor (\I{2}_{d_2d_3}(M_2 \rktensor M_3)\I{2}_{d_2d_3})\big] \frac{1}{2}\big( I_{d_1} \rktensor \I{2}_{d_2d_3}  - J_{d_1} \rktensor \J{2}_{d_1d_2} \big)\\
            &= M_1 \rwedgetensor (M_2 \rwedgetensor M_3) \:.
        \end{aligned}
    \end{equation}
    Thus, $\rwedgetensor$ is associative.
\end{proof}

The bilinearity guarantees that convex sums are preserved. The associativity allows us to focus solely on the properties of bipartite systems, as the $n$-partite generalization will then follow. We begin by noticing that the mixed product property does not hold in general, as for example:
\begin{equation}
    \X{2}=  X \rwedgetensor X \neq (I \rwedgetensor X)(X \rwedgetensor I) = \I{2}(I \rktensor X)\I{2}(X \rktensor I) \I{2} = 0 \:.
\end{equation}
Nonetheless, the combination rule retains all local properties globally.
\begin{prop}\label{prop:rwedge_ext_prop}
    The $\rwedgetensor$-combination of two matrices $\widetilde{H} \in \rr^{2n \times 2n}$ and $\widetilde{L} \in \rr^{2m \times 2m} $ preserves symmetric, tetradionic, and positive matrices in the following sense:
    \begin{subequations}
        \begin{align}
            & \widetilde{H}^T = \widetilde{H} \:,\: \widetilde{L}^T = \widetilde{L} &\Rightarrow\quad& (\widetilde{H}\rwedgetensor \widetilde{L})^T = \widetilde{H} \rwedgetensor \widetilde{L} \:;\\
            & J_n\widetilde{H} = \widetilde{H}J_n \:,\: J_m\widetilde{L} = \widetilde{L}J_m &\Rightarrow\quad& \J{2}_{mn}(\widetilde{H} \rwedgetensor \widetilde{L}) = (\widetilde{H}\rwedgetensor \widetilde{L}) \J{2}_{mn} \:;\\
            & \widetilde{H} \geq 0 \:,\: \widetilde{L} \geq 0 &\Rightarrow\quad& (\widetilde{H} \rwedgetensor \widetilde{L}) \geq 0 \:.
        \end{align}
    \end{subequations}
\end{prop}
\begin{proof}  
    The symmetry preservation comes from the $\rwedgetensor$ commuting with the transpose, $ (\widetilde{H} \rwedgetensor \widetilde{L})^T = \widetilde{H}^T \rwedgetensor \widetilde{L}^T$ (this is direct to prove using \Cref{eq:rwedge_char}). 
    The tetradionicity preservation comes from $(J_n \rktensor J_m)(\widetilde{H} \rktensor \widetilde{L}) = (\widetilde{H} \rktensor \widetilde{L}) (J_n \rktensor J_m)$ implying $\I{2}_{mn}(\widetilde{H} \rktensor \widetilde{L}) = (\widetilde{H} \rktensor \widetilde{L}) \I{2}_{mn}$ which can in turn be used in $\J{2}_{nm}(\widetilde{H}\rwedgetensor \widetilde{L}) = \J{2}_{nm} \I{2}_{mn}(\widetilde{H}\rktensor \widetilde{L}) = \I{2}_{mn } \J{2}_{nm}  (\widetilde{H}\rktensor \widetilde{L}) = \ldots = (\widetilde{H}\rwedgetensor \widetilde{L}) \J{2}_{nm} $.
    Finally, the positivity preservation follows from the definition together with $\I{2}_{mn}$ being an orthogonal projector (i.e., $(\I{2}_{mn})^2  = (\I{2}_{mn})^T = \I{2}_{mn}$, and the Kronecker product preserving positivity. Since the $\rwedgetensor$ is a projection of the $\rktensor$, one has, for all $\ket{\psi}$, $ \widetilde{H} \rktensor \widetilde{L} \geq 0 \iff \bra{\psi}\widetilde{H} \rktensor \widetilde{L} \ket{\psi} \geq 0$. Specilalizing it to the vectors of the form $\I{2}_{nm} \ket{\psi}$, this inequality implies $\bra{\psi}\I{2}_{nm}(\widetilde{H} \rktensor \widetilde{L}) \I{2}_{nm} \ket{\psi} \geq 0 $ and therefore $\bra{\psi}\widetilde{H} \rwedgetensor \widetilde{L} \ket{\psi} \geq 0$ for all $\ket{\psi}$, which is the positivity condition for the matrix $\widetilde{H} \rwedgetensor \widetilde{L}$.
\end{proof}
This implies that the combination of any two valid states will yield a valid state, provided the trace normalisation follows. Take two matrices $\widetilde{H} \in \rr^{2n \times 2n}$ and $\widetilde{L} \in \rr^{2m \times 2m} $, using that $\{I,J,X,Z\}$ is a basis for $\rr^{2 \times 2}$, let them decompose like 
    \begin{subequations}
        \begin{align}
            &  \widetilde{H} = I \rktensor H_I + J \rktensor H_J + X \rktensor H_X + Z \rktensor H_Z \:,\\
            & \widetilde{L} = I \rktensor L_I + J \rktensor L_J + X \rktensor L_X + Z \rktensor L_Z \:,
        \end{align}
    \end{subequations}
for $H_I,H_J,H_X,H_Z \in \rr^{n \times n}$ and $L_I,L_J,L_X,L_Z \in \rr^{m \times m}$. Then, using \Cref{eq:D8_wedge_2} we obtain a characterization of their $\rwedgetensor$ combination:
\begin{equation}\label{eq:rwedge_char}
\begin{aligned}
    \widetilde{H} \rwedgetensor \widetilde{L} =& \I{2} \rktensor ( H_I \rktensor L_I - H_J \rktensor L_J) + \J{2} \rktensor ( H_I \rktensor L_J + H_J \rktensor L_I) \\
    &+ \X{2} \rktensor ( H_X \rktensor L_X - H_Z \rktensor L_Z) + \Z{2} \rktensor ( H_X \rktensor L_Z + H_Z \rktensor L_X) \:.
 \end{aligned}
\end{equation}  
This can be used to compute the trace of a $\rwedgetensor$ combination, 
\begin{equation}
    \TrX{}{\widetilde{H} \rwedgetensor \widetilde{L}} = 2(\TrX{}{H_I}\TrX{}{L_I } - \TrX{}{H_J}\TrX{}{L_J}) \:,
\end{equation}
and so we see that the usual distributivity of the trace holds (up to the factor of 2) whenever either of $\widetilde{H} $ or $\widetilde{L}$ is symmetric:
\begin{equation}\label{eq:rwedge_trace}
    \widetilde{H}^T = \widetilde{H} \text{ or } \widetilde{L}^T = \widetilde{L} \quad \Rightarrow \quad \TrX{}{\widetilde{H} \rwedgetensor \widetilde{L}} = \frac{1}{2} \TrX{}{\widetilde{H}}\TrX{}{\widetilde{L}} \:.
\end{equation}

\begin{prop}
    For any four symmetric matrices $\widetilde{H},\widetilde{L} \in \rr^{2n\times 2n}$ and $\widetilde{Q},\widetilde{R} \in \rr^{2m\times 2m}$,
    \begin{equation}
        \TrX{}{(\widetilde{H}\widetilde{L}) \rwedgetensor (\widetilde{Q}\widetilde{R})} = 2\TrX{}{(\widetilde{H}\widetilde{L}) \rktensor (\widetilde{Q}\widetilde{R})} \:,
    \end{equation}
    holds if and only if at least one of the four matrices is special symmetric. Similarly,
    \begin{equation}
        \TrX{}{(\widetilde{H}\widetilde{L}) \rwedgetensor (\widetilde{Q}\widetilde{R})} = \TrX{}{(\widetilde{H} \rwedgetensor \widetilde{Q})(\widetilde{L} \rwedgetensor \widetilde{R})} \:,
    \end{equation}
    holds if and only if each side of the product feature at least one special symmetric matrix.
\end{prop}
\begin{proof}
    A straightforward computation gives the following 
    \begin{subequations}\label{eq:Trace_mixed_product}
        \begin{align}
            &\TrX{}{(\widetilde{H}\widetilde{L}) \rktensor (\widetilde{Q}\widetilde{R})} = \TrX{}{\widetilde{H}\widetilde{L}}\TrX{}{\widetilde{Q}\widetilde{R}} \:; \label{eq:Trace_mixed_product_Kronecker}\\
            &\begin{aligned}
                \TrX{}{\overline{\sigma}_{0,mn}^{\scriptscriptstyle(2)}((\widetilde{H}\widetilde{L}) \rktensor (\widetilde{Q}\widetilde{R})){\overline{\sigma}_{0,mn}^{\scriptscriptstyle(2)}}}& \\
                = \frac{1}{2} \Big(\TrX{}{\widetilde{H}\widetilde{L}}\TrX{}{\widetilde{Q}\widetilde{R}} - &\TrX{}{J_n\widetilde{H}\widetilde{L}}\TrX{}{J_m\widetilde{Q}\widetilde{R}} \Big)  \:; 
            \end{aligned}\label{eq:Trace_mixed_product_mixed}\\
            &\begin{aligned} 
                \TrX{}{\overline{\sigma}_{0,mn}^{\scriptscriptstyle(2)}(\widetilde{H} \rktensor \widetilde{Q})\overline{\sigma}_{0,mn}^{\scriptscriptstyle(2)} (\widetilde{L} \rktensor \widetilde{R}) \overline{\sigma}_{0,mn}^{\scriptscriptstyle(2)}} & \\
                = \frac{1}{4}\Big( \TrX{}{\widetilde{H}\widetilde{L}}\TrX{}{\widetilde{Q}\widetilde{R}} - \TrX{}{J_n\widetilde{H}\widetilde{L}}&\TrX{}{J_m\widetilde{Q}\widetilde{R}} \\
                 -\TrX{}{\widetilde{H}J_n\widetilde{L}}\TrX{}{\widetilde{Q}J_m\widetilde{R}}
                 &+ \TrX{}{J_n\widetilde{H}J_n\widetilde{L}}\TrX{}{J_m\widetilde{Q}J_m\widetilde{R}} \Big) \:.
            \end{aligned}\label{eq:Trace_mixed_product_unmixed}
        \end{align}
    \end{subequations}
    Discarding the trivial cases of when the matrices commute or are traceless, we see that, for symmetric matrices, whenever either of $\widetilde{H}$ or $\widetilde{L}$ commutes with $J_n$, $\TrX{}{J_n\widetilde{H}\widetilde{L}}=0$ and similarly, whenever either of $\widetilde{Q}$ or $\widetilde{R}$ commute with $J_m$, $\TrX{}{J_n\widetilde{Q}\widetilde{R}}=0$. Thus, as soon as one of the four matrices is special symmetric, \cref{eq:Trace_mixed_product_Kronecker,eq:Trace_mixed_product_mixed} are equivalent up to a factor of 2. 
    Additionally, when both sides of the composition feature at least one special symmetric matrix, $\TrX{}{\widetilde{H}\widetilde{L}}\TrX{}{\widetilde{Q}\widetilde{R}} = \TrX{}{J_n\widetilde{H}J_n\widetilde{L}}\TrX{}{J_m\widetilde{Q}J_m\widetilde{R}}$ and \cref{eq:Trace_mixed_product_mixed,eq:Trace_mixed_product_unmixed} are equivalent.
    
    To show these to be necessary conditions whenever the problem is not trivial (i.e., no matrix were assumed traceless, to commute with the others, or to be a function of the others), we inject the decomposition of the matrices in terms of $\{I,J,X,Z\}$, e.g., $\widetilde{H} = I \rktensor H_I + J \rktensor H_J + X \rktensor H_X + Z \rktensor H_Z$, into the above. Some more computations lead to 
    \begin{subequations}
        \begin{align}
            &  \TrX{}{\widetilde{H}\widetilde{L}} = \TrX{}{H_IL_I - H_JL_J + H_XL_X + H_ZL_Z}\:;\\
            & \TrX{}{J_n\widetilde{H}\widetilde{L}} = \TrX{}{-H_IL_J - H_JL_I - H_XL_Z + H_ZL_X} \:;\\
            & \TrX{}{\widetilde{H}J_n\widetilde{L}} = \TrX{}{-H_IL_J - H_JL_I + H_XL_Z - H_ZL_X} \:;\\
            & \TrX{}{J_n\widetilde{H}J_n\widetilde{L}} = \TrX{}{-H_IL_I + H_JL_J + H_XL_X + H_ZL_Z}\:.
        \end{align}
    \end{subequations}
    When the matrices are symmetric, their $J$ part is antisymmetric thus traceless, hence $\widetilde{H}^T=\widetilde{H}$ and $\widetilde{L}^T = \widetilde{L}$ implies $\TrX{}{H_J}=\TrX{}{L_J} =0$ and by extension $\TrX{}{H_IL_J} = \TrX{}{H_JL_I} = 0$ since the $I$ part is symmetric, which in turn gives
    \begin{equation}\label{eq:TrJHL_TrHJL}
        \widetilde{H}^T=\widetilde{H}, \: \widetilde{L}^T = \widetilde{L} \quad \Rightarrow \quad\TrX{}{J_n\widetilde{H}\widetilde{L}} = - \TrX{}{\widetilde{H}J_n\widetilde{L}} \:.
    \end{equation}
    Therefore, when the matrices are all symmetric, 
    \begin{equation}
        \TrX{}{\overline{\sigma}_{0,mn}^{\scriptscriptstyle(2)}((\widetilde{H}\widetilde{L}) \rktensor (\widetilde{Q}\widetilde{R})){\overline{\sigma}_{0,mn}^{\scriptscriptstyle(2)}}} = \TrX{}{\overline{\sigma}_{0,mn}^{\scriptscriptstyle(2)}(\widetilde{H} \rktensor \widetilde{Q})\overline{\sigma}_{0,mn}^{\scriptscriptstyle(2)} (\widetilde{L} \rktensor \widetilde{R}) \overline{\sigma}_{0,mn}^{\scriptscriptstyle(2)}}\:,
    \end{equation} 
    provided 
    \begin{equation}
        \TrX{}{\widetilde{H}\widetilde{L}}\TrX{}{\widetilde{Q}\widetilde{R}} = \TrX{}{J_n\widetilde{H}J_n\widetilde{L}}\TrX{}{J_m\widetilde{Q}J_m\widetilde{R}} \:.
    \end{equation}
    Since the choices of $\widetilde{H}, \widetilde{Q}, \widetilde{L}$, and $\widetilde{R}$ are independent, the non-trivial case for which this equality is true is when $\TrX{}{\widetilde{H}\widetilde{L}} = \pm \TrX{}{J_n\widetilde{H}J_n\widetilde{L}}$ and the same holds for the other trace with the same sign in front of the right-hand side. 
    But $\widetilde{H}$ and $\widetilde{L}$ are not in general traceless, thus $\TrX{}{H_IL_I} \neq 0$ and we cannot have $\TrX{}{\widetilde{H}\widetilde{L}}=\TrX{}{J_n\widetilde{H}J_n\widetilde{L}}$ in the general case. 
    The only possibility left is $\TrX{}{\widetilde{H}\widetilde{L}} = - \TrX{}{J_n\widetilde{H}J_n\widetilde{L}}$ and $\TrX{}{\widetilde{Q}\widetilde{R}} = -\TrX{}{J_m\widetilde{Q}J_m\widetilde{R}}$. In the case of $\TrX{}{\widetilde{H}\widetilde{L}} = - \TrX{}{J_n\widetilde{H}J_n\widetilde{L}}$, this holds when $\TrX{}{H_XL_X + H_ZL_Z}=0$. However, $H_X,L_X,H_Z,L_Z$ could be any $n \times n$ symmetric matrix in principle, so the only way to ensure that $\TrX{}{H_XL_X + H_ZL_Z}=0$ is to require $H_X = H_Z=0$ or $L_X = L_Z=0$, in other words, to require either $\widetilde{H}$ or $\widetilde{L}$ to be special symmetric. The same reasoning applies to $\widetilde{Q}$ and $\widetilde{R}$. 

    By a similar rationale, from \cref{eq:TrJHL_TrHJL}, it can deduced that the only way for \cref{eq:Trace_mixed_product_Kronecker,eq:Trace_mixed_product_mixed} to be equivalent is that either of the four matrices were special symmetric.
\end{proof}

What this proposition shows is that the trace is the proper inner product on the extended space whenever it is restricted to special symmetric matrices. That is, the extension coincides with the usual idea of an RQT when discussing only the part containing the image of CQT.
In general, however, the bilinear form induced by the representation $\rwedgetensor$ of the tensor product and representing the inner product cannot be the trace because of the above. Instead, the inner product on the space of symmetric matrices will be some other function $<\cdot \:,\: \cdot >$ obeying \cref{eq:def:inner product tensor space}, and that happens to reduce into a trace when applied to special symmetric matrices.


Finally, we study the image of this combination rule. We proved in the previous section that if both $\widetilde{H}$ and $\widetilde{L}$ are special symmetric, then so will be $\widetilde{H} \rwedgetensor \widetilde{L}$. We also show that the linear combinations of such terms will span a space isomorphic to the special symmetric matrices of dimension $n \times m$. 
If now $\widetilde{H}$ and $\widetilde{L}$ are taken to be any two matrices, we can see in \Cref{eq:rwedge_char} that it will yield four orthogonal matrices with $(nm)^2$ real parameters (i.e., each of the terms in the sum). Thus, elements of the form $\widetilde{H} \rwedgetensor \widetilde{L}$ span a real subspace of $\rr^{4nm \times 4nm}$ of dimension $4(nm)^2$. By its dimension, we see that this subspace must be isomorphic to $\rr^{2nm \times 2nm}$. 
In the case where $\widetilde{H}$ and $\widetilde{L}$ are symmetric, elements of the form $\widetilde{H} \rwedgetensor \widetilde{L}$ span a real subspace of at least dimension $ \mathrm{dim}(\SymX{nm}{\rr}) + \mathrm{dim}(\ASymX{nm}{\rr}) + 2 \mathrm{dim}(\SymX{n}{\rr})\mathrm{dim}(\SymX{m}{\rr})$ which is close to the dimension of the corresponding space of symmetric matrices since $ \mathrm{dim}(\SymX{2nm}{\rr}) = 3 \mathrm{dim}(\SymX{nm}{\rr}) + \mathrm{dim}(\ASymX{nm}{\rr})$. Can it be that the $\rwedgetensor$ combination of symmetric matrices yields a space isomorphic to the full space of symmetric matrices? We leave this question open for future work.

Summarizing the above discussion: the $\rwedgetensor$ product preserves special symmetric matrices and it is, up to an isomorphism, a tensor product for the spaces of special symmetric matrices. It also preserves symmetric matrices, but it is unclear whether it is a tensor product for their associated spaces. In symbols:
\begin{equation}
    \begin{array}{cccccr}
        \vphantom{ \overset{?}{\simeq}}\SYR{nm} & \simeq & \SYR{n} \rwedgetensor \SYR{m} & \subset & \SYR{2nm} &;\\
        \SymX{2nm}{\rr}& \overset{?}{\simeq} & \SymX{2n}{\rr} \rwedgetensor \SymX{2m}{\rr} &\subset& \SymX{4nm}{\rr} &;\\
        \vphantom{ \overset{?}{\simeq}}\rr^{2nm\times 2nm} & \simeq & \rr^{2n\times 2n} \rwedgetensor \rr^{2m \times 2m} & \subset & \rr^{4nm \times 4nm}  &.
    \end{array}
\end{equation}
These relations are to be compared with those induced by the Kronecker product:
\begin{equation}
    \begin{array}{cccr}
        \vphantom{\overset{?}{\cong}_m}\SYR{n} \rktensor \SYR{m} & \not\subset & \SYR{2nm} &;\\
        \SymX{2n}{\rr} \rktensor \SymX{2m}{\rr} & \subset &\SymX{4nm}{\rr} &;\\
        \vphantom{\overset{?}{\cong}}\rr^{2n \times 2n} \rktensor \rr^{2m \times 2m} &\cong & \rr^{4nm \times 4nm} &.
    \end{array}
\end{equation}

In a similar fashion to what we did in the previous appendix, we can define a combination rule derived from $\rwedgetensor$ and that is effectively the tensor product for special symmetric matrices. Let $\tau_{\rwedgetensor}: \rr^{2^n d_1 \ldots d_n \times 2^n d_1 \ldots d_n} \rightarrow \rr^{2 d_1 \ldots d_n \times 2 d_1 \ldots d_n}$ be the following map
\begin{equation}\label{eq:tau}
    \tau_{\rwedgetensor}(\cdot) := I \rktensor \frac{\TrX{D_8}{I^{n} \: \cdot }}{2} - J \rktensor \frac{\TrX{D_8}{J^{n} \: \cdot }}{2} + X \rktensor \frac{\TrX{D_8}{X^{n} \: \cdot }}{2} + Z \rktensor \frac{\TrX{D_8}{Z^{n} \: \cdot }}{2} \:,
\end{equation}
where ``$\mathrm{Tr}_{D_8}$'' indicates a partial trace over only the $n$ $\rr^{2 \times 2}$ tensor factors that carry the matrices $\{\I{n},\J{n},\X{n},\Z{n}\}$ representing $D_8$. 
\begin{definition} \label{def:rdottensor}
    Let $M_1 \in \rr^{2d_1 \times 2d_1}$, $M_2 \in \rr^{2d_2 \times 2d_2} $, ..., and $M_n \in \rr^{2d_n \times 2d_n}$ be $n$ matrices. Their $\rdottensor$-combination is given by the matrix $M_1 \rdottensor M_2 \rdottensor \dots \rdottensor M_n \in \rr^{2 d_1 ... d_n \times 2 d_1 ... d_n}$ defined as 
    \begin{equation}
        M_1 \rdottensor M_2 \rdottensor ... \rdottensor M_n := \tau_{\rwedgetensor}(M_1 \rwedgetensor M_2 \rwedgetensor ... \rwedgetensor M_n )\:,
    \end{equation}
    with $\tau_{\rwedgetensor}$ and $\rwedgetensor$ given respectively by \Cref{eq:tau} and \Cref{def:rwedgetensor} above.
\end{definition}
With such a definition, we have a positive-preserving real-linear combination rule that verifies the following:
\begin{equation}
    \begin{array}{cccr}
        \vphantom{\overset{?}{\cong}}\SYR{n} \rdottensor \SYR{m} & \cong & \SYR{nm} &;\\
        \SymX{2n}{\rr} \rdottensor \SymX{2m}{\rr} & \overset{?}{\cong} &\SymX{2nm}{\rr} &;\\
        \vphantom{\overset{?}{\cong}}\rr^{2n \times 2n} \rdottensor \rr^{2m \times 2m} &\cong & \rr^{2nm \times 2nm} &.
    \end{array}
\end{equation}
Therefore, the $\rdottensor$ combination rule is a matrix representation of the tensor product for the spaces of special symmetric matrices and for the space of symmetric matrices. In the case of symmetric matrices, the output space is symmetric but may not span all of $\SymX{2nm}{\rr}$. However, since it preserves positivity and traces for symmetric matrices, it can be used as the combination rule instead of the Kronecker product.

\section{Extended discussion on the real-matrices representations of the cyclic and dihedral groups}\label{sec:D_8_extras}
We begin by recalling the definitions of the groups involved in this paper.
\begin{definition}[Cyclic and Dihedral groups]\label{def:groups}
    The cyclic group of order $n$, noted $\zz_n$, is the finite Abelian group whose presentation is the following:
    \begin{equation}
        \left\langle a| a^n = e, bab^{-1} = a^{-1}\right\rangle \:.
    \end{equation}
    Where $e$ is the identity element in the above. 

    The dihedral group of order $2n$, noted $D_{2n}$, is obtained as the semi-direct product of the cyclic group of order 2 with the one of order $n$, $D_8 = \zz_2 \rtimes \zz_n$. It is the finite non-Abelian group whose presentation is the following:
    \begin{equation}
        \left\langle a,b | a^n = b^2 = e, bab^{-1} = a^{-1}\right\rangle \:.
    \end{equation}
    Its maximal cyclic subgroup is $Z_n$ presented as
    \begin{equation}
        \left\langle a| a^n = e \right\rangle \:.
    \end{equation}
    Its center is $Z_2$ presented as
    \begin{equation}
        \left\langle a^2| (a^2)^2 = e \right\rangle \:.
    \end{equation}
\end{definition}

\subsection{Single-partite representation of $D_8$}\label{sec:D_8_sigma}
As a convenience for the proofs throughout this appendix, the matrices $\{I,J,X,Z\}$ will be relabelled via $\{\sigma_\mu\}_{\mu=0}^{3}$ using the convention:
\begin{subequations}\label{eq:sigma_def}
    \begin{align}
        &\sigma_0 := I = \begin{pmatrix}
            1 & 0 \\ 0 & 1
        \end{pmatrix}\:;\\
        &\sigma_1 := X = \begin{pmatrix}
            0 & 1 \\ 1 & 0
        \end{pmatrix}\:;\\
        &\sigma_2 := J = \begin{pmatrix}
            0 & -1 \\ 1 & 0
        \end{pmatrix}\:;\\
        &\sigma_3 := Z = \begin{pmatrix}
            1 & 0 \\ 0 & -1
        \end{pmatrix}\:.
    \end{align}
\end{subequations}
We will also use the physicists' convention that Greek indices run from $0$ to $3$ whereas the Latin ones run from $1$ to $3$, their algebraic relations are concisely given by 
\begin{subequations}\label{eq:sigma_algebra}
\begin{gather}
    \sigma_{\mu} \sigma_0 = \sigma_0 \sigma_\mu = \sigma_\mu \:;\\
    \sigma_i\sigma_j = \epsilon_{ijk}(-1)^{\delta_{k2}}\sigma_k + \delta_{ij} (-1)^{\delta_{i2}}\sigma_0\:;
\end{gather}
\end{subequations}
where $\delta_{\mu\nu}$ is the Kronecker delta and $\epsilon_{ijk}$ is the three-dimensional antisymmetric tensor (or Levi-Civita symbol). 
These relations can be shown to be equivalent to the presentation of the dihedral group of order $8$ under the identification $e \mapsto \sigma_0$, $a \mapsto \sigma_2$, and $b\mapsto \sigma_1$. Indeed, it is straightforward to show that these imply the presentation of $D_8$, 
\begin{subequations}
    \begin{gather}
        \sigma_2^4 = \sigma_1^2 = \sigma_0 \:;\\
        \sigma_1\sigma_2\sigma_1^{-1} = \sigma_2^{3} \:.
    \end{gather}
\end{subequations}
where $\sigma_2^{-1} = \sigma_2^3 $ and  $\sigma_1^{-1}=\sigma_1 $. 

\begin{prop}\label{prop:sigma=D8}
    The set of matrices $\left\{ \sigma_\mu \right\}\subset \rr^{2 \times 2}$ generates (through matrix multiplication), a real, irreducible representation of the dihedral group $D_8$ under the identification $e \mapsto \sigma_0 $, $a \mapsto \sigma_2$, and $b \mapsto \sigma_1$. 
    
    Explicitly, the eight elements of the group are given by $\left\{ \pm \sigma_0, \pm \sigma_1, \pm \sigma_2, \pm \sigma_3 \right\}$; the four elements of its maximal cyclic subgroup are given by $\left\{ \pm \sigma_0, \pm \sigma_2\right\}$; and the two elements of its centre by $\{\pm \sigma_0\}$.
\end{prop}
\begin{proof}
    As mentioned above, from direct computation, one can check that $\sigma_2^4 = (-I)^2 = I$ which implies that the inverse of $\sigma_2$ is $\sigma_2^{-1} = \sigma_2^3 = -\sigma_2$, that $\sigma_1^2 = I$ which implies that the inverse of $\sigma_1$ is $\sigma_1$, and that $ \sigma_1\sigma_2\sigma_1^{-1} = \sigma_2^{-1}\iff \sigma_1\sigma_2\sigma_1 = -\sigma_2$. It then suffices to compute the words of the form $a^n b^m \mapsto \sigma_2^n\sigma_1^m$ to obtain all eight elements.
    
    To show irreducibility, one shows orthogonality of the characters i.e. that $\frac{1}{8} (\TrX{}{\pm I}^2 +\TrX{}{\pm \sigma_2}^2 + \TrX{}{\pm \sigma_1}^2 + \TrX{}{\pm \sigma_3}^2) = 1$. Since every matrix has zero trace except $\id$ which has trace 2, one is left with $\frac{1}{8}(\TrX{}{I}^2 + \TrX{}{-I}^2) = 1$.
\end{proof}

\begin{remark}
    An interesting but tangent thing to notice is that the `real' Pauli matrices under consideration, $\{\sigma_\mu\}_\mu = \{I,X,J,Z\} \subset \rr^{2\times 2}$ correspond to the dihedral group. This is contrasting with the usual Pauli matrices $\{I,X,Y,Z\} \subset \cc^{2\times 2}$, which correspond to the quaternion group $Q_8$. A consequence of that is that the extension from the real representation of the complex matrices (i.e., $\cc^{n\times n} \overset{\Gs}{\rightarrow} \rr^{2n \times 2n}$) obtained through the map $\Gs$ to some map $\Gamma'$ which image contains every $2n\times 2n$ real matrices cannot be readily interpreted as a representation of the quaternionic matrices (i.e. $\hh^{n\times n} \overset{\Gamma'}{\nrightarrow} \rr^{2n \times 2n}$), contrary to what one might have expected.
\end{remark}

\subsection{Bipartite representation of $D_8$}\label{sec:D_8_sigma2}
We consider the four non-zero matrices in $\rr^{4 \times 4}$ obtained via the extended composition, \cref{eq:D8_wedge_2}: 
\begin{subequations}\label{eq:sigma2}
    \begin{align}
        &\overline{\sigma}_0^{\scriptscriptstyle(2)} := \frac{1}{2}\left(\sigma_0 \rktensor \sigma_0 - \sigma_2 \rktensor \sigma_2 \right)\:;\\
        &\overline{\sigma}_1^{\scriptscriptstyle(2)} := \frac{1}{2}\left(\sigma_1 \rktensor \sigma_1 - \sigma_3 \rktensor \sigma_3 \right)\:;\\
        &\overline{\sigma}_2^{\scriptscriptstyle(2)} := \frac{1}{2}\left(\sigma_0 \rktensor \sigma_2 + \sigma_2 \rktensor \sigma_0 \right)\:;\\
        &\overline{\sigma}_3^{\scriptscriptstyle(2)} := \frac{1}{2}\left(\sigma_1 \rktensor \sigma_3 + \sigma_3 \rktensor \sigma_1 \right)\:.
    \end{align}
\end{subequations}
It is straightforward to check that these obey the analogue of \cref{eq:sigma_algebra}, hence that they generate the $D_8$ group under multiplication. Therefore, these are a bipartite representation of the matrices \eqref{eq:sigma_def}.
\begin{prop}\label{prop:sigma2=D8}
    The set of matrices $\left\{ \overline{\sigma}_\mu^{\scriptscriptstyle(2)} \right\} \subset \rr^{4 \times 4}$ generates (through matrix multiplication) a real, irreducible representation of the dihedral group $D_8$ under the identification $e \mapsto \overline{\sigma}_0^{\scriptscriptstyle(2)} $, $a \mapsto \overline{\sigma}_2^{\scriptscriptstyle(2)}$, and $b \mapsto \overline{\sigma}_1^{\scriptscriptstyle(2)}$.
\end{prop}
\begin{proof}
    Again, one readily checks that $(\overline{\sigma}_2^{\scriptscriptstyle(2)})^4 = (-\overline{\sigma}_0^{\scriptscriptstyle(2)})^2 = \overline{\sigma}_0^{\scriptscriptstyle(2)}$ so that $(\overline{\sigma}_2^{\scriptscriptstyle(2)})^{-1} = (\overline{\sigma}_2^{\scriptscriptstyle(2)})^3 = -\overline{\sigma}_2^{\scriptscriptstyle(2)}$, that $(\overline{\sigma}_1^{\scriptscriptstyle(2)})^2 = \overline{\sigma}_0^{\scriptscriptstyle(2)}$ so that $(\overline{\sigma}_1^{\scriptscriptstyle(2)})^{-1} = \overline{\sigma}_1^{\scriptscriptstyle(2)}$, and that $\overline{\sigma}_1^{\scriptscriptstyle(2)}\overline{\sigma}_2^{\scriptscriptstyle(2)}\overline{\sigma}_1^{\scriptscriptstyle(2)} = -\overline{\sigma}_2^{\scriptscriptstyle(2)} $. The irreducibility comes from orthogonality of the characters, using $\TrX{}{\overline{\sigma}_0^{\scriptscriptstyle(2)}} = 2$ and that it is the only non-zero character, we indeed have $\frac{1}{8}(\TrX{}{\overline{\sigma}_0^{\scriptscriptstyle(2)}}^2 + \TrX{}{-\overline{\sigma}_0^{\scriptscriptstyle(2)}}^2 = 1$. 
\end{proof}

Direct inspection reveals these new generators to be formed as an equal mixture of matrices that represent the same group elements. That is,
\begin{subequations}\label{eq:sigma2_sim}
    \begin{align}
        e &\mapsto \overline{\sigma}_0^{\scriptscriptstyle(2)} \sim \sigma_0 \rktensor \sigma_0 \sim -(\sigma_2 \rktensor \sigma_2) \:; \\
        a &\mapsto \overline{\sigma}_2^{\scriptscriptstyle(2)} \sim \sigma_0 \rktensor \sigma_2 \sim \sigma_2 \rktensor \sigma_0 \label{eq:sigma2_sim_2}\:;\\
        b &\mapsto \overline{\sigma}_1^{\scriptscriptstyle(2)} \sim \sigma_1 \rktensor \sigma_1 \sim \sigma_3 \rktensor \sigma_3 \:;
    \end{align}
\end{subequations}
where the similarity relation $\sim$ in the above means 
\begin{equation}
    M \sim \overline{\sigma}_\mu^{\scriptscriptstyle(2)} \iff   \forall \nu: \quad M \overline{\sigma}_\nu^{\scriptscriptstyle(2)} = \overline{\sigma}_\mu^{\scriptscriptstyle(2)}\overline{\sigma}_\nu^{\scriptscriptstyle(2)} \quad \& \quad  \overline{\sigma}_\nu^{\scriptscriptstyle(2)} M = \overline{\sigma}_\nu^{\scriptscriptstyle(2)}\overline{\sigma}_\mu^{\scriptscriptstyle(2)}\:.
\end{equation}
This allows to obtain the $\{\overline{\sigma}_\mu^{\scriptscriptstyle(2)}\}$ matrices from the group structure: since every element of $D_8$ has the form $a^{j+k}b^ie$ for some $i,j,k \in \{0,1\}$, one obtains
\begin{equation}\label{eq:sigma2_mu_from_sigma2_0}
    \pm\overline{\sigma}_\mu^{\scriptscriptstyle(2)} =((\sigma_1^i\sigma_2^j) \rktensor (\sigma_1^i\sigma_2^k) )\overline{\sigma}_0^{\scriptscriptstyle(2)} \:,
\end{equation}
from the similarity relations.

In addition, these relations show that the subgroup generated by $\{\overline{\sigma}_0^{\scriptscriptstyle(2)}, \overline{\sigma}_2^{\scriptscriptstyle(2)}\}$ is indeed a local representation of $\zz_4$. As explained in \cref{sec:4bis_multipartite}, \cref{eq:sigma2_sim_2} implies that applying $\sigma_2$ on each side of the Kronecker product has the same effect as applying $\overline{\sigma}_2^{\scriptscriptstyle(2)}$.

\subsection{Why the extended composition is defined as such}\label{sec:D_8_whyrcompo}
Nonetheless, the other bipartite generator does not coincide with a local representation, in the sense that $(\sigma_1 \rktensor \sigma_0) \nsim \overline{\sigma}_1^{\scriptscriptstyle(2)}$, for example, and, more generally, that
\begin{equation}
   \overline{\sigma}_i^{\scriptscriptstyle(2)}(\sigma_j \rktensor \sigma_0) \neq \epsilon_{ijk}(-1)^{\delta_{k2}}\overline{\sigma}_k^{\scriptscriptstyle(2)} + \delta_{ij} (-1)^{\delta_{i2}}\overline{\sigma}_0^{\scriptscriptstyle(2)}\:.
\end{equation}
This is due to the fact that the extended group $D_8$ is no longer Abelian, which leads to difficulties with the Kroncker product. Indeed, if $\overline{\sigma}_i^{\scriptscriptstyle(2)}(\sigma_j \otimes \sigma_0)$ were equal to $\overline{\sigma}_i^{\scriptscriptstyle(2)}\overline{\sigma}_j^{\scriptscriptstyle(2)}$ then it would result to an ambiguity when applying two local operations since $\overline{\sigma}_i^{\scriptscriptstyle(2)}(\sigma_j \rktensor \sigma_k) = \overline{\sigma}_i^{\scriptscriptstyle(2)}(\sigma_j \rktensor \sigma_0)(\sigma_0 \rktensor \sigma_k ) = \overline{\sigma}_i^{\scriptscriptstyle(2)}\overline{\sigma}_j^{\scriptscriptstyle(2)}\overline{\sigma}_k^{\scriptscriptstyle(2)}$ and $\overline{\sigma}_i^{\scriptscriptstyle(2)}(\sigma_j \rktensor \sigma_k) = \overline{\sigma}_i^{\scriptscriptstyle(2)}(\sigma_0 \rktensor \sigma_k )(\sigma_j \rktensor \sigma_0) = \overline{\sigma}_i^{\scriptscriptstyle(2)}\overline{\sigma}_k^{\scriptscriptstyle(2)}\overline{\sigma}_j^{\scriptscriptstyle(2)}$, making it impossible for the $\{\overline{\sigma}_\mu^{\scriptscriptstyle(2)}\}$ to generate a non-Abelian group. 

\begin{remark}
The reader may have guessed that with $\overline{\sigma}_0^{\scriptscriptstyle(2)}$ as our choice of identity element and a nonlocal bipartite representation of the $b$ generator, some algebraic difficulties have been induced. Using this representation of $D_8$ as a way to define a tensor product space will ultimately lead to a tensor product that does not obey the mixed product formula, as will be shown in the following. 

The reader may think there could still be a possibility to find a representation $\overline{\tau}_\mu^{\scriptscriptstyle(2)}$ of $D_8$ that is local, for example by requiring the left-hand side of the Kronecker product to act from the left like
\begin{equation}
    \overline{\tau}_\mu^{\scriptscriptstyle(2)}(\sigma_\nu \rktensor \sigma_\kappa) = \overline{\tau}_\nu^{\scriptscriptstyle(2)}\overline{\tau}_\mu^{\scriptscriptstyle(2)}\overline{\tau}_\kappa^{\scriptscriptstyle(2)} \:,
\end{equation}
for all values of indices. Such a representation would in principle make more sense, as its generators are the weighted mixture of all the local operations whose action are equivalent globally, i.e.,
\begin{equation}
     \overline{\tau}_\mu^{\scriptscriptstyle(2)} := \sum_{\nu,\kappa: \sigma_\nu \rktensor \sigma_\kappa \sim \overline{\tau}_\mu^{\scriptscriptstyle(2)}} f(\nu,\kappa) \: (\sigma_\nu \rktensor \sigma_\kappa) \:,
\end{equation}
for some weights $f(\nu,\kappa)$. This property would guarantee the representation to be local by construction. 
For instance, the bipartite identity element would be 
\begin{equation}
    \overline{\tau}_0^{\scriptscriptstyle(2)} := \frac{1}{4}\left(\sigma_0 \rktensor \sigma_0 + \sigma_1 \rktensor \sigma_1 - \sigma_2 \rktensor \sigma_2 + \sigma_3 \rktensor \sigma_3 \right) 
\end{equation} 
instead of $\overline{\sigma}_0^{\scriptscriptstyle(2)}$.
 
At first glance, such a local representation has another consequent advantage over the one we chose: since the equivalence classes under $\sim$ mimic the multiplicative relations (i.e., the $\overline{\tau}_\mu^{\scriptscriptstyle(2)}$'s are obtained by setting $\sigma_i \rktensor \sigma_j \sim \epsilon_{ijk} (-1)^{\delta_{k2}} \overline{\tau}_k^{\scriptscriptstyle(2)} + \delta_{ij}(-1)^{\delta_{i2}} \overline{\tau}_0^{\scriptscriptstyle(2)}$), the mixed product formula will hold. 

After closer inspection, however, using such representation would dramatically complexify our methods (figuratively and literally). As a matter of fact, finding the scalar weights $f(\nu,\kappa)$ in $\overline{\tau}_2^{\scriptscriptstyle(2)} = \sum_{\nu,\kappa} f(\nu,\kappa) (\sigma_\nu \rktensor \sigma_\kappa)$ such that $(\overline{\tau}_2^{\scriptscriptstyle(2)})^2 = - \overline{\tau}_0^{\scriptscriptstyle(2)}$ is impossible in $\rr$... but also in $\cc$! The solution requires non-commuting scalars (i.e., Grassman numbers).

Because of that, a representation of $D_8$ in $\rr^{4\times 4}$ which is local w.r.t. the relation $\sim$ seems unlikely to exist. Yet, finding such a representation would induce a greatly simplified composition rule for the alternative real quantum theory presented in this article. Therefore, we leave this question open for further research.
\end{remark}

Still, while the $\{\overline{\sigma}_{\mu}^{\scriptscriptstyle(2)}\}$ matrices are not a local representation of $D_8$, they enjoy some remarkable properties.
\begin{lemma}\label{lem:sigma2_properties}
    For any $\mu, \nu \in \{0,1,2,3\}$, the following conditions are equivalent
    \begin{enumerate}
        \item $\mu + \nu$ is even.
        \item \begin{equation}
            \exists a,a' \in \{0,1,2,3\}, \: \exists b\in \{0,1\} : \quad  (\sigma_\mu \rktensor \sigma_\nu) = (\sigma_1^b\sigma_2^a) \rktensor (\sigma_1^b\sigma_2^{a'}) \:.
        \end{equation}
        \item \begin{equation}
            \forall \kappa,\: \exists\lambda:\quad (\sigma_\mu \rktensor \sigma_\nu)\overline{\sigma}_\kappa^{\scriptscriptstyle(2)} = \pm \overline{\sigma}_\lambda^{\scriptscriptstyle(2)} \:;
        \end{equation}
        \item \begin{equation}
            \overline{\sigma}_0^{\scriptscriptstyle(2)}(\sigma_\mu \rktensor \sigma_\nu) = (\sigma_\mu \rktensor \sigma_\nu)\overline{\sigma}_0^{\scriptscriptstyle(2)} \:;
        \end{equation}
        \item \begin{equation}
            \overline{\sigma}_0^{\scriptscriptstyle(2)}(\sigma_\mu \rktensor \sigma_\nu)\overline{\sigma}_0^{\scriptscriptstyle(2)} \neq 0 \:;
        \end{equation}
        
    \end{enumerate}
\end{lemma}
\begin{proof}
    The proof will consist of showing a looping chain of implications: $1. \Rightarrow 2. \Rightarrow 3. \Rightarrow 4. \Rightarrow 5. \Rightarrow 1.$
    
    First, $1. \Rightarrow 2.$: any $\sigma_\mu$ has the form $\sigma_1^b\sigma_2^a$. Consequently, the Kronecker product of two such matrices has the form $\sigma_\mu \rktensor \sigma_\nu = (\sigma_1^a\sigma_2^b) \rktensor (\sigma_1^{a'}\sigma_2^{b'})$. 
    Since $\sigma_1^2 = \sigma_0$, the index of $\sigma_\mu$ (respectively, $\sigma_\nu$) can be odd only when $b$ (resp. $b'$) is odd. Therefore, if $\mu+\nu$ is even, either both indexes are even or they are odd, which is ensured by imposing $b'=b$, thus condition 2.

    Then, to show $2. \Rightarrow 3.$, we use that every $\overline{\sigma}_\mu^{\scriptscriptstyle(2)}$ has the form \cref{eq:sigma2_mu_from_sigma2_0}.
    We inject this as well as condition $2.$ into the left-hand side of $3.$ to obtain $(\sigma_\mu \rktensor \sigma_\nu)\overline{\sigma}_\kappa^{\scriptscriptstyle(2)} = (\sigma_1^b\sigma_2^a) \rktensor (\sigma_1^b\sigma_2^{a'})((\sigma_1^i\sigma_2^j) \rktensor (\sigma_1^i\sigma_2^k) )\overline{\sigma}_0^{\scriptscriptstyle(2)}$ which, using \cref{eq:sigma_algebra}, can be reduced to $(\sigma_\mu \rktensor \sigma_\nu)\overline{\sigma}_\kappa^{\scriptscriptstyle(2)} = (-1)^{(a+a')i}((\sigma_1^{b+i}\sigma_2^{a+j}) \rktensor (\sigma_1^{b+i}\sigma_2^{a'+k}) )\overline{\sigma}_0^{\scriptscriptstyle(2)}$ and this is equivalent to \cref{eq:sigma2_mu_from_sigma2_0} up to a minus sign, proving the implication.  

    The implication $3. \Rightarrow 4.$ is proven by first observing that $(\sigma_\mu \rktensor \sigma_\nu)\overline{\sigma}_\kappa^{\scriptscriptstyle(2)} = \pm \overline{\sigma}_\kappa^{\scriptscriptstyle(2)}(\sigma_\mu \rktensor \sigma_\nu) $ as the bipartite matrices are sums of Kronecker product of single parties ones (up to a minus sign), and that every pair of single-partite matrices either commute or anticommute. Knowing that, we set $\kappa = 0$ in condition 3. and square both sides, leading to $(\sigma_\mu \rktensor \sigma_\nu)\overline{\sigma}_0^{\scriptscriptstyle(2)} (\sigma_\mu \rktensor \sigma_\nu)\overline{\sigma}_0^{\scriptscriptstyle(2)} =  (-1)^{\delta_{2\lambda}}\overline{\sigma}_0^{\scriptscriptstyle(2)}$. Using the previous observation, we inject $(\sigma_\mu \rktensor \sigma_\nu)\overline{\sigma}_0^{\scriptscriptstyle(2)} = \pm \overline{\sigma}_0^{\scriptscriptstyle(2)}(\sigma_\mu \rktensor \sigma_\nu)$ into the equality, giving $\pm (\sigma_\mu^2 \rktensor \sigma_\nu^2)\overline{\sigma}_0^{\scriptscriptstyle(2)}  =  (-1)^{\delta_{2\lambda}}\overline{\sigma}_0^{\scriptscriptstyle(2)}$. Now remark that $(\sigma_\mu^2 \rktensor \sigma_\nu^2)$ gives the identity matrix $\sigma_0 \rktensor \sigma_0$ except in the case where $\mu=2$ and $\nu=0$ and vice-versa, in which case an overall minus sign is added. As these two cases precisely correspond to $\lambda = 2$, we can write $(\sigma_\mu^2 \rktensor \sigma_\nu^2)\overline{\sigma}_0^{\scriptscriptstyle(2)} = (-1)^{\delta_{2\lambda}}\overline{\sigma}_0^{\scriptscriptstyle(2)}$. Simplifying, this then gives $\pm (\sigma_0 \otimes \sigma_0)\overline{\sigma}_0^{\scriptscriptstyle(2)} = \overline{\sigma}_0^{\scriptscriptstyle(2)}$, where the $\pm$ depends on whether $\overline{\sigma}_0^{\scriptscriptstyle(2)}$ was commuting or anticommuting with $\sigma_\mu \rktensor \sigma_\nu$. In order for the equality to hold, therefore, we must conclude that condition 4. holds when $3.$ does.

    The next implication, $4. \Rightarrow 5.$ is again proven by squaring $(\sigma_\mu^2 \rktensor \sigma_\nu^2)\overline{\sigma}_0^{\scriptscriptstyle(2)}$. We get $(\sigma_\mu \rktensor \sigma_\nu)\overline{\sigma}_0^{\scriptscriptstyle(2)}(\sigma_\mu \rktensor \sigma_\nu)\overline{\sigma}_0^{\scriptscriptstyle(2)} = \overline{\sigma}_0^{\scriptscriptstyle(2)}$, and so using the commutation this is equivalent to $\overline{\sigma}_0^{\scriptscriptstyle(2)}(\sigma_\mu \rktensor \sigma_\nu)\overline{\sigma}_0^{\scriptscriptstyle(2)}\overline{\sigma}_0^{\scriptscriptstyle(2)}(\sigma_\mu \rktensor \sigma_\nu)\overline{\sigma}_0^{\scriptscriptstyle(2)} = \overline{\sigma}_0^{\scriptscriptstyle(2)}$. Thus, the square of $\overline{\sigma}_0^{\scriptscriptstyle(2)}(\sigma_\mu \rktensor \sigma_\nu)\overline{\sigma}_0^{\scriptscriptstyle(2)}$ is a non-zero matrix, leading to the conclusion that it is one itself.

    Finally, $5. \Rightarrow 1.$ follows by direct computation: expand $\overline{\sigma}_0^{\scriptscriptstyle(2)}(\sigma_\mu \rktensor \sigma_\nu)\overline{\sigma}_0^{\scriptscriptstyle(2)} = \frac{1}{4}(\sigma_\mu \rktensor \sigma_\nu - (\sigma_2\sigma_\mu) \rktensor (\sigma_2\sigma_\nu) - (\sigma_\mu\sigma_2) \rktensor (\sigma_\nu\sigma_2) + (\sigma_2\sigma_\mu\sigma_2) \rktensor (\sigma_2\sigma_\nu\sigma_2))$, and observe that $\sigma_2\sigma_\mu = \sigma_\mu\sigma_2$ if $\mu$ is even and else $\sigma_2\sigma_\mu = - \sigma_\mu\sigma_2$. Thus, when both indices are even or odd, the negative and positive terms can be regrouped, $\overline{\sigma}_0^{\scriptscriptstyle(2)}(\sigma_\mu \rktensor \sigma_\nu)\overline{\sigma}_0^{\scriptscriptstyle(2)} = \frac{1}{2}(\sigma_\mu \rktensor \sigma_\nu - (\sigma_2\sigma_\mu) \rktensor (\sigma_2\sigma_\nu))$, whereas when the parity of the indices is different it becomes $\overline{\sigma}_0^{\scriptscriptstyle(2)}(\sigma_\mu \rktensor \sigma_\nu)\overline{\sigma}_0^{\scriptscriptstyle(2)} = \frac{1}{4}(\sigma_\mu \rktensor \sigma_\nu - (\sigma_2\sigma_\mu) \rktensor (\sigma_2\sigma_\nu) + (\sigma_2\sigma_\mu) \rktensor (\sigma_2\sigma_\nu) - (\sigma_2^2\sigma_\mu) \rktensor (\sigma_2^2\sigma_\nu)) = 0$. Hence, if $\overline{\sigma}_0^{\scriptscriptstyle(2)}(\sigma_\mu \rktensor \sigma_\nu)\overline{\sigma}_0^{\scriptscriptstyle(2)} \neq 0$ then $\mu + \nu$ is even, showing that $5. \Rightarrow 1.$ which closes the proof.
\end{proof}

These algebraic properties are quite interesting, as they show that the $\{\sigma_\mu\}$ matrices send any product of the form $\sigma_\mu \rktensor \sigma_\nu$ to a stable subspace. 
Indeed, as the $\{\sigma_\mu\}$ matrices is an orthogonal set (w.r.t. the Frobenius inner product) of $\rr^{2\times 2}$ which actually spans it, the $\{\overline{\sigma}_\mu^{\scriptscriptstyle(2)}\}$ form an orthogonal set in $\rr^{4 \times 4}$, which spans a subspace isomorphic to $\rr^{2\times 2}$. One can form the projector onto this subspace by forming the following sum of dyads,
\begin{equation}\label{eq:sigma2_projo}
    \mathcal{P}(\cdot) := \sum_\kappa \frac{(-1)^{\delta_{2\kappa}}}{2} \: \overline{\sigma}_\kappa^{\scriptscriptstyle(2)}\: \TrX{}{(\overline{\sigma}_\kappa^{\scriptscriptstyle(2)})^T \: \cdot} \:,
\end{equation}
where the $\frac{(-1)^{\delta_{2\kappa}}}{2}$ coefficients are used for normalisation purposes. This is an orthogonal projection of $\rr^{4\times 4}$ into itself, as it can be swiftly checked. This projection will be the map useful to define the extended composition rule, as we shall see in the following.

In relation to that, the second item of \cref{lem:sigma2_properties} is interesting. It shows that applying the representation of the identity element on both sides of a matrix projects onto a subspace, which turns out to be exactly the subspace spanned by the $\{\sigma_\mu\}$ matrices.
\begin{prop}
    \begin{equation}\label{eq:sigma2_projection_equality}
        \sum_\kappa \frac{(-1)^{\delta_{2\kappa}}}{2} \: \overline{\sigma}_\kappa^{\scriptscriptstyle(2)}\: \TrX{}{(\overline{\sigma}_\kappa^{\scriptscriptstyle(2)})^T \: M} = \: \overline{\sigma}_0^{\scriptscriptstyle(2)}\:M\:\overline{\sigma}_0^{\scriptscriptstyle(2)}\:.
    \end{equation}
\end{prop}
\begin{proof}
    Because $\{\sigma_\mu \rktensor \sigma_\nu\}$ is a basis of $\rr^{4 \times 4}$, every $M$ can be decomposed as $M = \sum_{\mu,\nu} m_{\mu\nu}(\sigma_\mu \rktensor \sigma_\nu)$. Injecting into the right-hand side and using \cref{lem:sigma2_properties}, we see that it is non-zero only when $(\sigma_\mu \rktensor \sigma_\nu)$ has the form $(\sigma_1^b\sigma_2^a) \rktensor (\sigma_1^b\sigma_2^{a'})$ for some naturals $a,a'$ and $b$. Subtituting these non-zero elements like $m_{\mu\nu}(\sigma_\mu \rktensor \sigma_\nu) = m'(a,a',b) ((\sigma_1^b\sigma_2^a) \rktensor (\sigma_1^b\sigma_2^{a'})) $ where $m(a,a',b)$ is a real-valued function of $a,a'$ and $b$, the right-hand side then takes the form $
        \overline{\sigma}_0^{\scriptscriptstyle(2)}\:M\:\overline{\sigma}_0^{\scriptscriptstyle(2)} = \overline{\sigma}_0^{\scriptscriptstyle(2)}(\sum_{a,a',b} m'(a,a',b) ((\sigma_1^b\sigma_2^a) \rktensor (\sigma_1^b\sigma_2^{a'}))\overline{\sigma}_0^{\scriptscriptstyle(2)}$. 
    Finally, we use the third item of \cref{lem:sigma2_properties} to further rewrite it into
    \begin{equation}
        \overline{\sigma}_0^{\scriptscriptstyle(2)}\:M\:\overline{\sigma}_0^{\scriptscriptstyle(2)} = \pm \sum_{a,a',b} m'(a,a',b) \overline{\sigma}_{\lambda(a,a',b)}^{\scriptscriptstyle(2)} \:,
    \end{equation}
    where $\lambda(a,a',b)$ is a function of $a,a'$ and $b$ taking values in $\{0,1,2,3\}$ and the $\pm$ sign also depends on the values of $a,a'$ and $b$.
    
    We now do the same rewriting for the left-hand side. We first remove the transpose within the trace since every $\overline{\sigma}_\kappa^{\scriptscriptstyle(2)}$ is symmetric. Then we decompose $M$ in the same basis as before and conclude that the trace is only non-zero when the $(\sigma_\mu \rktensor \sigma_\nu)$'s have the form $(\sigma_1^b\sigma_2^a) \rktensor (\sigma_1^b\sigma_2^{a'})$ because $\TrX{}{\overline{\sigma}_\kappa^{\scriptscriptstyle(2)}(\sigma_\mu \rktensor \sigma_\nu)} = \TrX{}{\overline{\sigma}_\kappa^{\scriptscriptstyle(2)}\overline{\sigma}_0^{\scriptscriptstyle(2)}(\sigma_\mu \rktensor \sigma_\nu)\overline{\sigma}_0^{\scriptscriptstyle(2)}}$. 

    Now, using $\TrX{}{\overline{\sigma}_\kappa^{\scriptscriptstyle(2)}(\sigma_\mu \rktensor \sigma_\nu)} = \TrX{}{\overline{\sigma}_\kappa^{\scriptscriptstyle(2)}(\sigma_\mu \rktensor \sigma_\nu)\overline{\sigma}_0^{\scriptscriptstyle(2)}}$ and the third item of \cref{lem:sigma2_properties}, the left-hand side of \cref{eq:sigma2_projection_equality} is equivalent to
    \begin{equation}
        \sum_\kappa \frac{(-1)^{\delta_{2\kappa}}}{2} \: \overline{\sigma}_\kappa^{\scriptscriptstyle(2)}\: \TrX{}{(\overline{\sigma}_\kappa^{\scriptscriptstyle(2)})^T \: M} =\sum_\mu \frac{(-1)^{\delta_{2\mu}}}{2} \: \overline{\sigma}_\mu^{\scriptscriptstyle(2)}\: \TrX{}{\overline{\sigma}_\mu^{\scriptscriptstyle(2)} \: \left( \pm \sum_{a,a',b} m'(a,a',b) \overline{\sigma}_{\lambda(a,a',b)}^{\scriptscriptstyle(2)} \right)} \:,
    \end{equation}
    where $\lambda(a,a',b)$ is the same four-valued function as above. Finally, we can permute the two summations to obtain a sum over terms that all contain some trace of the form $\TrX{}{\overline{\sigma}_\mu^{\scriptscriptstyle(2)} \: \overline{\sigma}_{\lambda(a,a',b)}^{\scriptscriptstyle(2)} }$. Using the algebra of the $\overline{\sigma}_\mu^{\scriptscriptstyle(2)}$'s matrices to compute its value of $\TrX{}{\overline{\sigma}_\mu^{\scriptscriptstyle(2)} \: \overline{\sigma}_{\lambda(a,a',b)}^{\scriptscriptstyle(2)} } = 2(-1)^{\delta_{2\mu}}\delta_{\mu\lambda(a,a',b)}$, we reinject it to show that:
    \begin{equation}
        \sum_\kappa \frac{(-1)^{\delta_{2\kappa}}}{2} \: \overline{\sigma}_\kappa^{\scriptscriptstyle(2)}\: \TrX{}{(\overline{\sigma}_\kappa^{\scriptscriptstyle(2)})^T \: M} = \pm \sum_{a,a',b} m'(a,a',b) \overline{\sigma}_{\lambda(a,a',b)}^{\scriptscriptstyle(2)} \:.
    \end{equation}
    We therefore conclude that both sides of \cref{eq:sigma2_projection_equality} are equivalent.
\end{proof}

As the projection \eqref{eq:sigma2_projo} will be important to define the representation of the tensor product by applying it to the Kronecker product, $\rktensor\mapsto \mathcal{P}\circ \rktensor$, we shall study some of its properties. 
First of all, since it is a not a projection of maximal rank it cannot be an algebra homomorphism, meaning that for some $A,B \in \rr^{4 \times 4}$, 
\begin{equation}
    \mathcal{P}(M)\mathcal{P}(N) \neq \mathcal{P}(MN) \:,
\end{equation}
i.e., $\overline{\sigma}_0^{\scriptscriptstyle(2)}M\overline{\sigma}_0^{\scriptscriptstyle(2)}N\overline{\sigma}_0^{\scriptscriptstyle(2)} \neq \overline{\sigma}_0^{\scriptscriptstyle(2)}MN\overline{\sigma}_0^{\scriptscriptstyle(2)}$. 
This will have the consequence that the redefinition $\rktensor\mapsto \mathcal{P}\circ \rktensor$ voids the interchange law, as for some four matrices $A,B,C,D \in \rr^{2 \times 2}$ we have 
\begin{equation}
    \overline{\sigma}_0^{\scriptscriptstyle(2)}(A\rktensor B)\overline{\sigma}_0^{\scriptscriptstyle(2)}(C\rktensor D)\overline{\sigma}_0^{\scriptscriptstyle(2)} \neq \overline{\sigma}_0^{\scriptscriptstyle(2)}((AC) \rktensor (BD))\overline{\sigma}_0^{\scriptscriptstyle(2)} \:;
\end{equation}
take $A = D = \sigma_0$ and $B = C = \sigma_1$ for instance, it follows from \Cref{lem:sigma2_properties} that the left-hand side is zero while the right-hand side is not.

\subsection{Multipartite representation of $D_8$}

\begin{prop}\label{prop:sigman=D8}
    The set of matrices $\left\{ \overline{\sigma}_\mu^{\scriptscriptstyle(n)} \right\} \subset \rr^{2n \times 2n}$ defined by \Cref{eq:D8_wedge_n} generates (through matrix multiplication) a real, irreducible representation of the dihedral group $D_8$ under the identification $e \mapsto \overline{\sigma}_0^{\scriptscriptstyle(n)} $, $a \mapsto \overline{\sigma}_2^{\scriptscriptstyle(n)}$, and $b \mapsto \overline{\sigma}_1^{\scriptscriptstyle(n)}$.
\end{prop}
\begin{proof}
    First, we show the group relations. For $n=1,2$ we have already proven it holds. Suppose it holds for $n-1$. 
    The group relations can then be verified via direct computations and using the relations for $n-1$. For example, $\overline{\sigma}_0^{\scriptscriptstyle(n)}\overline{\sigma}_1^{\scriptscriptstyle(n)} = \frac{1}{4}\left( \overline{\sigma}_0^{\scriptscriptstyle(n-1)} \rktensor \sigma_1 + \overline{\sigma}_1^{\scriptscriptstyle(n-1)} \rktensor \sigma_0 - \overline{\sigma}_1^{\scriptscriptstyle(n-1)} \rktensor \sigma_1^2 - (\overline{\sigma}_1^{\scriptscriptstyle(n-1)})^2 \rktensor \sigma_1 \right)$ which is equal to $\overline{\sigma}_1^{\scriptscriptstyle(n)}$ provided the hypothesis i.e. $(\overline{\sigma}_1^{\scriptscriptstyle(n-1)})^2 = -\overline{\sigma}_0^{\scriptscriptstyle(n-1)}$. Generalizing, the multiplication of any two elements has the form 
    \begin{equation}
        \begin{gathered}
            \frac{1}{2}(\overline{\sigma}^{(n-1)}_i\rktensor \sigma_j + (-1)^{i+j}\: \overline{\sigma}^{(n-1)}_j\rktensor \sigma_i)\frac{1}{2}(\overline{\sigma}^{(n-1)}_k\rktensor \sigma_l + (-1)^{k+l}\: \overline{\sigma}^{(n-1)}_l\rktensor \sigma_k) =\\
            \frac{1}{4}(\overline{\sigma}^{(n-1)}_i\overline{\sigma}^{(n-1)}_k\rktensor \sigma_j\sigma_l + (-1)^{i+j} \:\overline{\sigma}^{(n-1)}_j\overline{\sigma}^{(n-1)}_k\rktensor \sigma_i\sigma_l + (-1)^{k+l}\: \overline{\sigma}^{(n-1)}_i\overline{\sigma}^{(n-1)}_l\rktensor \sigma_j\sigma_k\\ + (-1)^{i+ j+ k+l}\: \overline{\sigma}^{(n-1)}_j\overline{\sigma}^{(n-1)}_l\rktensor \sigma_i\sigma_k ) =\\
            \frac{1}{2}\big( \frac{1}{2}\big[\overline{\sigma}^{(n-1)}_i\overline{\sigma}^{(n-1)}_k\rktensor \sigma_j\sigma_l + (-1)^{i+ j+ k+l}\: \overline{\sigma}^{(n-1)}_j\overline{\sigma}^{(n-1)}_l\rktensor \sigma_i\sigma_k\big] +\\ \frac{1}{2}\big[(-1)^{i+j} \:\overline{\sigma}^{(n-1)}_j\overline{\sigma}^{(n-1)}_k\rktensor \sigma_i\sigma_l + (-1)^{k+l}\: \overline{\sigma}^{(n-1)}_i\overline{\sigma}^{(n-1)}_l\rktensor \sigma_j\sigma_k \big]\big)
        \end{gathered}\:.
    \end{equation}
    Remark the above grouping of terms: it indicates that the multiplication of two elements will follow the group relations provided both the $\sigma_i$'s and $\overline{\sigma}_i^{(n-1)}$'s do, which is true by hypothesis.
    For the irreducibility, the proof using characters for the $n=2$ case readily generalizes to the $n$ arbitrary case.
\end{proof}

\end{document}